\def\sp{\hskip -5pt}
\def\spa{\hskip -3pt}
\def\b1{{1\!\!1}}
\def\cB{{\ca B}}
\def\cF{{\ca F}}
\def\cH{{\ca H}}
\def\cL{{\ca L}}
\def\cM{{\ca M}}
\def\sK{{\mathsf K}}
\def\sH{{\mathsf H}}
\def\sP{{\mathsf P}}
\def\sS{{\mathsf S}}
\def\bC{{\mathbb C}}           %%%  complex numbers and so on
\def\bI{{\mathbb I}}
\def\bE{{\mathbb E}}
\def\bR{{\mathbb R}}
\def\bS{{\mathbb S}}
\def\gB{{\mathfrak B}}
\def\gO{{\mathfrak O}}
\def\beq{\begin{eqnarray}}
\def\eeq{\end{eqnarray}}
\newcommand{\ca}[1]{{\cal #1}}         %%  calligraphic
\newtheoremstyle{thm}
{12pt}% space above
{12pt}% space below
{\itshape}% body font
{}% h indent amount
{\itshape\bfseries}% theorem head font
{}% punctuation after theorem head
{1em}% space after theorem head
{}% theorem head spec (can be left empty, meaning `normal')
\theoremstyle{thm}
\newtheorem{theorem}{Theorem}
\newtheorem{lemma}[theorem]{Lemma}
\newtheorem{proposition}[theorem]{Proposition}
\newtheorem{definition}[theorem]{Definition}
\newtheorem{remark}[theorem]{Remarks}
\begin{document}

\hfill{\sl  Revised version,  October  2015} 
\par 
\bigskip 
\par 
\rm

%%%%%%%%%%%%%   Title %%%%%%%%%%%%%%%%%%%%%%%%%%

\par
\bigskip
\large
\noindent
{\bf Frame functions in finite-dimensional Quantum Mechanics and its Hamiltonian formulation on  complex projective spaces}
\bigskip
\par
\rm
\normalsize 

%%%%%%%%%%%%%%%%%%%%%%%%%%%%%%%%%%%%%%%%%%%%%
%%%%%%%%%%%% Authors %%%%%%%%%%%%%%%%%%%%%%%%

\noindent  {\bf Valter Moretti$^{a}$} and {\bf Davide Pastorello$^{b}$}\\
\par
\noindent Department of  Mathematics,  University of Trento, 
and  INFN-TIFPA, via Sommarive 14, I-38123 Povo (Trento), Italy.
\smallskip

\noindent $^a$ valter.moretti@unitn.it\qquad $^b$ d.pastorello@unitn.it\\
 \normalsize

\par

\rm\normalsize

%\noindent {\small Version of \today}

%\linespread{1.5}
\rm\normalsize

%%%%%%%%%%%% Date %%%%%%%%%%%%%%%%%%%%%%%%%%

\par
\bigskip

\noindent
\small
{\bf Abstract}.  This work concerns some issues about  the interplay of standard and geometric (Hamiltonian) approaches to finite-dimensional quantum mechanics, formulated in the projective space. Our analysis relies upon  the notion and the properties of so-called frame functions,  introduced by A.M. Gleason to prove  his celebrated theorem. In particular,  the problem of associating quantum states with positive Liouville densities is tackled from an axiomatic point of view,  proving a theorem classifying all possible correspondences. A similar result is established for \emph{classical-like} observables (i.e. real scalar functions on the projective space) representing quantum ones. These correspondences turn out to be  encoded in a one-parameter class and, in both cases, the classical-like objects representing quantum ones result to be frame functions.
The requirements of $U(n)$ covariance and (convex) linearity  play a central r\^ole in the proof of those theorems. A new characterization of classical-like observables  describing quantum observables is presented, together with a geometric description of the $C^*$-algebra structure
of the set of quantum observables in terms of classical-like ones.  
\normalsize

\section{Introduction}
It has been known from a long time \cite{Kibble,AS,BH01} 
that quantum mechanics can be formulated 
as a  proper  Hamiltonian theory in the complex projective space. This observation, starting from \cite{Kibble}  even  produced some interesting attempts  towards non-linear extensions of quantum mechanics (see the last section of  \cite{BH01}
for references).
Actually, if the Hilbert space of the quantum formulation  is infinite-dimensional several  technical problems arise, especially related with the  notion of infinite dimensional manifold,  beyond the obvious fact that the physically relevant observables are unbounded self-adjoint operators. 
In this paper,  we only focus on the $n$-dimensional case, $n<+\infty$, whose interest is due to quantum information theory in particular.
Quantum-Hamiltonian formulation relies upon a  few ideas. First of all the space of phases  is chosen to be the complex projective space $\sP(\sH_n)$ constructed out of  the Hilbert  space $\sH_n$ of the considered quantum theory. The  manifold $\sP(\sH_n)$ possesses  a natural  {\em almost K\"ahler  structure}. That is a structure made of: (1) a symplectic form $\omega$, accompanied by (2) a   Riemannian metric  $g$, and  (3) an intertwining  almost complex structure $j$, transforming $g$ to $\omega$ and {\em viceversa}. The symplectic form $\omega$ permits  to introduce the standard  Hamiltonian formalism. As a second step, quantum observables, i.e. self adjoint operators $A$, are associated to real-valued functions $f_A: \sP(\sH_n)\to \bR$, thus  representing those operators in terms of classical-like observables. 
Fixing a Hamiltonian operator $H$ and its classically associated function $\cH := f_H$ (the classical-like Hamiltonian),  a point $p \in \sP(\sH_n)$  can be seen to evolve in time, $p=p(t)$, either 
 in accordance with an equation directly arising from   the Schr\"odinger equation  in $\sH$ (via  the quotient operation used to pass from $\sH_n$ to $\sP(\sH_n)$)
 or  in accordance with the Hamiltonian equations. The above mentioned correspondence of quantum and classical-like observables is fixed in such a way  that the two notions of evolution coincide.
It is worth stressing that the existence of such a  possibility is far from obvious.
Finally,  the Riemannian metric $g$ enters the picture in a nice way: Every notion of unitary 
time evolution corresponds to a notion of evolution along a suitable $g$-Killing flow.
In general, it turns out that  the unitary quantum symmetries  are represented by canonical  Hamiltonian symmetries, preserving  both the underlying Lie algebra structure and  the Riemannian metrical structures  of $\sP(\sH_n)$ (see Theorem \ref{teoQC}).
There are many other interesting features of this quantum-classical correspondence.  However, the agreement seems not to be guaranteed when comparing the two notions of   expectation value. 
Within the classical picture, where a physical system is described in a symplectic  space $(\cM,\omega)$, the expectation value $\bE_\rho(f)$ of a classical observable $f:\cM\rightarrow \bR$ is defined in terms of an integral with respect to a  Liouville measure  $\rho \:d\mu$.  Here, $\mu$ is the Liouville volume form constructed out of $\omega$ (and, for this reason it is invariant under symplectic diffeomorphisms), and the classical state $\rho$ is a  positive Liouville density satisfying the Liouville equation. Thus $\bE_\rho(f) := \int f \rho \: d\mu$. 
  Adopting the quantum framework, instead, an expectation value $\langle A \rangle_\sigma$ is nothing but  the trace of the product of the self-adjoint operators $A$, representing the considered observable,  and the density matrix $\sigma$, representing the state of the system. Thus  $\langle A \rangle_\sigma := \mbox{tr}(\sigma A)$. It is not clear from  the  literature what  is the most general way to define a correspondence from quantum to classical-like states (i.e.  Liouville densities on $\sP(\sH_n)$) such that quantum expectation values can be expressed as classical-like expectation values, i.e.,  $\bE_{\rho_\sigma}(f_A)= \langle A\rangle_\sigma$,  preserving the requirement  $\rho_\sigma  \geq 0$. If the last requirement does not hold, $\rho_\sigma$ cannot   be interpreted as a classical-like probability density. This issue is  tackled in this paper among others. In particular we  
classify all possible correspondences from quantum to classical-like states on the one hand, and  from quantum to classical-like observable on the other hand. In both cases, the classical-like objects representing quantum ones result to be frame functions (see below). These correspondences  
 fulfil  a list of natural requirements, $U(n)$ covariance and (convex) linearity in particular. We find that actually, there is room enough in the formalism to obtain positive Liouville densities $\rho_\sigma$, provided one drops another assumption  on observables that, however, does not seem so physically cogent. (Theorems \ref{main2}, \ref{main1bis} and \ref{teobound}). 
 As a byproduct, we also establish  another characterization of the small  class of functions on $\sP(\sH_n)$ describing quantum observables (Proposition \ref{propchat}). Eventually, a description of the unital $C^*$-algebra structure of the set of observables will be discussed   in terms of the geometric features of $\sP(\sH_n)$  (Theorem \ref{teolast}). \\

\begin{remark}  {\em It should be clear that a kind of  ``classical limit''  of a quantum system described
on a  finite dimensional Hilbert space should have a classical carrier space consisting of a
finite number of points. Therefore, for instance, the analysis in section 2.3 where quantum
and classical-like expectation values are discussed has nothing to do with the comparison
between the classical and quantum description of a given physical system since  no classical
limit is involved there, even if a geometrical classical-like description is introduced. The
physical systems studied in the present paper are of quantum nature and not classical.
For example, they admit a non-commutative algebra of observables as discussed in section
4.4, though these observables are presented into a classical-like picture because they are
described in terms of functions. However, the relevant product of observables is noncommutative
as is appropriate for quantum systems.} 
\end{remark}

\noindent The most important mathematical  notion we exploit  in our analysis is that of {\em frame function}.
It was introduced as a technical tool by A. M. Gleason to prove his celebrated theorem \cite{Gleason}.

\begin{definition}\label{ff}
  If $\sH$ is a separable complex Hilbert space, let
$\bS(\sH)$ denote the unit sphere centered on the origin.
$f:\bS(\sH)\rightarrow \bC$ is a {\bf frame function} on $\sH$ if  $W_f\in \bC$ exists  with:
\begin{equation}\label{weight}
\sum_{\phi\in N} f(\phi)=W_f \quad \mbox{for every Hilbertian basis $N$ of $\sH$.}
\end{equation}
\end{definition} 

\noindent The key step in the proof of Gleason's theorem is proving that the class of bounded real frame functions coincides 
to the class of  quadratic forms $f(\phi) = \langle \phi|A \phi \rangle$ where $A$ is any self-adjoint  trace-class operator on $\sH$.  A frame function on an {\em infinite} dimensional Hilbert space is always 
bounded, whereas in the finite-dimensional case  ($\mbox{dim}(\sH) \geq 3$), 
there exist infinitely many unbounded frame functions \cite{Dvu}. In \cite{DV1}, adopting a pure mathematical viewpoint,  we proved a proposition concerning sufficient conditions to assure that a frame function on a  complex finite-dimensional Hilbert space is  representable as a quadratic form without assuming the boundedness requirement {\em a priori}. Observing  that $\bS(\sH)$
 admits a unique positive regular Borel measure $\nu_{\sH}$ invariant under the left-action of unitary operators in $\sH$ and such that $\nu_{\cH}(\bS(\cH))=1$, we established the following theoretical result we will exploit in this work.

\begin{theorem}\label{ZAZ}
If $f: \bS(\sH)\to \bC$ is  a  frame function on a finite-dimensional complex Hilbert space $\sH$, with $\mbox{\em dim}(\sH) \geq 3$ and
$f \in \cL^2(\bS(\sH) ,d\nu_\sH)$, then
there is a unique linear operator $A : \sH \to \sH$ such that:
$
f(\psi)=\langle \psi|A\psi \rangle \quad \forall \psi\in \bS(\sH),
$
where $\langle \,\, |\,\,\rangle$ is the inner product in $\sH$. $A$ turns out to be Hermitean if  (and ony if) $f$ is real.
\end{theorem}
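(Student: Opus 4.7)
The plan is to produce $A$ as an explicit moment of $f$ against rank-one projectors, reduce to showing that the $L^2$ residual $g:=f-\langle\cdot|A\cdot\rangle$ vanishes, and close by harmonic analysis on $\bS(\sH)$ under $U(n)$. Since the frame condition \nref{weight} and the conclusion are both linear in $f$, I would first split $f=f_1+if_2$ into real and imaginary parts; both are $L^2$ frame functions, so I may assume $f$ real and aim at a Hermitian $A$. The $U(n)$-invariant second-moment identity
\begin{equation*}
\int_{\bS(\sH)}\langle\psi|B|\psi\rangle\,|\psi\rangle\langle\psi|\,d\nu_\sH(\psi)\;=\;\frac{1}{n(n+1)}\bigl(B+\mathrm{tr}(B)\,I\bigr),\qquad B\in\mathrm{End}(\sH),
\end{equation*}
dictates the candidate
\begin{equation*}
A\;:=\;n(n+1)\int_{\bS(\sH)}f(\psi)\,|\psi\rangle\langle\psi|\,d\nu_\sH(\psi)\;-\;W_f\,I,
\end{equation*}
well-defined because $f\in L^2\subset L^1$, and which reproduces $A=B$ whenever $f(\psi)=\langle\psi|B|\psi\rangle$.

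Setting $g(\psi):=f(\psi)-\langle\psi|A|\psi\rangle$, linearity of \nref{weight} makes $g$ an $L^2$ frame function with weight $W_g=W_f-\mathrm{tr}(A)$. Combining $\int f\,d\nu_\sH=W_f/n$ (a consequence of $U(n)$-averaging \nref{weight}) with the explicit formula for $A$ gives $\mathrm{tr}(A)=W_f$, so $W_g=0$. A second application of the moment identity yields $\int g(\psi)\langle\psi|C|\psi\rangle\,d\nu_\sH=0$ for every $C\in\mathrm{End}(\sH)$, and $\int g\,d\nu_\sH=0$; hence $g$ is $L^2$-orthogonal to the subspace of constants and Hermitian quadratic forms. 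The theorem thus reduces to showing that any such $g$ vanishes.

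The main step uses the bounded $U(n)$-equivariant map
\begin{equation*}
T\colon L^2(\bS(\sH),d\nu_\sH)\to L^2(U(n),dU),\qquad (Tf)(U):=\sum_{i=1}^n f(Ue_i),
\end{equation*}
(boundedness follows because $U\mapsto f(Ue_i)$ has the same $L^2$-norm as $f$ for each $i$). The pointwise frame condition on $g$ implies $Tg\equiv 0$. Using the Peter-Weyl decomposition $L^2(\bS(\sH))=\bigoplus_{p,q\geq 0}H_{p,q}$, where $H_{p,q}$ is the irreducible $U(n)$-representation of restrictions to $\bS(\sH)$ of harmonic polynomials of bidegree $(p,q)$ in $(\psi,\bar\psi)$, and noting that the images $T(H_{p,q})$ lie in pairwise distinct isotypic components of $L^2(U(n))$, the identity $Tg=0$ decomposes as $T(g_{p,q})=0$ for each $(p,q)$. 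Every frame function is $U(1)$-gauge invariant -- applying \nref{weight} to $\{e^{i\theta_k}\phi_k\}$ forces $f(e^{i\theta}\psi)=f(\psi)$ -- so only components with $p=q$ can survive; the pieces in $H_{0,0}\oplus H_{1,1}$ vanish by the orthogonality already established. For $p\geq 2$ the linear functional $\ell(h):=\sum_{i=1}^n h(e_i)$ on $H_{p,p}$ is nontrivial (an explicit computation on the harmonic projection of $|\psi_1|^{2p}$, unique up to scalar as a $U(n-1)$-invariant zonal element and nonvanishing at $e_1$, gives a non-zero value of $\ell$), so by irreducibility $T|_{H_{p,p}}$ has trivial kernel and $g_{p,p}=0$.

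Uniqueness of $A$ is the polarization identity, and Hermiticity is equivalent to $\psi\mapsto\langle\psi|A|\psi\rangle$ being real, matching the real-$f$ reduction. The hardest step is the representation-theoretic one: verifying nontriviality of $\ell$ on every $H_{p,p}$ with $p\geq 2$ (for $p=2$ one computes $\ell(h_2)=(n-1)/(n+1)\neq 0$ directly), and, more subtly, passing from the $L^2$ identity $g=0$ to the pointwise statement of the theorem -- this last point is handled by observing that both sides of $f(\psi)=\langle\psi|A|\psi\rangle$ are frame functions of the same weight agreeing on a full-measure set, whence the rigidity of the frame condition under the $U(n)$-action (together with $U(1)$-gauge invariance) promotes a.e.\ equality to pointwise equality on $\bS(\sH)$.
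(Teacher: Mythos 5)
This paper does not actually prove Theorem \ref{ZAZ}: it imports it from the authors' earlier work \cite{DV1}, whose argument is precisely the harmonic-analysis route you take (decomposition of $\cL^2(\bS(\sH),d\nu_\sH)$ into the irreducible, pairwise inequivalent $U(n)$-modules $H_{p,q}$ of harmonic polynomials of bidegree $(p,q)$, phase invariance of frame functions forcing $p=q$, and elimination of the components with $p=q\geq 2$). So your strategy is the intended one, and most of your intermediate steps check out: the second-moment identity, the candidate $A$, the computation $\mbox{tr}(A)=W_f$, the orthogonality of the residual $g$ to $H_{0,0}\oplus H_{1,1}$, the reduction via Schur's lemma to the nonvanishing of $\ell$ on each $H_{p,p}$, and even your value $\ell=(n-1)/(n+1)$ for $p=2$.

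The genuine gap sits exactly at the step you flag as hardest, and it is not merely ``left to the reader'': the assertion that $\ell(h)=\sum_i h(e_i)$ is nontrivial on $H_{p,p}$ for every $p\geq 2$ is false as you state it, because nothing in your argument invokes $\dim\sH\geq 3$, while the theorem genuinely fails for $n=2$. The paper's own counterexample $F(p)=\tfrac12\bigl(1-(1-2\,\mbox{tr}(p_0p))^3\bigr)$ is a bounded frame function on $\sP(\sH_2)$ with a nonzero component in $H_{3,3}$, which by your own Schur argument forces $\ell\equiv 0$ on $H_{3,3}$ when $n=2$. Evaluating $\ell$ on the zonal element of $H_{p,p}$ (a shifted Jacobi polynomial $P_p^{(n-2,0)}$ in $|\psi_1|^2$) gives, in the standard normalization, $\binom{p+n-2}{p}+(-1)^p(n-1)$, which indeed vanishes at $(n,p)=(2,3)$; showing it is nonzero for all $p\geq 2$ once $n\geq 3$ is the one place the hypothesis $\dim\sH\geq 3$ enters, and it is the heart of the theorem — your proof asserts it only for $p=2$ and gives no mechanism that would distinguish $n\geq 3$ from $n=2$. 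A secondary, smaller gap: the promotion of the a.e.\ identity $f=\langle\cdot|A\,\cdot\rangle$ to a pointwise one is invoked via ``rigidity'' but needs an actual argument; one can show that a frame function vanishing $\nu_\sH$-a.e.\ vanishes everywhere by averaging the frame identity over completions of a fixed $\psi_0$, which yields $(n-2)\,h(\psi_0)=0$ — again using $n\geq 3$.
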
    
\noindent Though that result is theoretically interesting,  nothing was said about the physical meaning of $\nu_{\sH}$ and the condition 
$f \in \cL^2(\bS(\sH) ,d\nu_\sH)$. This paper is also devoted to clarify these issues.
As a matter of fact, we will see that functions representing either quantum observables or  quantum states must be frame functions for $n>2$. It happens as a consequence of 
the (convex) linearity and $U(n)$-covariance of the maps associating classical-like objects to quantum ones.  Moreover $\nu_\sH$ will be established to be, up a positive factor,  just the the Liouville volume form necessary to compare expectation values.\\
The \emph{inverse quantization map} we introduce in section 4.1 gives a general way to construct classical-like observables and Liouville densities representing quantum objects, so there is an analogy with the \emph{quantizer-dequantizer scheme} within the tomographic approach to QM, where states are described as generalized Wigner functions.\\
This  paper is organized as follows. Sect.\ref{sect1} is devoted to summarize some  known aspects of the quantum-classical correspondence,  introducing some new ingredients and pointing out the issue of  the comparison of expectation values in Sect.\ref{secexp}.  In Sect.\ref{Sm}, always sticking  to the finite dimensional case,  we will present  some features of frame functions defined on the projective space, proving some fundamental theorems useful in the rest of the paper (Theorems \ref{teoDV2}, \ref{newtheorem} and \ref{teoFA}). In Sect.\ref{sectOFF}  we  will exploit the constructed formalism to deeply focus on the interplay of quantum observables/states  and classical-like observables/states, facing the problem of positivity of Liouville densities,   also discussing the construction  of  a suitable $C^*$-algebra structure on the space of frame functions  corresponding to  the analogous  structure of the space of quantum observables. The last section is dedicated to conclusions.

\section{Geometric Hamiltonian description of finite-dimensional quantum systems }\label{sect1}
\subsection{Elementary Quantum Mechanics}
In the absence of superselection rules, a quantum system is described in a complex Hilbert space $\sH$, whose elements determine the pure states of the system in the sense we discuss shortly.  With  that framework,  the self-adjoint elements of the $C^*$-algebra 
$\gB(\sH)$ of bounded operators on $\sH$  describe the observables which only  take bounded sets of values. (Unbounded observables can be constructed out of the bounded ones through limit procedures in the strong operator topology.)
  Two physically relevant two-sided $^*$-ideals $\gB_1(\sH)\subset \gB_2(\sH)$  of compact operators exist in $\gB(\sH)$. $\gB_2(\sH)$ is the space  of {\bf Hilbert Schmidt operators},
 $\gB_1(\sH)$ is the space of  {\bf trace-class  operators}.
Three  distances exist consequently.  (1)  $d_1(\cdot,\cdot)$ associated with  the norm 
$||A||_1 := \mbox{tr}(|A|)$ in the   Banach space
$\gB_1(\sH)$. (2)   $d_2(\cdot,\cdot)$ associated with the norm $||\cdot||_2$ induced by the scalar product on the Hilbert  space $\gB_2(\sH)$,
$(A|B)_2 = \mbox{tr}(A^*B)$  for $A,B \in \gB_2(\sH)$.
(3)  $d(\cdot,\cdot)$ associated with  the  standard $C^*$-algebra norm $||\cdot||$ on $\gB(\sH)$. An elementary computation proves that:

\begin{proposition}\label{teo2a}
If $\sH$ is a complex Hilbert space,  $||A|| \leq ||A||_2\leq ||A||_1$ when $A \in \gB_1(\sH)$.
\end{proposition}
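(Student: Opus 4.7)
The plan is to reduce all three norms to explicit expressions in the singular values of $A$ and then compare the resulting numerical sequences termwise. Since $A \in \gB_1(\sH)$, the positive operator $|A| = (A^*A)^{1/2}$ is also trace-class, hence compact. By the spectral theorem for compact self-adjoint operators, there exist an orthonormal system $\{e_n\} \subset \sH$ and a non-increasing sequence $\{s_n\}$ of non-negative reals (the singular values of $A$, counted with multiplicity) such that $|A| = \sum_n s_n \langle e_n, \cdot\rangle e_n$, with $\sum_n s_n < +\infty$ because $A$ is trace-class.

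Given this, I would read off the three norms as follows. Since $|A|$ is self-adjoint, $\| |A| \| = \sup_n s_n = s_1$; and since $\| Ax \|^2 = \langle A^*Ax, x\rangle = \| |A| x\|^2$ for every $x \in \sH$, one has $\|A\| = \| |A| \| = s_1$. Next, completing $\{e_n\}$ to a Hilbertian basis of $\sH$, the definition of the Hilbert--Schmidt norm yields $\|A\|_2^2 = \mbox{tr}(A^*A) = \mbox{tr}(|A|^2) = \sum_n s_n^2$, and similarly $\|A\|_1 = \mbox{tr}(|A|) = \sum_n s_n$, so that the three norms coincide with the $\ell^\infty$, $\ell^2$ and $\ell^1$ norms of the sequence $(s_n)$ respectively.

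At this point both inequalities become elementary facts about non-negative sequences. For the first, $\|A\|^2 = s_1^2 \leq \sum_n s_n^2 = \|A\|_2^2$ because all $s_n \geq 0$. For the second, expanding gives
\[
\|A\|_1^2 = \Bigl(\sum_n s_n\Bigr)^2 = \sum_n s_n^2 + 2\sum_{m<n} s_m s_n \;\geq\; \sum_n s_n^2 = \|A\|_2^2,
\]
and taking square roots yields $\|A\|_2 \leq \|A\|_1$. There is no substantive obstacle in this argument: the whole content is the translation, via the spectral theorem for $|A|$, of the comparison of norms on operator ideals into the trivial chain $\|\cdot\|_{\ell^\infty} \leq \|\cdot\|_{\ell^2} \leq \|\cdot\|_{\ell^1}$ for non-negative sequences. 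The only point requiring a line of verification is the identity $\|A\| = \| |A| \|$, which is immediate from the polar decomposition as noted above.
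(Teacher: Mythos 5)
Your proof is correct, and since the paper offers no written proof beyond calling this ``an elementary computation,'' your singular-value argument is exactly the standard computation intended: diagonalize $|A|$, identify the three norms with the $\ell^\infty$, $\ell^2$ and $\ell^1$ norms of the singular-value sequence, and compare termwise. All the individual steps ($\|A\|=\||A|\|$ via the polar identity, the trace computations, and the expansion of $(\sum_n s_n)^2$) check out.
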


\begin{definition} {\em If $\sH$ is a separable\footnote{Actually, separability is not strictly necessary from a mathematical viewpoint at least,  though this fact  would deserve a physical discussion.} complex Hilbert space with scalar product $\langle\cdot|\cdot \rangle$, $\sS(\sH)$ indicates the convex set of (quantum) {\bf  states} on $\sH$. They are  the  operators
$\sigma \in \gB_1(\sH)$ with  $\mbox{tr}(\sigma)=1$ which are positive (i.e $\langle \psi| \sigma \psi \rangle \geq 0$ if $\psi \in \sH$). 
{\bf Pure states} are the extremal points of $\sS(\sH)$, their set  is denoted by $\sS_p(\sH)$.  $\sigma \in \sS(\sH)$ is said {\bf mixed} when
$\sigma \not \in \sS_p(\sH)$.}
\end{definition}

\noindent    Pure states are related with the complex projective space on $\sH$. The (complex) {\bf projective space}   $\sP(\sH)$ over $\sH$ is the quotient $\sH/\spa\sim$ -- deprived  of $[0]$ -- where, for $\psi,\psi'\in \sH$,  $\psi \sim \psi'$ iff $\psi=\alpha\psi'$,  $\alpha \in \mathbb C\setminus\{0\}$.
$\bS(\sH)$ henceforth denotes the unit sphere in $\sH$ centred on the null vector.
With the topology induced by $\sH$, $\bS(\sH)$ is a connected Hausdorff  space, compact  only if $\mbox{dim}(\sH)< +\infty$. The  projection: $\pi : \bS(\sH) \ni \psi \mapsto [\psi] \in \sP(\sH)$
 is surjective, continuous and open  when equipping
$\sP(\sH)$ with the quotient topology.  $\sP(\sH)$ is  connected and  Hausdorff.
States enjoy the following properties, the proofs being well known  \cite{Mo}. From now on, $\mbox{sp}(A)$ denotes the spectrum of an operator $A$, in particular  $\mbox{sp}_p(A)$ and $\mbox{sp}_c(A)$ respectively indicate
the point spectrum and the continuous spectrum.
\begin{proposition}\label{teozero} If $\sH$ is a separable complex Hilbert space, the following facts hold.\\
{\bf (1)}   $\sS(\sH)$ and $\sS_p(\sH)$ are closed in  $\gB_1(\sH)$ and are  complete $d_1$-metric spaces.\\
{\bf (2)} If $\sigma\in \sS(\sH)$, then:
$\sigma^2 \leq \sigma  \quad \mbox{and}\quad \mbox{\em tr}(\sigma^2) \leq 1$, and the following facts are equivalent: 
(i)  $\sigma\in \sS_p(\sH)$; (ii) $\sigma^2= \sigma$; (iii) $\mbox{\em tr}(\sigma^2) =1$;
(iv)  $||\sigma||=1$;  (v) $||\sigma||_2=1$, (vi)
$\sigma = \psi\langle \psi| \cdot \rangle $ for some $\psi\in \bS(\sH)$.\\
{\bf (3)}  The homeomorphism exists  $\sP(\sH) \ni p \mapsto  \psi\langle \psi| \cdot \rangle \in \sS_p(\sH)$
for  $\psi\in \bS(\sH)$ with  $[\psi]=p$,  the topology assumed on $\sS_p(\sH)$ being   equivalently induced by $||\:||$ or $||\:||_1$ or $||\:||_2$, since 
$d_1(p,p')= 2d(p,p') = \sqrt{2}d_2(p,p')$ if $p,p' \in \sS_p(\sH)$.\\
{\bf (4)} If $\sigma \in \sS(\sH)$, then   $\mbox{\em sp}(\sigma) \setminus\{0\}\subset  \mbox{\em sp}_p(\sigma)$ is finite or countable with $0$ as uniquely possible  limit point.
If $q \in \mbox{\em sp}_p(\sigma)$ then $0 \leq q \leq 1$; the associated eigenspace $\sH_q$ has finite dimension if $q\neq 0$ and the sum of all eigenvalues, taking the geometric multiplicities into account, equals $1$.
If $K$ is a Hilbert basis of $Ker(\sigma)$ and  $\{\psi^{(q)}_i\}_{i=1,\ldots, \mbox{dim}(\sH_q)}$ 
a Hilbert basis of $\sH_q$, then 
  $K \cup\{\psi^{(q)}_i\:|\: i=1,\ldots, \mbox{\em dim}(\sH_q), q \in \mbox{sp}_p(\sigma)\}$ is a Hilbert basis of $\sH$.\\
{\bf (5)} Every  $\sigma \in \sS(\sH)$ is a finite or countable  convex combination of pure states, referring to  the  operator strong topology for infinite combinations. The spectral decomposition of $\sigma$ is an example of such convex decomposition.
\end{proposition}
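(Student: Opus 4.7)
The plan is to dispatch the five items in the stated order, reducing everything to the spectral theory of compact self-adjoint operators together with the inequalities of Proposition \ref{teo2a}. For (1), realize $\sS(\sH)$ as the intersection of the affine hyperplane $\{A\in\gB_1(\sH):\mbox{tr}(A)=1\}$ with the positive cone $\{A\in \gB_1(\sH):A\geq 0\}$; both are $d_1$-closed, the former because $|\mbox{tr}(A)|\leq\|A\|_1$, the latter because $A\mapsto \langle\psi|A\psi\rangle$ is $d_1$-continuous for each fixed $\psi\in\sH$. Completeness is then inherited from the Banach space $\gB_1(\sH)$. For $\sS_p(\sH)$, use the equivalent condition $\mbox{tr}(\sigma^2)=1$ from part (2) together with the $d_1$-continuity of $\sigma\mapsto \mbox{tr}(\sigma^2)$ on the bounded set $\sS(\sH)$.

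For (2), I would first observe that positivity and $\mbox{tr}(\sigma)=1$ force $\mbox{sp}(\sigma)\subseteq[0,1]$, so the functional calculus applied to $t\mapsto t-t^2\geq 0$ on $[0,1]$ yields $\sigma^2\leq\sigma$ and hence $\mbox{tr}(\sigma^2)\leq 1$. The equivalences then come in the order (vi)$\Rightarrow$(ii) via the computation $|\psi\rangle\langle\psi|\psi\rangle\langle\psi|=|\psi\rangle\langle\psi|$; (ii)$\Rightarrow$(iii) immediately; (iii)$\Rightarrow$(ii) using that $\sigma-\sigma^2\geq 0$ together with the fact that a positive operator of vanishing trace is zero; (ii)$\Rightarrow$(vi) observing that $\sigma$ is then an orthogonal projection of trace one, hence of rank one. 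The norm characterizations (iv),(v) reduce to the elementary fact that for positive eigenvalues $q_i$ summing to one, either $\max_i q_i=1$ or $\sum_i q_i^2=1$ forces a single $q_i=1$. Finally (vi)$\Leftrightarrow$(i) follows from the spectral decomposition: a non-trivial eigenvalue splitting exhibits $\sigma$ as a proper convex combination, and conversely rank-one positive trace-one operators are extremal in $\sS(\sH)$.

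Parts (3), (4), (5) are then straightforward. For (3), the key computation is that $|\psi\rangle\langle\psi|-|\psi'\rangle\langle\psi'|$ acts non-trivially only on the two-plane spanned by $\psi,\psi'$, where a direct diagonalization yields the non-zero eigenvalues $\pm\sqrt{1-|\langle\psi|\psi'\rangle|^2}$; this simultaneously produces the relations $d_1=2d=\sqrt 2\, d_2$ on $\sS_p(\sH)$ and the bicontinuity of the bijection $[\psi]\mapsto |\psi\rangle\langle\psi|$, whose injectivity is clear since $|\psi\rangle\langle\psi|=|\psi'\rangle\langle\psi'|$ with $\|\psi\|=\|\psi'\|=1$ forces $\psi=e^{i\alpha}\psi'$. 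Part (4) is the classical spectral theorem for compact self-adjoint operators (since trace class implies compact), combined with $\mbox{sp}(\sigma)\subseteq[0,1]$ from (2) and the identity $\mbox{tr}(\sigma)=\sum_q q\,\mbox{dim}(\sH_q)=1$; the union of an orthonormal basis of $\ker(\sigma)$ with orthonormal bases of the non-zero eigenspaces is orthonormal by orthogonality of distinct eigenspaces and total because any vector in the orthogonal complement of all non-zero eigenspaces is annihilated by $\sigma$. Part (5) is then just a reading of (4): grouping eigenvectors gives the convex decomposition into pure states with strong-operator convergence.

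The only step that requires some care is the $d_1$-continuity of $\sigma\mapsto \mbox{tr}(\sigma^2)$ in part (1), used to close $\sS_p(\sH)$: it follows from the polarization-type identity $\mbox{tr}(\sigma^2)-\mbox{tr}(\sigma'^2)=\mbox{tr}\bigl((\sigma-\sigma')(\sigma+\sigma')\bigr)$, the trace estimate $|\mbox{tr}(AB)|\leq \|A\|_1\|B\|$, and the $d_1$-boundedness of $\sS(\sH)$ together with $\|\cdot\|\leq\|\cdot\|_1$ from Proposition \ref{teo2a}. Once this point is settled, the remaining content is a direct unpacking of the spectral theorem.
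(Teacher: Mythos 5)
Your proof is correct; the paper itself gives no argument for this proposition, deferring to the reference [Mo13], and your route — closedness of the trace hyperplane and the positive cone, functional calculus on $\mbox{sp}(\sigma)\subseteq[0,1]$ for $\sigma^2\leq\sigma$, extremality of rank-one projectors, the two-plane diagonalization of $P_\psi-P_{\psi'}$ with eigenvalues $\pm\sqrt{1-|\langle\psi|\psi'\rangle|^2}$, and the spectral theorem for compact self-adjoint operators — is exactly the standard one. The only step stated a bit quickly is the continuity of the inverse map $P_\psi\mapsto[\psi]$ in (3), which needs the small extra observation that a phase can be chosen so that $\|e^{i\alpha_n}\psi_n-\psi\|^2=2-2|\langle\psi_n|\psi\rangle|\to 0$ whenever $\mbox{tr}(P_{\psi_n}P_\psi)\to 1$, after which continuity of $\pi$ finishes the argument.
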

\noindent From the physical side, for  a (bounded) observable $A = A^* \in \gB(\sH)$,  its spectrum $\mbox{sp}(A)\subset \bR$ is the set of its possible measured {\bf values}.
Moreover, 
 if  $\sigma \in \sS(\sH)$,  $E\subset  \mbox{sp}(A)$ is a Borel set  and $P_E$  is an orthogonal projector of the spectral measure of $A$, then  $\mbox{tr}(\sigma P_E)$ is the {\bf probability} of finding the  outcome of the measurement of $A$ in  $E$ when the state is $\sigma$. Correspondingly, $\mbox{tr}(\sigma A)$ is the {\bf expectation value} of $A$
in the state $\sigma$. Two observables $A,B$ are {\bf incompatible}, i.e., they cannot be measured simultaneously, if and only if $[A,B]\neq 0$. Kadison-Wigner  {\bf symmetries} are described by the action  of 
unitary or anti-unitary operators $U$ on states $\sigma$
like this:  $\sigma \mapsto  U \sigma U^*$. An one-parameter strongly continuous unitary groups $\{U_s\}_{s\in \bR}$ describes {\bf continuous symmetries}. Its  unique self-adjoint operator  $G$ (generally unbounded with domain $D(G)$) in the sense of Stone, i.e. $U_s = e^{-isG}$ for all $s\in \bR$, acquires a particular importance. When  $\{e^{-itH}\}_{t\in \bR}$ describes  time evolution of the system,  $H$ is the {\bf Hamiltonian} observable of the system.  If $\sigma(t) := e^{-i tH} \sigma e^{itH}$, the  {\bf Schr\"odinger equation} holds\footnote{Properly speaking,  Schr\"odinger equation arises from (\ref{Se}) when referring to normalized vectors $\psi(t)$ describing pure states $\psi(t)\langle \psi(t)| \cdot \rangle$, making  the simplest  choice of the arbitrary phase affecting $\psi(t)$.}: \beq \frac{d\sigma(t)}{dt} = -i[H, \sigma(t)]\:. \label{Se} \eeq
 The derivative refers to the  weak operator topology  
and the commutator is interpreted accordingly in $D(H)\times D(H)$. If, conversely, one  ascribes  time evolution to observables, keeping fixed the states, observables evolve as $A(t):= e^{itH} A e^{-itH}$ ({\bf Heisenberg picture}).

\subsection{Finite dimensional case: the geometric Hamiltonian picture}\label{secH} 
\begin{remark}   Throughout this paper
$\sH_n$ denotes a complex Hilbert space with finite dimension $n>1$ and $U(n)$
denotes the Lie group of unitary operators on $\sH_n$.
\end{remark}
\noindent When the dimension of the Hilbert  space $\sH_n$ is finite,  $\bS(\sH_n)$ and $\sP(\sH_n)$ become compact, second countable,  topological spaces. However the most interesting differences with respect to the infinite dimensional case concern the space of operators.  

\begin{proposition}\label{teoquat}
The following facts hold in $\sH_n$. \\
{\bf (1)}  The topologies of  $||\cdot||$, $||\cdot||_1$ and $ ||\cdot||_2$  on $\gB(\sH_n) =  \gB_2(\sH_n) =\gB_1(\sH_n)$ coincide. \\
{\bf (2)} $\sS(\sH_n)$ and  $\sS_p(\sH_n)$  are compact and,  if $\sigma \in \sS(\sH_n)$, the following inequalities hold:  $n^{-1/2} \leq ||\sigma||_2\leq 1$ and  $n^{-1}\leq ||\sigma|| \leq 1$. In both cases,
the least values of the norms are attained at   $\sigma= n^{-1}I$.\\
{\bf (3)}  Equip the set  $T$ of  operators $A = A^*\in \gB(\sH_n)$ such that  $\mbox{\em tr}(A)=1$  with the topology induced by $\gB(\sH_n)$.
As a subset of  the topological space $T$, $\sS(\sH_n)$ fulfils:
$$\partial \sS(\sH_n) = \{\sigma \in \sS(\sH_n) \:|\: \mbox{\em dim}(\mbox{\em Ran}(\sigma)) < n\}\:, \: \: \:
\mbox{\em  Int}(\sS(\sH_n)) = \{\sigma \in \sS(\sH_n) \:|\: \mbox{\em dim}(\mbox{\em Ran}(\sigma)) = n\}\:.$$
In particular:
 $\sS_p(\sH_n) = \{\sigma \in \sS(\sH_n) \:|\: \mbox{\em dim}(\mbox{\em Ran}(\sigma)) =1 \} \subset \partial \sS(\sH_n)$,
and   $\sS_p(\sH_n) = \partial \sS(\sH_n)$
if and only if $n=2$.
\end{proposition}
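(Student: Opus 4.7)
The plan is to dispatch the three items by finite-dimensional linear algebra, working throughout with singular value or spectral decompositions.

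For (1), I would express each of the three norms in terms of the singular values $s_1,\ldots,s_n\ge 0$ of a generic $A\in\gB(\sH_n)$: $\|A\|=\max_i s_i$, $\|A\|_2 = (\sum_i s_i^2)^{1/2}$, $\|A\|_1=\sum_i s_i$. These are the $\ell^\infty$, $\ell^2$, $\ell^1$ norms of the vector $(s_1,\ldots,s_n)\in\bR^n$; since $n$ is finite they are pairwise equivalent (e.g.\ $\|A\|_1\le n\|A\|$ and $\|A\|_2\le \sqrt n\,\|A\|$, against the reverse inequalities already given in Proposition~\ref{teo2a}), so they induce the same topology on $\gB(\sH_n)$.

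For (2), compactness rests on two observations: by Proposition~\ref{teozero}(1), $\sS(\sH_n)$ and $\sS_p(\sH_n)$ are closed in $\gB_1(\sH_n)$, and part~(1) ensures they are closed also in $\gB(\sH_n)$; on the other hand every $\sigma\in\sS(\sH_n)$ has $\|\sigma\|_1=\mbox{tr}(\sigma)=1$, so $\sS(\sH_n)$ is norm-bounded, hence compact by Heine--Borel in the finite-dimensional normed space $\gB(\sH_n)$, and $\sS_p(\sH_n)$ is compact as a closed subset of it. The norm inequalities are obtained by spectrally decomposing $\sigma=\sum_{i=1}^n p_i\,|\psi_i\rangle\langle\psi_i|$ (with repeated eigenvalues taken into account), $p_i\ge 0$, $\sum_i p_i=1$. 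Then $\|\sigma\|_2^2=\sum_i p_i^2\le (\sum_i p_i)^2=1$ and $\|\sigma\|=\max_i p_i\le 1$. The lower bounds come, respectively, from Cauchy--Schwarz, $1=(\sum_i p_i)^2\le n\sum_i p_i^2$, and from the pigeonhole estimate $\max_i p_i\ge n^{-1}$; equality in either forces $p_1=\cdots=p_n=n^{-1}$, i.e.\ $\sigma=n^{-1}I$.

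For (3), I would view $T$ as the affine hyperplane in the finite-dimensional real space of self-adjoint operators cut out by $\mbox{tr}(A)=1$, whose topology is the usual Euclidean one by part~(1). If $\mbox{Ran}(\sigma)=\sH_n$, all eigenvalues of $\sigma$ are strictly positive, and continuity of the spectrum in the operator norm yields a neighbourhood of $\sigma$ in $T$ consisting entirely of positive operators, so $\sigma\in\mbox{Int}(\sS(\sH_n))$. Conversely, if $\dim\mbox{Ran}(\sigma)<n$, pick a unit $\phi\in\mbox{Ker}(\sigma)$ and a unit $\psi\perp\phi$; the Hermitean operator $B:=|\phi\rangle\langle\phi|-|\psi\rangle\langle\psi|$ is traceless, so $\sigma-tB\in T$ for every $t\in\bR$, but $\langle\phi|(\sigma-tB)\phi\rangle=-t<0$ for $t>0$ places $\sigma-tB$ outside $\sS(\sH_n)$, giving $\sigma\in\partial\sS(\sH_n)$. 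The identification of $\sS_p(\sH_n)$ with the rank-one states of $\sS(\sH_n)$ is the content of Proposition~\ref{teozero}(2)(vi). Finally, $\sS_p(\sH_n)=\partial\sS(\sH_n)\Leftrightarrow n=2$ reduces to counting admissible ranks: for $n=2$ the only rank compatible with trace $1$ and rank $<2$ is rank~$1$, while for $n\ge 3$ any rank-$k$ density matrix with $2\le k\le n-1$ (e.g.\ $\sigma=\frac{1}{k}\sum_{i=1}^k|\psi_i\rangle\langle\psi_i|$ on an orthonormal system) lies in $\partial\sS(\sH_n)\setminus\sS_p(\sH_n)$.

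The main obstacle, modest as it is, lies in (3): one must ensure that the direction $B$ chosen to exit $\sS(\sH_n)$ is simultaneously traceless (to remain in $T$) and tests negativity on the kernel of $\sigma$, which is the reason for taking the rank-two traceless operator above rather than the simpler $|\phi\rangle\langle\phi|$.
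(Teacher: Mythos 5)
Your proof is correct and follows essentially the same route as the paper's: norm equivalence by finite dimensionality, compactness via closedness plus boundedness in a finite-dimensional space, the spectral decomposition with the constraints $p_i\ge 0$, $\sum_i p_i=1$ for the norm bounds, and a traceless perturbation supported on $\mbox{Ker}(\sigma)$ to exhibit boundary points (the paper uses $(1+\tfrac1k)\sigma-\tfrac1k\,\phi\langle\phi|\cdot\rangle$ where you use $\sigma-t(|\phi\rangle\langle\phi|-|\psi\rangle\langle\psi|)$, which is the same idea). The only cosmetic difference is in the interior characterization, where you invoke continuity of the spectrum while the paper proves the needed stability by a direct estimate on the quadratic form $\langle\psi|\sigma'\psi\rangle$ for $\|\sigma'-\sigma\|<m/2$; both are sound.
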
 

\noindent Let us  identify $\sH_n$ with $\bC^n$ by choosing a Hilbert basis.
With this identification $\sH_n$ acquires the structure of a  {\em real $2n$-dimensional smooth manifold}. This structure  does not depend on the choice of the basis as one immediately proves.  That structure induces analogous structures on the topological  spaces $\bS(\sH_n)$ and  $\sP(\sH_n)$.
From now on we identify $\sP(\sH_n)$ to $\sS_p(\sH_n)$ in view of (3) in 
proposition \ref{teozero}. As a consequence, we can take advantage of  the transitive action of the compact Lie group $U(n)$ on $\sS_p(\sH_n) \equiv \sP(\sH_n)$:
\beq
U(n) \times \sP(\sH_n) \ni (U, p) \mapsto  \Phi_U(p) :=   U p U^{-1} \in  \sP(\sH_n)\:.\label{ta}
\eeq
 A sketch of proof of the following proposition is in the appendix. 

\begin{proposition} \label{profgeo} The following facts hold in the  real smooth manifold $\sH_n$.\\
{\bf (a)}  $\bS(\sH_n)$ is a  real $(2n-1)$-dimensional embedded submanifold of $\sH_n$.\\
{\bf (b)}  $\sP(\sH_n)$ can  be equipped with a real $(2n-2)$-dimensional smooth  manifold structure in a way such that both  the continuous  projection $\pi : \bS(\sH) \ni \psi \mapsto [\psi] \in \sP(\sH)$ is  a smooth submersion and the transitive action (\ref{ta}) is smooth.
\end{proposition}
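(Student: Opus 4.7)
For part (a), the plan is to realize $\bS(\sH_n)$ as a regular level set. After identifying $\sH_n$ with $\bR^{2n}$ via a Hilbert basis, I introduce the smooth function $F:\sH_n\to\bR$ defined by $F(\psi):=\langle\psi|\psi\rangle-1$. A direct computation of its real differential yields $dF_\psi(\xi)=2\,\text{Re}\,\langle\psi|\xi\rangle$, which is a nonzero $\bR$-linear functional for every $\psi\neq 0$ (take $\xi=\psi$ to see this). Hence $0$ is a regular value of $F$, and the Regular Level Set Theorem gives that $\bS(\sH_n)=F^{-1}(0)$ is an embedded submanifold of codimension one, i.e.\ of real dimension $2n-1$. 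A short check that the smooth structure so obtained does not depend on the chosen basis closes the point.

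For the first half of (b), the plan is to identify $\sP(\sH_n)$ with the orbit space $\bS(\sH_n)/U(1)$, where $U(1)$ acts on $\bS(\sH_n)$ by phase multiplication $(e^{i\theta},\psi)\mapsto e^{i\theta}\psi$. Two unit vectors determine the same complex line if and only if they differ by a phase, so this identification is a canonical bijection compatible with the quotient topologies on the two sides. The $U(1)$-action is smooth (restriction of scalar multiplication), free (no nonzero vector is fixed by a nontrivial phase), and proper (since $U(1)$ is compact). The Quotient Manifold Theorem then supplies a unique smooth manifold structure on $\sP(\sH_n)$ of real dimension $(2n-1)-1=2n-2$ with respect to which the canonical projection $\pi$ is a smooth surjective submersion.

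For the smoothness of the $U(n)$-action $\Phi$, the key observation is that its canonical lift $\mu:U(n)\times\bS(\sH_n)\to\bS(\sH_n)$, $(U,\psi)\mapsto U\psi$, is the restriction of a continuous bilinear map between finite-dimensional normed spaces, hence smooth. Well-definedness of the descended map $\Phi$ follows from $\bC$-linearity of $U$. One then has the factorization $\Phi\circ(\mathrm{id}_{U(n)}\times\pi)=\pi\circ\mu$. Since $\mathrm{id}_{U(n)}\times\pi$ is a smooth surjective submersion (product of a diffeomorphism with a submersion) and $\pi\circ\mu$ is smooth, the standard characterization of smooth maps out of the base of a surjective submersion forces $\Phi$ to be smooth.

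The main obstacle is purely organisational: one must check carefully that the topological quotient structure of $\sP(\sH_n)$ introduced earlier coincides with the smooth quotient structure produced by the Quotient Manifold Theorem, and that the basis-dependent identifications used to give $\sH_n$ the structure of $\bR^{2n}$ do not affect any of the resulting smooth structures. Beyond this bookkeeping, the proof is a direct assembly of standard results in differential topology.
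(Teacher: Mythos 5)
Your proof is correct, but for part (b) it takes a genuinely different route from the paper. For (a) the two arguments are essentially the same: you invoke the Regular Level Set Theorem for $F(\psi)=\langle\psi|\psi\rangle-1$, while the paper runs the implicit function theorem by hand, solving locally for one real coordinate in terms of the others; these are the same computation packaged differently. For (b), the paper builds the smooth structure on $\sP(\sH_n)$ concretely via the classical affine charts of complex projective space (the ratios $z_j/z_h$ near a point where $z_h\neq 0$, split into real and imaginary parts), and then asserts that the submersion property of $\pi$ and the smoothness of the action (\ref{ta}) follow by inspection. You instead realize $\sP(\sH_n)$ as the orbit space $\bS(\sH_n)/U(1)$ of a free, proper, smooth action of a compact group and appeal to the Quotient Manifold Theorem, which hands you the $(2n-2)$-dimensional structure, the submersion property of $\pi$, and the uniqueness of the smooth structure all at once; your factorization $\Phi\circ(\mathrm{id}_{U(n)}\times\pi)=\pi\circ\mu$ through the surjective submersion then gives a clean, complete proof of the smoothness of the $U(n)$-action, a step the paper leaves to the reader. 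The trade-off is the usual one: the paper's charts are explicit and reusable in later coordinate computations (e.g.\ for $\omega$ and $g$), whereas your argument is shorter, basis-independent from the outset, and makes the logical dependencies (freeness, properness, the universal property of submersions) explicit. Your closing caveat about matching the quotient topology is easily discharged, since the Quotient Manifold Theorem equips the orbit space with exactly the quotient topology and the bijection $\bS(\sH_n)/U(1)\to\sP(\sH_n)$ intertwines the two quotient maps.
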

\begin{remark}
Henceforth $iu(n)\subset \gB(\sH_n)$ -- where $u(n)$ is the Lie algebra of $U(n)$-- denotes the real space of self-adjoint operators.
\end{remark}
\noindent The following proposition, whose proof  is in the appendix, establishes a useful way to describe the tangent space $T_\sigma \sP(\sH_n)$.

\begin{proposition} \label{propvett}
The tangent vectors  $v$ at $p \in \sP(\sH_n)\equiv \sS_p(\sH_n)$ are all of the 
elements in $\gB(\sH_n)$ of the form:
$v = -i[A_v, p]$, for some  $A_v\in iu(n)$.
Consequently,  $A_1,A_2 \in iu(n)$ 
define the same vector in $T_p \sP(\sH_n)$  iff $[A_1-A_2, p]=0$. 
\end{proposition}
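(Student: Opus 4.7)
The plan is to exploit the smooth transitive action of $U(n)$ on $\sP(\sH_n) \equiv \sS_p(\sH_n)$ established in Proposition \ref{profgeo}. Fix $p \in \sP(\sH_n)$ and introduce the orbit map
$$\beta_p : U(n) \ni U \mapsto UpU^{-1} \in \sP(\sH_n),$$
which is smooth as the composition of the smooth action $\Phi$ with the fixed base point $p$. Transitivity of $\Phi$ makes $\beta_p$ surjective, and combined with the standard homogeneous-space identification $\sP(\sH_n) \cong U(n)/\mathrm{Stab}(p)$, one concludes that $\beta_p$ is a submersion; consequently $d\beta_p|_I : T_I U(n) \to T_p\sP(\sH_n)$ is surjective onto the whole tangent space.

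Next I would compute $d\beta_p|_I$ explicitly in the ambient real Banach space $\gB(\sH_n)$, inside which $\sS_p(\sH_n)$ sits as a smooth submanifold. The Lie algebra is $T_I U(n) = u(n) = \{-iA : A \in iu(n)\}$, so every tangent vector at $I$ is the initial velocity of a one-parameter subgroup $t \mapsto e^{-itA}$ with $A \in iu(n)$. Differentiating the curve $\beta_p(e^{-itA}) = e^{-itA} p\, e^{itA}$ at $t=0$ gives
$$\left.\frac{d}{dt}\right|_{t=0} e^{-itA} p\, e^{itA} = -iAp + piA = -i[A,p].$$
Combined with the surjectivity obtained in the first step, this shows that every $v \in T_p \sP(\sH_n)$ has the announced form $v = -i[A,p]$ with $A \in iu(n)$; conversely, any such expression is realized as the velocity at $t=0$ of the smooth curve $t \mapsto e^{-itA}pe^{itA}$ in $\sP(\sH_n)$, hence is a genuine tangent vector. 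The second statement of the proposition then follows at once by $\bR$-linearity of the map $A \mapsto -i[A,p]$: one has $-i[A_1,p] = -i[A_2,p]$ iff $[A_1-A_2,p]=0$.

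A dimension count provides a reassuring consistency check: $A$ ranges over $iu(n)$, of real dimension $n^2$, while the kernel $\{A \in iu(n) : [A,p]=0\}$ coincides with the Lie algebra of the stabilizer $U(1)\times U(n-1)$ of any pure state, of real dimension $1 + (n-1)^2$. The image therefore has real dimension $n^2 - 1 - (n-1)^2 = 2n-2$, in agreement with the dimension of $\sP(\sH_n)$ recorded in Proposition \ref{profgeo}.

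I expect the only genuine subtlety to be the surjectivity of $d\beta_p|_I$ onto $T_p\sP(\sH_n)$, which is essentially the orbit-stabilizer theorem for smooth actions of compact Lie groups. Once the smooth manifold structure on $\sP(\sH_n)$ and the smoothness of $\Phi$ from Proposition \ref{profgeo} are in hand, this reduces to invoking a standard Lie-theoretic fact, and all that remains is the straightforward calculation of the differential in the ambient space $\gB(\sH_n)$ already exhibited above.
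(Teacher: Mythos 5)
Your proof is correct and follows essentially the same route as the paper: the orbit map $U \mapsto UpU^{-1}$ is a submersion by the homogeneous-space structure, its differential at the identity is computed along one-parameter subgroups to be $B \mapsto [B,p]$ on $u(n)$, and surjectivity plus linearity give both claims. The dimension count you add is a nice consistency check but not needed.
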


\noindent 
As is well known \cite{AS,CLM,BH01}, $\sP(\sH_n)$ has also a structure of a $2n-2$-dimensional  {\bf symplectic manifold}, where the symplectic form (a closed non-degenerate smooth $2$-form) is, for any fixed value of the constant $\kappa>0$:
\beq
\omega_p (u, v) := 
- i\kappa \:\mbox{tr}\left(p  [A_u, A_v]\right) \quad u,v  \in T_p \sP(\sH_n) \:.\label{omega}
\eeq
This definition is well-posed, since,  by direct inspection, one sees that the right-hand side of (\ref{omega}) is fixed if adding  to $A_u$ or $A_v$ operators commuting with $p$. The constant $\kappa>0$ is a natural degree of freedom  we do not fix at this stage. It is introduced just for future convenience. As we shall see shortly,  $\kappa$ affects  the form of the classical-like observables associated with the quantum ones (\ref{CQ}) and in the litterature 
$\kappa$ is usually assumed  to be either $1$ \cite{BH01} or $1/2$ \cite{BSS04}. \\
 The symplectic structure allows us to  take advantage of  the usual Hamiltonian machinery, whose relation  with quantum mechanics formalism will be examined shortly. Just we recall some general facts (so $\sP(\sH_n)$ can be replaced for any symplectic manifold).
A diffeomorphism $F : \sP(\sH_n) \to \sP(\sH_n)$ is said to be {\bf symplectic} iff it preserves the symplectic form:  $F_\star \omega = \omega$. 
 For every smooth  $f:  \sP(\sH_n) \to \bR$ one defines the associated {\bf Hamiltonian (vector) field}
$X_f$ as the unique vector field satisfying $\omega_p(X_f,  \cdot ) = df_p$.
When $\cH$ is the {\bf Hamiltonian function} (time-independent for the sake of simplicity) of a  physical system described on  $\sP(\sH_n)$, the integral curves of $X_\cH$,  the solutions of {\bf Hamilton equations}:
\beq
\frac{dp}{dt} = X_\cH(p(t))\:, \label{He}
\eeq
represent the time evolution of the system.  Generally speaking, the evolution along the  integral curves of $X_f$ (which, on  $\sP(\sH_n)$, are complete since it  is compact) defines a one-parameter group of symplectic diffemorphisms
called the {\bf Hamiltonian flow} generated by the smooth function  $f$.
The {\bf Poisson bracket} of  a pair of smooth functions $f, g : 
\sP(\sH_n) \to \bR$ is $\{f,g\}:= \omega(X_f,X_g)$ and the remarkable formula 
holds $[X_f, X_g] = X_{\{f,g\}}$,  the  former commutator  being the  Lie bracket of vector fields.\\
 Coming back to  $\sP(\sH_n)$ explicitly, as is known \cite{AS,BH01}, it also admits a  positive  preferred smooth metric.  Up to the  factor 
$2\kappa$,  it is the so-called {\bf Fubini-Study metric}\footnote{With the  suggestive notation $-i[A, p] = dp$, for $A=B$, the metric (\ref{fs}) assumes the more  popular form $ds^2 = g_p(dp, dp) = 2 \kappa \:\mbox{tr} \left(p (dp)^2 \right)$, that  is equivalent to (\ref{fs}) through the polarization identity.}:
\beq
g_p \left(u, v\right) = - \kappa \:\mbox{tr}\left(p  \left( [A_u, p]  [A_v, p] +  [A_v, p]  [A_u, p] \right)\right) \quad  u,v  \in T_p \sP(\sH_n) \label{fs}\:.
\eeq
Finally a  $\omega$-$g$-compatible {\bf almost complex structure} exists \cite{AS,BH01},  explicitly given by the class $j$ of  linear maps  \cite{GCM2005}:
\beq   j_p :  T_p \sP(\sH_n) \ni v \mapsto i[v, p]  \in T_p \sP(\sH_n)\:, \quad p\in \sP(\sH_n)\:.\eeq
Indeed $p \mapsto j_p$ is  smooth, fulfils $j_p j_p = -I$ and  $\omega_p(u, v) = g_p(u, j_p v)$ if $u,v \in T_p \sP(\sH_n)$. (Symmetry of $g$ and anti-symmetry of $\omega$ also imply
$\omega(u, jv)= -g(u,v)$,  $g(ju,jv)= g(u,v)$ and $\omega(ju,jv) = \omega(u,v)$.)
$(\omega, g, j)$ is an {\bf almost K\"ahler structure} on $\sP(\sH_n)$.\\
Let us come to the interplay of Hamiltonian and Quantum formalism  \cite{AS,BH01}. It  relies upon the idea to associate  a quantum observable $A \in iu(n)$ to a classical-like observable:
\beq f_A : \sP(\sH_n) \ni p \mapsto  \kappa \:\mbox{tr}(p A) + c \:\mbox{tr} A\in   \bR\:. \label{CQ}\eeq
The constant $c\in \bR$ can be fixed arbitrarily. Once again  $c$ is another natural degree of freedom, allowed since  it does not affect the known results we are about stating. The core of the Hamiltonian description of quantum physics is stated in the following theorem proved  in the appendix. Other interesting aspects exist, related with, for instance, submanifolds with fixed energy,  geometric description of quantum entanglement,  theory of integrable systems etc. We omit them  for the shake of brevity (see \cite{AS,BH01,BSS04,GCM2005}).

\begin{theorem} \label{teoQC} Consider a quantum system described on $\sH_n$. Equipp $\sP(\sH_n)$ with the triple  $(\omega, g, j)$ as before. For every $A \in iu(n)$, define the function $f_A : \sP(\sH_n) \to \bR$ as in  (\ref{CQ}). Then the Hamiltonian field associated with $f_A$ reads:
\beq X_{f_A}(p) = -i[A,p] \quad \mbox{for all}\:\:  p\in \sP(\sH_n)\:,\label{XfA}\eeq
and the  following facts hold.\\
{\bf (a)}  $\bR \ni t \mapsto p(t) \in \sS_p(\sH_n)$ is the evolution of a pure quantum state fulfilling Schr\"odinger equation (\ref{Se}) with Hamiltonian $H\in  iu(n)$ if and only if $\bR \ni t \mapsto p(t) \in \sP(\sH_n)$
satisfies Hamilton equations (\ref{He}) with  Hamiltonian function $\cH := f_H$.\\
Similarly,  Hamiltonian evolution of classical-like observables is equivalent the Heisenberg evolution of corresponding quantum observables: 
$f_A(p(t)) = f_{e^{itH}Ae^{-itH}}(p)$.\\
{\bf (b)} If $A, H \in iu(n)$, then:
\beq \{f_A, f_H\} = f_{-i[A,H]}\:.\label{ppc}\eeq So in particular $A$ is a quantum constant of motion 
if and only if $f_A$ is a classical-like constant of motion when $\cH = f_H$ is the Hamiltonian function.\\
{\bf (c)}  If $U\in U(n)$ the map $\Phi_U: \sP(\sH_n) \to \sP(\sH_n)$ as in (\ref{ta}),
describing the action of the quantum symmetry $U$ on states, is both a symplectic diffeomorphism and an isometry of $\sP(\sH_n)$ and  thus $X_{f_A}$ is a $g$-Killing fields for every $A\in iu(n)$. Finally the covariance relation holds:  
$$f_{A}(\Phi_U(p))= f_{U^{-1}AU}(p)\quad \mbox{for all $A\in iu(n)$, $p\in \sP(\sH_n)$, and  $U\in U(n)$.}$$
\end{theorem}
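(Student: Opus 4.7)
\textbf{Plan for proving Theorem \ref{teoQC}.}
The strategy is to reduce every assertion to a direct trace manipulation, exploiting the characterization of Proposition \ref{propvett}: any $v \in T_p\sP(\sH_n)$ can be written as $v = -i[A_v, p]$ with $A_v \in iu(n)$ (unique up to operators commuting with $p$, which do not affect the quantities below). First I would verify the formula $X_{f_A}(p) = -i[A, p]$. Since $c\,\mathrm{tr}(A)$ is constant and $p \mapsto \kappa\,\mathrm{tr}(pA)$ is affine when $p$ is viewed as an element of $iu(n)$, we get $(df_A)_p(v) = \kappa\,\mathrm{tr}(vA) = -i\kappa\,\mathrm{tr}([A_v,p]A)$. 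Using cyclicity of the trace twice, this equals $i\kappa\,\mathrm{tr}([A,p]A_v)$. Plugging the candidate $X_{f_A} = -i[A,p]$ (which indeed lies in $T_p\sP(\sH_n)$ by Proposition \ref{propvett}) into the definition (\ref{omega}) and applying the same cyclicity identity to $\mathrm{tr}(p[A,A_v])$ gives the same expression. Since the right-hand side of (\ref{omega}) does not depend on the chosen representative $A_v$, this proves the formula.

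Part (a) is then immediate. For a smooth curve of pure states $p(t) = \psi(t)\langle\psi(t)|\cdot\rangle$, equation (\ref{Se}) rewrites as $\dot p(t) = -i[H, p(t)] = X_{f_H}(p(t))$, which is precisely Hamilton's equation (\ref{He}) with $\cH = f_H$; conversely any solution of (\ref{He}) with this Hamiltonian has this form. The Heisenberg statement follows by a one-line computation: cyclicity of the trace yields $\mathrm{tr}(e^{-itH}p\,e^{itH}A) = \mathrm{tr}(p\,e^{itH}Ae^{-itH})$, and $c\,\mathrm{tr}(A)$ is invariant under conjugation, so $f_A(p(t)) = f_{e^{itH}Ae^{-itH}}(p)$.

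For (b), I would compute $\{f_A,f_H\}(p) = \omega_p(X_{f_A}, X_{f_H}) = -i\kappa\,\mathrm{tr}(p[A,H])$ directly from (\ref{omega}) and the formula for $X_{f_A}$. On the other hand, $f_{-i[A,H]}(p) = -i\kappa\,\mathrm{tr}(p[A,H]) + c\,\mathrm{tr}(-i[A,H])$, and the second summand vanishes because $\mathrm{tr}([A,H]) = 0$; so (\ref{ppc}) holds. The constant-of-motion corollary follows, since $A \mapsto f_A$ is injective modulo the additive constant $c\,\mathrm{tr}(A)$, which is irrelevant for the Poisson bracket. For (c), a direct conjugation shows that the differential of $\Phi_U$ sends $v = -i[A_v,p]$ to $-i[UA_vU^{-1}, UpU^{-1}]$; substituting into (\ref{omega}) and (\ref{fs}) and using $\mathrm{tr}(UXU^{-1}Y) = \mathrm{tr}(XU^{-1}YU)$ shows that $\Phi_U^*\omega = \omega$ and $\Phi_U^*g = g$. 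The covariance relation $f_A(\Phi_U(p)) = f_{U^{-1}AU}(p)$ is again pure cyclicity. Finally, the integral curves of $X_{f_A}$ are $p(t) = e^{-itA}p\,e^{itA} = \Phi_{e^{-itA}}(p)$, which is a flow by isometries, so $X_{f_A}$ is Killing.

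The main potential obstacle is simply bookkeeping of signs and the correct reading of (\ref{omega}), (\ref{fs}) so that the well-definedness (independence of the choice of $A_v$) is invoked consistently. Everything else is formal manipulation of traces and commutators, together with the trivial observation that the image of $A \mapsto -i[A,p]$ is exactly $T_p\sP(\sH_n)$. No deep geometric fact is required beyond the almost Kähler data already set up preceding the statement.
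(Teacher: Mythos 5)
Your proposal is correct and follows essentially the same route as the paper's proof: establish $X_{f_A}(p)=-i[A,p]$ by computing $df_A$ on a tangent vector $v=-i[A_v,p]$ and matching it against $\omega_p(-i[A,p],\cdot)$ via cyclicity of the trace, then derive (a), (b), (c) by direct trace manipulations and the $U(n)$-invariance of $\omega$ and $g$ under simultaneous conjugation of $p$, $A_u$, $A_v$. The only cosmetic difference is that you verify the candidate field directly against the defining relation while the paper invokes non-degeneracy of $\omega_p$ to isolate $X_{f_A}$; both are valid and equivalent.
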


\begin{remark} {\em $\null$\\
{\bf (1)} Changing the form $c\:\mbox{tr}(A)$ of the constant term in the right-hand side in (\ref{CQ}) only affects the validity of (\ref{ppc})  in the thesis of thm \ref{teoQC}.\\
{\bf (2)} It is possible to prove that, remarkably,  a $g$-Killing field is necessarily a Hamiltonian field $X_{f_A}$ for some $A\in iu(n)$ \cite{BH01}.}
\end{remark}
\subsection{Matching quantum and classical-like expectation values} \label{secexp}
The appearance of the  constants $\kappa$ and $c$ in (\ref{CQ}), though maybe unusual,  does not give rise to any problem 
in comparing  quantum dynamics with  Hamiltonian one
and in discussing the interplay of classical-like and quantum symmetries,  as done in theorem \ref{teoQC}.  However one may ask if  further  degrees of freedom can be found  preserving the nice agreement  of quantum and classical dynamics. This problem will be tackled shortly, proving that the answer is negative. The said issue  is actually related with another problem we go to introduce. If we take seriously the fact that $A$ and $f_A$ are quantum and classical-like  observables associated to each other,  we have to be more precise on how the values obtained by measurements  of these observables are related.  To focus on this relation we have to compare quantum and classical-like expectation values, referred to corresponding states. 
 In Hamiltonian mechanics, referring to a {\bf statistical state} $\rho$ described in terms of a {\bf Liouville density}, the expectation value is  the integral of the product of $\rho$ and the considered observable $f$ with respect to the {\bf Liouville} (positive Borel) {\bf measure} $m:= \omega \wedge \cdots\mbox{($n$ times)}\cdots \wedge \omega$, where  $2n$
is the dimension of the symplectic  space: 
\beq \bE_\rho(f) := \int_{\sP(\sH_n)} \rho(p) f(p) \:\: d m(p)\:. \label{inte}\eeq
Here, {\bf Liouville  (probability) densities} are   non-negative functions $\sP(\sH_n) \ni p \mapsto \rho(p) \in \bR$ with\footnote{Actually,  in our case, $\cL^1(\sP(\sH_n), m)\subset 
 \cL^2(\sP(\sH_n), m)$ since $m(\sP(\sH_n))$ is finite, $\sP(\sH_n)$ being compact.}  $\rho \in \cL^1(\sP(\sH_n), m)\cap
 \cL^2(\sP(\sH_n), m)$,  satisfying $||\rho||_{\cL^1(m)}=1$.
The classical-like observables $f$ are supposed to fulfil  $f \in \cL^2(\sP(\sH_n),m)$ so that (\ref{inte}) makes sense.
Liouville densities evolve in time satisfying the celebrated {\bf Liouville equation}.
{\bf Sharp states} $\rho_{p_0}$ defined by a single point $p_0 \in \sS(\sH_n)$ can be thought of as Dirac's measures $\mu_{p_0}$ on the Borel $\sigma$-algebra
on $\sP(\sH_n)$  concentrated on $p_0$. The expectation value of an observable $f$ therefore coincides to its  evaluation at $p_0$:
\beq \bE_{\rho_{p_0}}(f) := \int_{\sP(\sH_n)}  f(p) \:\: d \mu_{p_0}(p) = f(p_0) \:.\eeq
In quantum mechanics  we  have mixed and pure states, respectively resembling statistical and sharp classical states. These states evolve in accordance 
with  Schr\"odinger equation, i.e.,  by means of the unitary evolutor associated with the Hamiltonian operator.
The expectation value of a quantum observable $A\in iu(n)$ referred to a state $\sigma$ is:
\beq
\langle A \rangle_\sigma := \mbox{tr}(\sigma A)\:. \label{tra}
\eeq
Comparing classical-like and quantum observables, we aspect that a natural requirement that could help fix $\kappa$ and $c$ is  a constraint like this:
\beq \mbox{tr}(A \sigma) = \int_{\sP(\sH_n)} f_A(p) \rho_\sigma(p) dm(p) \label{trAs0}
\quad \mbox{for every $A\in iu(n)$ and $\sigma\in \sS(\sH_n)$,}\eeq
$\rho_\sigma$ is a Liouville density associated with $\sigma$
through some unknown procedure.
In  \cite{Gibbons},  Gibbons  proved that there is a way to associate quantum and classical-like states such that,  if the classical-like observable $f_A$ with $\kappa=1$ and $c=0$ corresponds to  $A\in iu(n)$, then (\ref{trAs0}) holds true. It happens for a measure  $\mu$, in place of $m$, related with the Fubini-Study metric ($\mu = n(n+1) \mu_n$, where $\mu_n$ is that defined in proposition \ref{propmu} below) and  for:
\beq
\rho_\sigma(p) :=  \mbox{tr}(\sigma p) - \frac{1}{n+1}\:. \label{Gibbonsf}
\eeq
With the given definitions one easily sees that the Schr\"odenger evolution of the quantum states is equivalent to the evolution along the Liouville equation 
for the associated classical-like states. Nevertheless, the evident problem is that $\rho_\sigma \geq 0$  is not generally  true, so $\rho_\sigma$ {\em cannot} define a probability density.  However, as  stressed in  \cite{Gibbons}, since (\ref{trAs0}) is valid, one cannot produce non-physical results (e.g. $f_A \geq 0$ but $\bE_{\rho_\sigma}(f_A)<0$) dealing with  the few functions of the form  $f_A$, varying  $A\in iu(n)$.
 Nevertheless, there is no way to think of $\rho_\sigma$ as a classical-like state  when dealing with general classical-like observables $f :  \sP(\sH_n) \to \bR$.\\ 
In the rest of the paper, we wish to  focus on the interplay of quantum and classical description, studying all possible correspondences from quantum states to Liouville densities on $\sP(\sH_n)$ satisfying natural requirements.  These requirements in particular, fix a relation between $\kappa$ and $c$. We also  establish that both classical-like observables representing quantum ones and densities representing quantum states {\em must} be frame functions.  We will also prove  that, in the found picture, 
 there is room enough
to fix the positivity problem of $\rho_\sigma$,  preserving the validity of  theorem \ref {teoQC}.

\section{More about frame functions on the  projective space}\label{Sm}
To tackle the issues  illustrated  in Sect.\ref{secexp},  we  introduce some new preparatory results about frame functions.
\subsection{Frame functions on $\sP(\sH_n)$}
First of all we need to restate the definition of frame function and theorem \ref{ZAZ} on $\sP(\sH_n)$  rather than on  $\bS(\sH_n)$. To this end, we exploit the existence of a suitable measure on $\sP(\sH_n)$ induced by the measure $\nu_n$ defined on $\bS(\sH_n)$ mentioned in theorem \ref{ZAZ} and therein denoted by $\nu_{\sH}$.
As  $\bS(\sH_n)$ is homeomorphic to the quotient of compact groups $U(n)/U(n-1)$, it is endowed with a  $U(n)$-left-invariant  regular positive Borel measure, $\nu_n$, that is uniquely determined by its  normalization $\nu_n(\bS(\sH_n)) =1$ (see \cite{Mackey} and Chapter 4 of \cite{BR}). 
We have the following proposition whose proof is in the appendix.

\begin{proposition}\label{propmu} Let  $\nu_n: \bS(\sH_n) \to [0,1]$ denote the unique  $U(n)$-left-invariant regular Borel measure   with $\nu_n( \bS(\sH_n))=1$.
There exists a unique positive Borel measure $\mu_n$ over $\sP(\sH_n)$ such that, if $\pi  : \bS(\sH_n) \to \sP(\sH_n)$ is the natural projection map, then:
\beq
\mbox{$f\circ \pi   \in \cL^1(\bS(\sH_n), \nu_n)$} \quad\mbox{if}
\quad \mbox{$f \in \cL^1(\sP(\sH_n), \mu_n)$,}\quad \mbox{and}\quad 
\int_{\sP(\sH_n)} f d\mu_n  = \int_{\bS(\sH_n)} f \circ \pi  \:\: d\nu_n\:. \nonumber
\eeq
The measure $\mu_n$ fulfils the following.\\
{\bf (a)}  Referring to the smooth action (\ref{ta}), $\mu_n$ is the unique $U(n)$-left-invariant regular Borel measure on $\sP(\sH_n)$ with $\mu_n(\sP(\sH_n))=1$.\\
{\bf (b)}   It coincides to the Liouville volume form  induced by
$\omega$ up to  its normalization.\\
{\bf (c)}  It coincides to the  Riemannian measure induced by  $g$ up to  its normalization.
\end{proposition}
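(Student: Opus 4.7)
The plan is to define $\mu_n$ as the pushforward of $\nu_n$ along $\pi$, i.e.\ $\mu_n(E) := \nu_n(\pi^{-1}(E))$ for every Borel set $E\subset \sP(\sH_n)$. Since $\pi$ is continuous, $\pi^{-1}(E)$ is Borel, so $\mu_n$ is a well-defined positive Borel measure with total mass $\mu_n(\sP(\sH_n))=\nu_n(\bS(\sH_n))=1$; regularity is automatic because $\sP(\sH_n)$ is compact and metrisable as a smooth manifold. The integration identity then follows by the standard layer-cake argument: it holds by construction for characteristic functions of Borel sets, extends to non-negative simple functions by linearity, to arbitrary non-negative Borel functions by monotone convergence, and finally to $\cL^1(\sP(\sH_n),\mu_n)$ by splitting into positive and negative parts. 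Applied to $|f|$ in place of $f$, the identity also shows that $f\circ\pi\in \cL^1(\bS(\sH_n),\nu_n)$ whenever $f\in \cL^1(\sP(\sH_n),\mu_n)$. Uniqueness of $\mu_n$ among Borel measures satisfying the integration formula is then immediate by testing it on $f=\chi_E$.

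For (a), I would first verify $U(n)$-left-invariance. The projection is equivariant, $\pi\circ U = \Phi_U\circ \pi$ for every $U\in U(n)$, hence $\pi^{-1}(\Phi_U(E)) = U\pi^{-1}(E)$, and the $U(n)$-invariance of $\nu_n$ gives $\mu_n(\Phi_U(E))=\nu_n(U\pi^{-1}(E))=\nu_n(\pi^{-1}(E))=\mu_n(E)$. Uniqueness of the normalised $U(n)$-left-invariant regular Borel measure is then a classical fact on compact homogeneous spaces: by Proposition \ref{profgeo} the action (\ref{ta}) is continuous and transitive on the compact Hausdorff manifold $\sP(\sH_n)$, so $\sP(\sH_n)$ is homeomorphic to a quotient $U(n)/H$ of compact groups, which admits a unique normalised invariant regular Borel measure by the same argument (a variant of Haar's theorem) used to produce $\nu_n$ on $\bS(\sH_n)\cong U(n)/U(n-1)$.

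Parts (b) and (c) then drop out as corollaries of (a). By Theorem \ref{teoQC}(c) each $\Phi_U$ with $U\in U(n)$ is a symplectic diffeomorphism and a $g$-isometry of $\sP(\sH_n)$. Consequently, the Liouville top form built from $\omega$ and the Riemannian volume form induced by $g$ are both nowhere-vanishing $U(n)$-invariant $2(n-1)$-forms, and the associated positive Borel measures on the compact manifold $\sP(\sH_n)$ are $U(n)$-left-invariant. After normalising each to total mass $1$, the uniqueness from (a) forces them to coincide with $\mu_n$, so the original (unnormalised) measures equal $\mu_n$ up to positive multiplicative constants, proving (b) and (c). The only non-bookkeeping ingredient is the homogeneous-space uniqueness invoked in (a); everything else is formal pushforward manipulation combined with the covariance properties already established in Theorem \ref{teoQC}(c).
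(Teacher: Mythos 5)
Your proposal is correct and follows essentially the same route as the paper: define $\mu_n$ as the pushforward $\mu_n(E)=\nu_n(\pi^{-1}(E))$, get the integration identity and uniqueness from characteristic functions, prove (a) via equivariance of $\pi$ and the uniqueness of the normalised invariant measure on the compact homogeneous space $U(n)/H$, and deduce (b) and (c) from the $U(n)$-invariance of $\omega$ and $g$. No gaps.
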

\noindent A  frame function as in definition  \ref{ff} actually determines  a function on $\sP(\sH_n)$.
This is because both the unit vectors $\psi$  and $\alpha \psi$, for $|\alpha|=1$,  can be completed to a Hilbert basis of $\sH$ by adding {\em the same set} of $n-1$ vectors $\psi_2,\psi_3,\ldots$. Requirement (\ref{weight}) 
for a frame function $f: \bS(\sH) \to \bC$
therefore  implies: $f(\psi) = W_f - \sum_{i \geq 2} f(\psi_i) = f(\alpha \psi)$.
We  may consequently  state an equivalent  definition of frame function on $\sP(\sH)$. The only problem concerns the analogue of the  notion of Hilbert basis stated on $\sP(\sH_n)$ instead of $\sH_n$.
We have the following helpful elementary result with  $\pi  : \bS(\sH_n) \to \sP(\sH_n)$ as before.
\begin{proposition}\label{propBasis}
Let  $\sH$ be a separable complex Hilbert space.  $N \subset \sP(\sH)$ can be written 
as $\{\pi(\psi)\}_{\psi \in M}$ for some Hilbertian basis  $M \subset \sH_n$
if and only if both $d_2\left(p,p'\right)= \sqrt{2}$ for $p,p' \in N$ when $p\neq p'$ and $N$ is maximal with respect to 
this property.
\end{proposition}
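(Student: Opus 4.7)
The plan is to reduce the proposition to the standard characterization of a Hilbert basis as a maximal orthonormal system, by exploiting the identification of $\sP(\sH)$ with the set $\sS_p(\sH)$ of rank-one orthogonal projectors (Proposition \ref{teozero}(3)) and computing the Hilbert--Schmidt distance between two such projectors.

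The key computation is this: for unit vectors $\psi,\phi \in \bS(\sH)$ with associated projectors $P_\psi := \psi\langle \psi| \cdot \rangle$ and $P_\phi$, one has
\begin{equation*}
\|P_\psi - P_\phi\|_2^2 = \mbox{tr}(P_\psi) + \mbox{tr}(P_\phi) - 2\,\mbox{tr}(P_\psi P_\phi) = 2 - 2|\langle \psi|\phi\rangle|^2,
\end{equation*}
so under the identification of Proposition \ref{teozero}(3) we get $d_2(\pi(\psi),\pi(\phi)) = \sqrt{2}$ if and only if $\psi \perp \phi$. Consequently, the condition that pairwise $d_2$-distances on $N \subset \sP(\sH)$ equal $\sqrt{2}$ is equivalent to: any choice of representatives $M \subset \bS(\sH)$ with $\pi(M) = N$ is an orthonormal family, with $\pi$ restricted to $M$ a bijection onto $N$.

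For the forward direction, assume $N = \pi(M)$ for some Hilbert basis $M$. Pairwise orthogonality of $M$ yields the $d_2$-condition by the computation above. For maximality of $N$, suppose $p_0 \in \sP(\sH) \setminus N$ were to satisfy $d_2(p_0,p) = \sqrt{2}$ for every $p \in N$. Then any representative $\phi_0 \in \bS(\sH)$ of $p_0$ would be orthogonal to every element of $M$, contradicting the completeness of the Hilbert basis $M$.

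For the converse, let $N$ satisfy both hypotheses and pick any family $M := \{\psi_p : p \in N\}$ of representatives in $\bS(\sH)$. The computation shows $M$ is orthonormal. If $M$ were not a Hilbert basis, some unit vector $\phi$ would be orthogonal to every element of $M$; setting $p' := \pi(\phi)$ the same computation gives $d_2(p',p) = \sqrt{2}$ for all $p \in N$, while $p' \notin N$ since $p' = p$ would force $\phi = \alpha\psi_p$ with $|\alpha|=1$ and simultaneously $\phi \perp \psi_p$, which is impossible. This contradicts the maximality of $N$ and shows $M$ is a Hilbert basis with $\pi(M) = N$. There is no substantive obstacle here: once the Hilbert--Schmidt distance has been rewritten in terms of the overlap $|\langle \psi|\phi\rangle|$, the statement becomes a transparent translation of ``orthonormal'' and ``maximal orthonormal'' between $\bS(\sH)$ and $\sP(\sH)$.
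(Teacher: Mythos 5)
Your proof is correct and follows essentially the same route as the paper: compute the Hilbert--Schmidt distance between rank-one projectors to see that $d_2(\pi(\psi),\pi(\phi))=\sqrt{2}$ for unit vectors is equivalent to $\psi\perp\phi$, and then observe that maximality in $\sP(\sH)$ translates to maximality of an orthonormal system, i.e.\ the Hilbert-basis property. You merely spell out the two directions of the maximality equivalence more explicitly than the paper does.
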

\begin{proof}
As
$
d_2\left(\psi\langle\psi| \:\cdot\: \rangle,  \phi\langle\phi| \:\cdot\: \rangle \right) = \sqrt{||\psi||^4
+||\phi||^4 - 2|\langle \psi| \phi \rangle|^2}$ if $\psi,\phi \in \sH$,
%\label{d2}
 for
 $\psi,\phi \in \bS(\sH)$, one has
$d_2\left(\psi\langle\psi| \:\cdot\: \rangle,  \phi\langle\phi| \:\cdot\: \rangle \right)=\sqrt{2}$ if and only if 
$\psi \perp \phi$.
The proof concludes noticing that the maximality property  in the thesis is equivalent to that of a  Hilbertian basis. \end{proof}

\noindent We remark the fact  that 
$\sqrt{2}= \max\{d_2\left(p,p'\right)\:|\: p,p' \in  \sP(\sH)\}$
 for every separable complex Hilbert space $\sH$. Moreover, if $\mbox{dim}(\sH)=n < +\infty$ the maximality condition is equivalent
to say that $N$ contains exactly $n$ elements.
\begin{definition}
{\em If $\sH$ is a separable complex Hilbert space,  $N \subset \sP(\sH)$
is  a {\bf basis} of $\sP(\sH)$ if $d_2\left(p,p'\right)= \sqrt{2}$ for $p,p' \in N$ with $p\neq p'$ and $N$ is maximal with respect to  this property.}
\end{definition}

\noindent We may give a definition of frame function on $\sP(\sH_n)$ equivalent to that in definition \ref{ff}.
\begin{definition}{\em 
A map $F:\sP(\sH_n)\rightarrow \bC$ is a {\bf frame function}  if there is 
 $W_F\in \bC$ with:
\begin{equation}\label{weight2}
\sum_{i=1}^n F(x_i)=W_f \quad \mbox{for every basis $\{x_i\}_{i=1,\ldots, n}$ of  $\sP(\sH_n)$.}
\end{equation}}
\end{definition}

\noindent Theorem \ref{ZAZ} can now be restated referring to the measure $\mu_n$ and  completing the statement  by adding some other elementary facts. The non elementary result is (b).
\begin{theorem}\label{teoDV2}
In  $\sH_n$  the following  holds.\\
{\bf (a)} If $A \in \gB(A)$ then
 \beq F_A(p) := \mbox{tr}\left( pA\right)  \quad \mbox{for $p \in \sP(\sH_n)$.}\label{FA}\eeq
defines a frame function with $W_{F_A}= \mbox{\em tr} A$  which belongs to $ \cL^2(\sP(\sH_n), d\mu_n)$. \\
{\bf (b)} If $F : \sP(\sH_n) \to \bC$  is a frame function, $n>2$ and $F \in  \cL^2(\sP(\sH_n), d\mu_n)$, 
then there is a unique $A\in \gB(\sH_n)$ such that  $F_A = F$.\\
{\bf (c)} Defining the subspace, closed if $n>2$:
$$\cF^2(\sH_n) := \{  F: \sP(\sH_n) \to \bC \:|\: F \in \cL^2(\sP(\sH_n), d\mu_n) \quad \mbox{and $F$ is a frame function}\}\:,$$
 $ M : \gB(\sH) \ni A \mapsto F_A \in  \cF^2(\sH_n)$
is a complex vector space injective homomorphism, surjective if $n>2$, fulfilling the properties:

(i) $A\geq c I $, for some $c\in \bR$, if and only if $F_A(x)\geq c$ for all $x\in \sP(\sH_n)$

(ii)  $F_{A^*} = \overline{F_A}$, where the bar denotes the point-wise complex conjugation. In particular   $A=A^*$ if and only if $F_A$ is real.
\end{theorem}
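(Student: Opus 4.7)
The proof splits naturally into its three parts, and the entire argument reduces, for part (b), to Theorem \ref{ZAZ} via the lift $F \mapsto F \circ \pi$. The two bridging tools are Proposition \ref{propBasis}, which identifies bases of $\sP(\sH_n)$ with Hilbertian bases of $\sH_n$, and Proposition \ref{propmu}, which relates integrability with respect to $\mu_n$ and $\nu_n$.

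For part (a), let $\{x_i\}_{i=1}^n$ be a basis of $\sP(\sH_n)$ and write $x_i = \psi_i\langle \psi_i | \cdot\rangle$ with $\{\psi_i\}$ a Hilbertian basis of $\sH_n$ (Proposition \ref{propBasis}). Cyclicity of the trace yields $\sum_i F_A(x_i) = \sum_i \langle \psi_i | A \psi_i\rangle = \mbox{tr}(A)$, independent of the basis; this is both the frame-function condition and identifies the weight. Continuity of $F_A$ on the compact manifold $\sP(\sH_n)$ gives boundedness, whence $F_A \in \cL^2(\sP(\sH_n), d\mu_n)$.

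For part (b), set $f := F \circ \pi : \bS(\sH_n) \to \bC$. Given any Hilbertian basis $M$ of $\sH_n$, Proposition \ref{propBasis} says $\pi(M)$ is a basis of $\sP(\sH_n)$, so $\sum_{\psi \in M} f(\psi) = \sum_{p \in \pi(M)} F(p) = W_F$, independent of $M$; thus $f$ is a frame function on $\sH_n$ in the sense of Definition \ref{ff}. The integral identity of Proposition \ref{propmu} applied to $|F|^2$ yields $f \in \cL^2(\bS(\sH_n), d\nu_n)$. Theorem \ref{ZAZ}, whose hypothesis $\dim \sH_n \geq 3$ is exactly $n > 2$, then produces a unique linear $A : \sH_n \to \sH_n$ such that $\langle \psi | A \psi\rangle = f(\psi)$ on $\bS(\sH_n)$, and consequently $F(p) = \mbox{tr}(pA) = F_A(p)$ for every $p \in \sP(\sH_n)$. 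Uniqueness of $A$ also follows directly from polarization: if $F_A = F_B$, then $\langle \psi | (A-B) \psi\rangle = 0$ for all $\psi \in \bS(\sH_n)$, forcing $A = B$.

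Part (c) is then bookkeeping. Linearity of $M$ is immediate from linearity of the trace, injectivity is the polarization remark above, and surjectivity for $n > 2$ is part (b); since $\gB(\sH_n)$ is finite-dimensional, $\cF^2(\sH_n) = M(\gB(\sH_n))$ is a finite-dimensional, hence closed, subspace of $\cL^2(\sP(\sH_n), d\mu_n)$ whenever $n > 2$. For property (ii), cyclicity of the trace together with $p^* = p$ gives $\overline{F_A(p)} = \overline{\mbox{tr}(pA)} = \mbox{tr}(A^*p) = F_{A^*}(p)$, and reality of $F_A$ is then equivalent to $F_A = F_{A^*}$, hence to $A = A^*$ by injectivity. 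For property (i), if $A \geq cI$ then $F_A(\psi\langle\psi|\cdot\rangle) = \langle \psi | A\psi\rangle \geq c$ on unit vectors; conversely, $F_A \geq c$ on $\sP(\sH_n)$ yields $\langle \psi | (A - cI)\psi\rangle \geq 0$ on $\bS(\sH_n)$ and, by homogeneity, on all of $\sH_n$, so $A \geq cI$. The only genuine obstacle is the quadratic representation in (b); this is delegated entirely to Theorem \ref{ZAZ}, so the plan amounts to verifying that its hypotheses descend cleanly from $\sP(\sH_n)$ to $\bS(\sH_n)$ via the bridges listed above.
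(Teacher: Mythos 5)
Your proof is correct and follows essentially the same route as the paper's: part (a) by cyclicity of the trace and compactness, part (b) by lifting $F$ to $f=F\circ\pi$ via Proposition \ref{propBasis} and Proposition \ref{propmu} and invoking Theorem \ref{ZAZ}, uniqueness by the polarization argument, and closedness from finite-dimensionality. You simply spell out the "direct inspection" verifications of (c) that the paper leaves implicit; no gaps.
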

\begin{proof} 
The proof of the first part of (a) is trivial. $F_A$ is continuous and thus bounded, since $\sP(\sH_n)$ is compact. Therefore it belongs to $\cL^2(\sP(\sH_n), d\mu_n)$ as $\mu_n(\sP(\sH_n)) <+\infty$. Concerning (b), we observe that $f(\psi) := F([\psi])$ is a frame function in the sense of definition 
 \ref{ff} due to proposition \ref{propBasis}. If $F \in \cL^2(\sP(\sH_n), d\mu_n)$, then $f \in \cL^2(\bS(\sH_n), d\nu_n)$ in view of the first statement in proposition \ref{propmu}. Thus, whenever $n\geq 3$, we can take advantage of thm \ref{ZAZ},  obtaining  that there is $A\in \gB(\sH_n)$ with $F_A([\psi])= f_A(\psi) = \langle \psi|A\psi\rangle =
\mbox{tr}(\psi \langle \psi| \cdot \rangle A)$ for all $\psi \in \bS(\sH_n)$, namely $F=F_A$.
$A$ is uniquely determined since, as it is simply proved,  in complex Hilbert spaces, if $B: \sH \to \sH$
is linear,  $\langle \psi |B\psi \rangle=0$ for all $\psi \in \bS(\sH)$ then $B=0$. The proof of (c) is evident per direct inspection. Closedness  of $\cF^2(\sH_n)$ for $n\geq 3$ arises form the fact that $\cF^2(\sH_n)$ is a finite dimensional subspace of a Banach space:  The space of quadratic forms on $\sH_n \times \sH_n$ for (b).
\end{proof}

\begin{remark}{\em The statement (b) is false for $n=2$. A simple counterexample is the same as for the classical version of Gleason's theorem. Fix  $p_0 \in \sP(\sH_2)$ and consider the map: $F(p) := \frac{1}{2}\left(1- (1- 2\mbox{tr} (p_0p))^3 \right)$
for $p\in  \sP(\sH_2)$. Passing to the Bloch representation in $\bC^2$, it turns out  evident that this is a positive frame function with $W_F=1$. Next it is simply proved that no $A\in \gB(\sH_2)$ satisfies $F(p) = \mbox{tr}(Ap)$ for all $p\in \sP(\sH_2)$.}
\end{remark}

\subsection{$U(n)$-covariance and (convex) linearity}
\begin{definition}{\em  If ${\cal G} : \sS(\sH_n) \ni \sigma \to g_\sigma$, where the $g_\sigma$ are  maps from $\sP(\sH_n)$
to $\bC$,  we say that $\cal G$, is {\bf $U(n)$-covariant}, if 
$$g_\sigma(\Phi_U(p)) = g_{U^{-1}\sigma U}(p)\quad \mbox{for all $U\in U(n)$, $\sigma \in \sS(\sH_n)$, $p\in \sS(\sH_n)$.}$$}
\end{definition}
\noindent There is a nice interplay between $U(n)$-covariance and frame functions we state and prove in the following theorem which will be a key-tool in the next section.

\begin{theorem}\label{newtheorem}
Assume that $n>2$ for $\sH_n$. \\
{\bf (a)} If ${\cal G} : \sS(\sH_n) \to \cL^2(\sP(\sH_n), \mu_n)$ is a convex-linear and $U(n)$-covariant map, then 
${\cal G}(\sS(\sH_n)) \subset \cF^2(\sH_n)$. \\
{\bf (b)} If ${\cal G}_1 : \gB(\sH_n) \to \cL^2(\sP(\sH_n), \mu_n)$ is a $\bC$-linear map satisfying ${\cal G}_1|_{\sS(\sH_n)}= \cal G$,  with $\cal G$ as in (a), then ${\cal G}_1(\sS(\sH_n)) \subset \cF^2(\sH_n)$.
\end{theorem}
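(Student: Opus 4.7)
The plan for part (a) is to reduce via convex-linearity to the case of pure states, and then use $U(n)$-covariance together with the constancy of $g_{I/n}:=\cG(I/n)$ to pin down $g_{p_0}:=\cG(p_0)$. For a pure state $p_0$, its stabilizer under conjugation is $H_{p_0}\cong U(1)\times U(n-1)$, and the orbits of $H_{p_0}$ on $\sP(\sH_n)$ are parametrized by the invariant $t_{p_0}(p):=\mbox{tr}(p_0 p)\in[0,1]$. By $U(n)$-covariance, $g_{p_0}\circ\Phi_U=g_{p_0}$ in $\cL^2$ for all $U\in H_{p_0}$, so $g_{p_0}(p)=h(\mbox{tr}(p_0p))$ a.e.\ for some measurable $h:[0,1]\to\bC$; applying covariance once more to transport $p_0$ to any other pure state shows that $h$ is independent of the chosen pure state.

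Next, since $I/n$ is central, $g_{I/n}$ is $U(n)$-invariant; by transitivity of the $U(n)$-action on $\sP(\sH_n)$ and the uniqueness of the invariant probability measure $\mu_n$, one concludes $g_{I/n}=c$ a.e.\ for some $c\in\bC$. For every basis $\{p_1,\ldots,p_n\}$ of $\sP(\sH_n)$ we have $I/n=\frac1n\sum_i p_i$, so convex-linearity yields
\[
\sum_{i=1}^n h(\mbox{tr}(p_ip))=nc\qquad\mbox{for $\mu_n$-a.e.\ $p$.}
\]
Every probability vector $(x_1,\ldots,x_n)\in\Delta_n:=\{x_i\ge 0,\sum x_i=1\}$ is realized as $(\mbox{tr}(p_ip))$ via $p_i=\pi(e_i)$ and $p=\pi(\sum_i\sqrt{x_i}e_i)$; as the induced pushforward of $\mu_n$ has full support on $\Delta_n$, one obtains $\sum_i h(x_i)=nc$ for Lebesgue-a.e.\ $(x_i)\in\Delta_n$.

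The crux is to deduce affinity of $h$ from this simplex identity. Comparing the equation on the probability vectors $(t,s,1-t-s,0,\ldots,0)$ and $(t+s,0,1-t-s,0,\ldots,0)$---which requires at least three coordinates, hence $n\ge 3$---one obtains
\[
h(t)+h(s)=h(t+s)+h(0)\qquad\mbox{a.e.\ on $\{t,s\ge 0,\ t+s\le 1\}$.}
\]
Setting $k(t):=h(t)-h(0)$, this is Cauchy's functional equation, whose measurable solutions are linear, so $h(t)=at+b$. Hence $g_{p_0}(p)=a\,\mbox{tr}(p_0 p)+b=\mbox{tr}(p(ap_0+bI))=F_{ap_0+bI}(p)$ a.e., so $g_{p_0}\in\cF^2(\sH_n)$, and by convex-linearity $g_\sigma=\sum_j\lambda_j F_{aq_j+bI}=F_{a\sigma+bI}\in\cF^2(\sH_n)$ for every $\sigma=\sum_j\lambda_j q_j\in\sS(\sH_n)$ (using $\sum_j\lambda_j=1$). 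Part (b) is then immediate: $\cG_1|_{\sS(\sH_n)}=\cG$, so $\cG_1(\sS(\sH_n))=\cG(\sS(\sH_n))\subset\cF^2(\sH_n)$ by (a); further, by $\bC$-linearity and the fact that states $\bC$-span $\gB(\sH_n)$ (since any positive operator is a positive scalar times a state), one even obtains $\cG_1(\gB(\sH_n))\subset\cF^2(\sH_n)$.

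The main obstacle is the passage from the a.e.\ identity on $\Delta_n$ to Cauchy's equation with enough regularity to conclude linearity of $k$: this requires care with the measure-theoretic details (full support of the pushforward, solving Cauchy's equation in the measurable category). The indispensability of $n>2$---visible precisely in the need for three free coordinates to derive the equation $h(t)+h(s)=h(t+s)+h(0)$---is consistent with the failure of Theorem \ref{teoDV2}(b) for $n=2$, where only $h(t)+h(1-t)=\mbox{const}$ can be extracted.
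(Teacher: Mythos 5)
Your strategy is genuinely different from the paper's and, modulo one step, it can be made to work: you classify $g_{p_0}$ for pure $p_0$ by reducing it, via the stabilizer $U(1)\times U(n-1)$ and its orbit structure, to a single measurable function $h$ of $\mbox{tr}(p_0p)$, and then try to extract an affine form for $h$ from the basis identity on the simplex. The genuine gap sits exactly where you locate ``the crux'': the identity $\sum_i h(x_i)=nc$ is only available for Lebesgue-almost every $x\in\Delta_n$, whereas the two probability vectors you compare, $(t,s,1-t-s,0,\dots,0)$ and $(t+s,0,1-t-s,0,\dots,0)$, lie on boundary faces of $\Delta_n$, a Lebesgue-null set on which the identity need not hold at all. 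As written, the derivation of $h(t)+h(s)=h(t+s)+h(0)$ therefore fails. It is repairable: compare two \emph{interior} points $(t,s,u_3,\dots,u_n)$ and $(t',s',u_3,\dots,u_n)$ with $t+s=t'+s'$ (this still needs $n\geq 3$) to conclude that $h(t)+h(s)$ depends only on $t+s$ for a.e.\ $(t,s)$, and then invoke the theory of almost-everywhere Cauchy--Pexider equations (an a.e.\ additive measurable function agrees a.e.\ with a linear one). Note also that ``full support'' of the pushforward of $\mu_n$ on $\Delta_n$ is not the right hypothesis for transferring a.e.\ statements; you need absolute continuity with respect to Lebesgue measure (which does hold: the pushforward is in fact the uniform measure on $\Delta_n$).

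For comparison, the paper avoids the functional equation entirely with one trick: choosing involutive unitaries $U_i=U_i^*=U_i^{-1}$ with $U_i\phi=\psi_i$, covariance gives $g_\sigma(p_i)=g_{p_i}(\sigma)$ for the pure state $\sigma=\pi(\phi)$ and any basis element $p_i$; then $\sum_i g_\sigma(p_i)=n\,g_{n^{-1}\sum_i p_i}(\sigma)=n\,g_{n^{-1}I}(\sigma)$ by convex linearity, and the right-hand side is a basis-independent constant, so $g_\sigma$ is a frame function outright and $\cF^2$-membership is just the $\cL^2$ hypothesis. Your route buys more --- it yields the explicit form $g_{p_0}=F_{ap_0+bI}$ without passing through the Gleason-type input behind Theorem \ref{teoDV2} --- but at the price of the measure-theoretic repairs above (including upgrading the a.e.\ $H_{p_0}$-invariance of an $\cL^2$ class to a genuinely invariant representative by Haar averaging). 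Your treatment of (b) is fine and matches the paper's intent: the literal statement of (b) is trivially contained in (a), and the substantive conclusion $\cG_1(\gB(\sH_n))\subset\cF^2(\sH_n)$ follows from your spanning argument exactly as in the paper.
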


\begin{proof}
(a) Suppose that  $\sigma = \phi\langle \phi| \cdot \rangle $  is a given pure state ad suppose that $\{p_i\}_{i=1,2,\ldots, n}$ is a basis of $\sP(\sH_n)$, so that $p_i = \psi_i \langle \psi_i | \cdot \rangle$.  With a suitable choice of the arbitrary phase in the definition of the $\psi_i$, there are 
$n$ operators $U_i$ such that $U_i \phi = \psi_i$ and $U_i=U_i^*= U_i^{-1}$, 
(lemma \ref{lemmalemme} in the appendix).
 Consequently, taking advantage of the $U(n)$-covariance:
$g_\sigma(p_i) = g_\sigma(U_i\sigma U^*_i) = g_{U^*_i\sigma U_i}(\sigma) = g_{U_i\sigma U^*_i}(\sigma) =
g_{p_i}(\sigma)$.
Exploiting the convex-linearity:
$$n^{-1}  \sum_i g_\sigma(p_i) =
\sum_i n^{-1}g_\sigma(p_i) =  g_{\sum_i n^{-1} p_i}(\sigma) =
 g_{n^{-1}\sum_i  p_i}(\sigma) 
=   g_{n^{-1}I}(\sigma) \:.$$
$U(n)$-covariance implies: $g_{n^{-1}I}(\sigma) = g_{U^{-1}n^{-1}IU}(\sigma) =
g_{n^{-1}I}(\Phi_U(\sigma))$. Since $\Phi$ is transitive on $\sP(\sH_n)$, we conclude that $n^{-1}  \sum_i g_\sigma(p_i)=    g_{n^{-1}I}(q) = c$, for every $q\in \sP(\sH_n)$ and some constant $c\in \bR$.
Next consider a mixed $\sigma \in \sS(\sH_n)$. The found result and convex-linearity of $\cal G$, representing  $\sigma$ with its
spectral decomposition $\sigma = \sum_j q_j \sigma_j$ ($\sigma_j$ being pure),  yield:
$$\sum_i g_\sigma (p_i) =  \sum_i g_{\sum_j q_j\sigma_j} (p_i) =
\sum_i \sum_j q_j g_{\sigma_j}(p_i) 
=    \sum_j q_j \sum_i g_{\sigma_j}(p_i)  =
\sum_j q_j  nc\:.$$
As  the right most side does not depend on the choice of the basis $\{p_i\}_{i=1,2,\ldots, n}$,  $g_\sigma$ must be  a frame function, that belongs to $\cL^2$ by hypotheses.  (b)  of  thm \ref{teoDV2} implies that  $g_\sigma \in \cF^2(\sH_2)$.\\
(b) If $A \in \gB(\sH_n)$,  decompose it  as $A = \frac{1}{2}(A+A^*) + i\frac{1}{2i}(A-A^*)$. Next  decompose the self-adjoint operators $ \frac{1}{2}(A+A^*)$ and $\frac{1}{2i}(A-A^*)$ into linear combinations of pure states $\sigma_k$ exploiting the spectral theorem. Each ${\cal G}_1(\sigma_k)= {\cal G}(\sigma_k)$ belongs to  the linear space $\cF^2$. Linearity of ${\cal G}_1$ concludes the proof.
\end{proof}

\subsection{Trace-integral formulas} 
Frame functions enjoy remarkable properties 
connecting  Hilbert-Schmidt and  $\cL^2(\mu_n)$ scalar products.
These identities were already discovered in  \cite{Gibbons}  for self-adjoint operators, referring to the measure naturally associated with $g$, which  we proved to be proportional to $\mu_n$ in proposition \ref{propmu} above.  Here,  we  establish  them  directly  for $\mu_n$, using the $U(n)$ invariance and dealing with  generally non self-adjoint operators $A,B\in \gB(\sH_n)$.
\begin{theorem}\label{teoFA}
Referring to Theorem \ref{teoDV2},  if  $F_A$ and $F_B$ are frame functions respectively constructed out of $A$ and $B$ in $\gB(\sH_n)$, then:
\beq
\int_{\sP(\sH_n)}  F_A d\mu_n   = \frac{\mbox{tr}(A)}{n} = \frac{W_F}{n}\:,\label{one}
\eeq
\beq
\int_{\sP(\sH_n)} \overline{F_A} F_B d\mu_n   =  \frac{1}{n(n+1)}\left( \mbox{\em tr}(A^*B) + \mbox{\em tr}(A^*)\mbox{\em tr}(B)\right)\:,\label{two}
\eeq
which inverts as: 
\beq
\mbox{\em tr}(A^*B) =  n(n+1)\int_{\sP(\sH_n)} \sp\sp\sp\overline{F_A} F_B d\mu_n  - n^2 \int_{\sP(\sH_n)}\overline{F_A} d\mu_n 
\int_{\sP(\sH_n)}  F_B d\mu_n \:.\label{three}
\eeq
\end{theorem}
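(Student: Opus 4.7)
The plan is to treat each integral as a linear or bilinear functional of the matrix arguments, use $U(n)$-invariance to pin down its algebraic form, and normalize by evaluating on a distinguished operator.

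For (\ref{one}), I define $L:\gB(\sH_n)\to\bC$ by $L(A):=\int_{\sP(\sH_n)}F_A\,d\mu_n$. Since $L$ is $\bC$-linear and the cyclicity of trace gives $F_A(\Phi_U(p))=\mbox{tr}(UpU^{-1}A)=\mbox{tr}(pU^{-1}AU)=F_{U^{-1}AU}(p)$, the $U(n)$-invariance of $\mu_n$ from Proposition \ref{propmu}(a) yields $L(U^{-1}AU)=L(A)$ for every $U\in U(n)$. A standard fact in finite dimensions says that every linear functional on $M_n(\bC)$ invariant under the adjoint action is a scalar multiple of the trace; hence $L=c\,\mbox{tr}$. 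Setting $A=I$ and using $F_I(p)=\mbox{tr}(p)=1$ together with $\mu_n(\sP(\sH_n))=1$ yields $1=cn$, so $c=1/n$. The identification of $\mbox{tr}(A)$ with $W_{F_A}$ is provided by Theorem \ref{teoDV2}(a).

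For (\ref{two}), the cleanest route is to introduce the \emph{second moment operator}
\[
T:=\int_{\sP(\sH_n)} p\otimes p\,d\mu_n(p)\;\in\;\gB(\sH_n\otimes\sH_n),
\]
which is well defined because the integrand is a continuous function of $p$ on a compact space. Using $\overline{F_A}=F_{A^*}$ from Theorem \ref{teoDV2}(c)(ii), I rewrite the pointwise product as $\overline{F_A}(p)F_B(p)=\mbox{tr}(pA^*)\mbox{tr}(pB)=\mbox{tr}\bigl((p\otimes p)(A^*\otimes B)\bigr)$, so that
\[
\int_{\sP(\sH_n)}\overline{F_A}F_B\,d\mu_n=\mbox{tr}\bigl(T(A^*\otimes B)\bigr).
\]
The $U(n)$-invariance of $\mu_n$ forces $(U\otimes U)\,T\,(U^{-1}\otimes U^{-1})=T$ for all $U\in U(n)$, and by the standard Schur--Weyl description of the commutant of the diagonal $U(n)$-action on $\sH_n\otimes\sH_n$ one has $T=\alpha I+\beta P$, where $P$ is the flip operator on $\sH_n\otimes \sH_n$. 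The additional symmetry $P(p\otimes p)=p\otimes p$ forces $PT=T$, hence $\alpha=\beta$, and the total-trace normalization $\mbox{tr}(T)=\int(\mbox{tr}\,p)^2\,d\mu_n=1$ combined with $\mbox{tr}(\alpha I+\beta P)=\alpha n^2+\beta n$ pins down $\alpha=\beta=\tfrac{1}{n(n+1)}$. Using $\mbox{tr}(A^*\otimes B)=\mbox{tr}(A^*)\mbox{tr}(B)$ and $\mbox{tr}(P(A^*\otimes B))=\mbox{tr}(A^*B)$ then yields (\ref{two}) directly.

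Formula (\ref{three}) is a purely algebraic inversion: solve (\ref{two}) for $\mbox{tr}(A^*B)$ and replace $\mbox{tr}(A^*)\mbox{tr}(B)$ with $n^2\int\overline{F_A}\,d\mu_n\int F_B\,d\mu_n$ using (\ref{one}) applied to $A^*$ and to $B$ separately. The only nontrivial input, and thus the main potential obstacle, is the Schur--Weyl identification of $T$ as a combination of $I$ and $P$; if one prefers to avoid representation theory altogether, the two coefficients of $T$ can be fixed equivalently by directly computing only two scalar integrals (for instance $\int\langle e_i|p|e_i\rangle\langle e_j|p|e_j\rangle\,d\mu_n$ for $i=j$ and for $i\neq j$), which by $U(n)$-invariance reduce the problem to a $2\times 2$ linear system with the same solution.
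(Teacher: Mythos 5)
Your proof is correct, and for the central identity (\ref{two}) it takes a genuinely different route from the paper. The paper first reduces to self-adjoint $A,B$ by splitting into self-adjoint and anti-self-adjoint parts, invokes polarization to reduce (\ref{two}) to the single ``norm'' identity $\int F_A^2\,d\mu_n = \frac{1}{n(n+1)}(\mbox{tr}(A^2)+\mbox{tr}(A)^2)$, and then computes the two unknown moments $a=\int|\psi_i|^4 d\nu_n$ and $b=\int|\psi_i|^2|\psi_j|^2 d\nu_n$ ($i\neq j$) by evaluating at $A=I$ and by summing over a traceless Hilbert--Schmidt-orthonormal completion of $I/\sqrt{n}$ in $iu(n)$. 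You instead package everything into the second-moment operator $T=\int p\otimes p\,d\mu_n$ and identify it via the commutant of the diagonal $U(n)$-action as $\alpha I+\beta P$, with $\alpha=\beta$ forced by $PT=T$ and the normalization by $\mbox{tr}(T)=1$; this handles non-self-adjoint $A,B$ directly, dispenses with the polarization step, and exposes the structural reason for the $\frac{1}{n(n+1)}$ (namely $T$ is the normalized projector onto the symmetric subspace), at the cost of importing the Schur--Weyl description of the commutant. One small point worth making explicit: the identity $P(p\otimes p)=p\otimes p$ as an operator equation uses that $p$ is a rank-one projector (for a general operator one only gets $P(p\otimes p)P=p\otimes p$); since here $p=\psi\langle\psi|\cdot\rangle$, the claim is correct. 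Your closing remark that the two coefficients could instead be fixed by computing $\int\langle e_i|p\,e_i\rangle\langle e_j|p\,e_j\rangle\,d\mu_n$ for $i=j$ and $i\neq j$ is in fact precisely the computation the paper carries out, so the two arguments are linked exactly where you indicate.
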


\begin{proof}  The second identity in  (\ref{one}) is immediate.
(\ref{three}) arises form (\ref{two}) and 
(\ref{one}) straightforwardly, so we have to prove (\ref{one}) and (\ref{two}) only. Actually (\ref{one}) follows from (\ref{two}) swapping $A$
and $B$ and taking $B=I$. 
Therefore we have to establish (\ref{two}) to conclude.
To this end, we notice that
(\ref{two}) holds true for generic $A,B \in \gB(\sH_n)$ if it is valid for $A$ and $B$ self-adjoint. This result  arises  decomposing $A$ and $B$ in self-adjoint and anti self adjoint part
and exploiting linearity in various points. Therefore 
it is enough proving (\ref{two}) for $A$ and $B$ self-adjoint. Next we observe that, if as before $iu(n)$ is the real vector space of self-adjoint operators:
$iu(n) \times iu(n)  \ni (A,B) \mapsto
(n(n+1))^{-1}\left( \mbox{tr}(AB) + \mbox{\em tr}(A)\mbox{tr}(B)\right)$
is a real scalar product. Similarly, the left-hand side of (\ref{two}), restricted to the real vector space of real frame functions is a real scalar product.  Taking advantage of the  polarization identity, 
we conclude that (\ref{two}) holds when it does for the corresponding norms on the considered real vector spaces:
\beq
\int_{\sP(\sH_n)} F_A^2  d\mu_n   = \frac{1}{n(n+1)}\left( \mbox{tr}(AA) + \mbox{tr}(A)^2\right) \quad \mbox{for $A\in iu(n)$}\:.\label{two2}
\eeq
Let us establish (\ref{two2})  to conclude. We pass from the integration over  $\sP(\sH_n)$ to that over $\bS(\sH_n)$ just replacing $\mu_n$ for $\nu_n$.
 If $\{e_j\}_{j=1,\ldots, n}$ is a Hilbertian basis of $\sH_n$ made of eigenvectors of $A$
such that $Ae_k = \lambda_k e_k$, we can decompose $\psi \in \bS(\sH_n)$
as follows $\psi = \sum_{j} \psi_j e_j$ so that:
$$\int_{\bS(\sH_n)} F_A^2  d\nu_n   = \sum_{i}^n\lambda_i^2 \int_{\bS(\sH_n)}|\psi_i|^4 d\nu_n +
\sum_{i \neq j}^n\lambda_i \lambda_j \int_{\bS(\sH_n)}|\psi_i|^2 |\psi_j|^2 d\nu_n\:. $$
In view of  the $U(n)$ invariance of $\nu_n$ and and the transitive  action of $U(n)$ on $\bS(\sH_n)$,
we conclude that: 
$\int_{\bS(\sH_n)}|\psi_i|^4 d\nu_n(\psi) = a$,
where $a$ does not depend on $i$, on the used Hilbertian basis, and on $A$.
If $\psi, \phi \in \bS(\sH_n)$ are a pair of vectors satisfying $\psi \perp \phi$,
for every choice of $i,j = 1,\ldots,n$ with $i\neq j$, there exist $U_{i,j} \in U(n)$
such that, both verifies $U_{i,j} e_i = \psi$ and $U_{i,j} e_j = \phi$.
The invariance of $\nu_n$ under $U(n)$ thus proves that, for $i \neq j$:
$\int_{\bS(\sH_n)}|\psi_i|^2 |\psi_j|^2d\nu_n(\psi) = b$
where $b$ does not depend on $A$, on the used Hilbertian basis and on the couple 
$i,j = 1,\ldots,n$ provided $i \neq j$.
Summing up:
$$\int_{\bS(\sH_n)} F_A^2  d\nu_n = a\: \mbox{tr}(A^2)  +
b\sum_{i \neq j}^n\lambda_i \lambda_j=  \int_{\bS(\sH_n)} F_A^2  d\nu_n = a\: \mbox{tr}(A^2)  +
b\sum_{i, j}^n\lambda_i \lambda_j - b\: \mbox{tr}(A^2)\:. $$
That is, redefining $d:= a-b$:
\beq \int_{\sP(\sH_n)} F_A^2  d\mu_n = d \:\mbox{tr}(A^2) + b \:(\mbox{tr}(A))^2\:.\label{quasi}\eeq
To determine the constants $d$ and $b$ we first choose $A=I$ obtaining:
$1  = dn + bn^2$.
To grasp another condition,  consider the real vector space of self-adjoint operators $iu(n)$ and 
complete $\frac{I}{\sqrt{n}}$ to a Hilbert-Schmidt-orthonormal basis of $iu(n)$ by adding  self-adjoint operators $T_1, T_2,\ldots, T_{n^2-1}$. Notice that 
$(I|T_k)_2=0$ means $\mbox{tr}(T_k) =0$. Thus, if $p\in \sP(\sH_n)$:
$p = \frac{I}{n} + \sum_k p_k T_k$ with $p_k = \mbox{tr}(pT_k)\in \bR$. The condition $\mbox{tr} (p^2)= 1$ ((2) in proposition \ref{teozero}) is equivalent to $ \sum_k p^2_k =1 - \frac{1}{n}$,
so that
$$\sum_{k=1}^{n^2-1} \int_{\sP(\sH_n)} F_{T_k}(p)^2 d\mu_n(p)
= \int_{\sP(\sH_n)} \sum_{i=1}^{n^2-1}  p_k^2 d\mu_n(p)
=   \left(1- \frac{1}{n} \right)   \int_{\sP(\sH_n)} d\mu_n(p) =  \left(1- \frac{1}{n} \right) \:.$$
 Inserting this result in the left-hand side of (\ref{quasi}):
$$\left(1- \frac{1}{n} \right) = \sum_{i=1}^{n^2-1} d \mbox{tr}(T_iT_i) + b \sum_{i=1}^{n^2-1}(\mbox{tr}(T_i))^2\:,\quad\mbox{i.e.}\quad
\left(1- \frac{1}{n} \right) = \sum_{i=1}^{n^2-1} d  + b \sum_{i=1}^{n^2-1}(0)^2\:.$$
Summing up, we have the pair of equations for $b$ and $d$:
$1-1/n = d (n^2-1)$ and $1  = dn + bn^2$
with solution
$d= b = (n(n+1))^{-1}$ that, inserted in (\ref{quasi}), yields (\ref{two2}).
\end{proof}

\section{Observables and states  in terms of frame functions}\label{sectOFF}

We  know that, assuming   a quantum observable $A \in iu(n)$ 
be associated with a classical-like one $ f_A = F_{\kappa A  + c\: \mbox{tr}(A) I}$  defined in (\ref{CQ}), then  theorem \ref{teoQC} is true independently 
from  the values of the constants $\kappa>0$ and $c\in \bR$.
The first problem we wish to tackle is to study whether there are other possibilities 
to associate quantum observables to classical-like observables preserving the validity  
 theorem \ref{teoQC}.  The second problem we will consider is twofold. On the one hand we want to study if it is possible to associate quantum states $\sigma$
with corresponding classical-like Liouville densities $\rho_\sigma$ in order to satisfy (\ref{trAs0}), possibly with $\rho_\sigma\geq 0$. In this juncture, 
the notion of frame function and
the content of Sect.\ref{Sm} 
 will play a crucial r\^ole.
On the other hand we intend to investigate if all these requirements  give rise to constraints on the values of $\kappa$ and $c$.

\subsection{Observables and states}
In the following we  focus on two maps respectively associating observables with  functions $f_A: \sP(\sH_n) \to \bR$, the {\bf inverse quantization map}:
\beq
{\cal O} : iu(n)  \ni A \mapsto  f_A  \label{cO}\:,
\eeq
and  associating states $\sigma $ with functions $\rho_\sigma:\sP(\sH_n) \to \bR$:
\beq
{\cal S}: \sS(\sH_n) \ni \sigma \mapsto \rho_\sigma\:. \label{cS}
\eeq
 What we intent to do is fixing ${\cal O}$  and ${\cal S}$  by requiring some physically natural constraints, most arising from the thesis of theorem \ref{teoQC} concerning ${\cal O}$ and from the discussion in Sect. \ref{secexp} regarding $\cal S$. We assume the almost K\"ahler  structure $(\omega, g, j)$ on $\sP(\sH_n)$ as in Sect.\ref{secH}, with the constant $\kappa>0$
fixed arbitrarily.\\

\noindent {\bf Requirements on ${\cal O}: iu(n) \ni  A \mapsto  f_A$, with $f_A: \sP(\sH_n) \to \bR$.}\\
{\bf (O1)}  $\cal O$ is injective.\\
{\bf (O2)}  $\cal O$ is $\bR$-linear.\\
{\bf(O3)} If $H\in iu(n)$, then $f_H$ is $C^1$ so that $X_{f_H}$  can be defined. 
A curve $p=p(t) \in \sP(\sH_n)$, $t\in (a,b)$,
satisfies  Hamilton's equation if and only if it satisfies  Schr\"odinger's one:
$$\frac{dp}{dt} = X_{f_H}(p(t)) \quad  \mbox{for}\:\: t \in (a,b) \quad\mbox{is equivalent to}\quad
\frac{dp}{dt} = -i[H, p(t)] \quad    \mbox{for}\:\:  t \in (a,b)\:.$$
{\bf(O4)} $\cal O$ is $U(n)$-covariant.\\
{\bf(O5)} If $A \in iu(n)$ then:   $\min \mbox{sp}(A) \leq f_A(p) \leq \max \mbox{sp}(A) $ for $ p\in \sP(\sH_n)$.\\

 \noindent  
The hypothesis   (O1) simply says that the map $\cal O$ produces a faithful image of the set of quantum observables in terms of classical ones. 
Next  (O2) establishes that ${\cal O}$ also  preserves the elementary structure of the real vector space enjoyed by the set of  quantum observables.
The requirement (O3)  is a key-requirement, it just concerns the interplay of quantum dynamics and Hamiltonian dynamics that we already know to hold when $f_A$ takes the form (\ref{CQ}) in the hypotheses of therem \ref{teoQC}. The requirement (O4), that we know to holds at least when $f_A$ takes the form (\ref{CQ}) in the hypotheses of theorem \ref{teoQC},
is a natural covariance requirement, since the action of $U(n)$ has both classical and quantum significance. 
The requirement (O5) focuses on the values of the observables. It is maybe the most elementary possible  relation between the  values of $A$, the elements of the spectrum, and those of 
 $f_A$, the points in the range. However,  there is no unique way to compare  a continuous set of values with a  discrete one. \\
It is worth stressing that  $f_A$ must be a frame function, $F_{A'}$, in view of (O2) (extended by $\bC$-linearity) and (O4) as a straightforward application of (b) in theorem \ref{newtheorem}. However, it is not so obvious, exploiting  (O2) and (O4) only, to determine the form of the operator $A'$ in terms of $A$ itself.\\

\noindent {\bf Requirements on ${\cal S}: \sS(\sH_n) \ni \sigma \mapsto \rho_\sigma$ with $\rho_\sigma: \sP(\sH_n) \to \bR$.}\\
{\bf(S1)}   If $\sigma \in \sS(\sH_n) $, then  $\rho_\sigma(p) \geq 0$ for $p\in \sP(\sH_n)$.\\
{\bf(S2)}   ${\cal S}$ is convex linear.\\
{\bf(S3)}   With   $\mu_n$  as in theorem \ref{propmu}, if
$\sigma \in \sS(\sH_n) $, then  $\rho_\sigma \in \cL^2(\sP(\sH_n), \mu_n)$ (so that 
 $\rho_\sigma \in \cL^1(\sP(\sH_n), \mu_n)$) and $$ \int_{\sP(\sH_n)} \rho_\sigma d\mu_n =1\:.$$
%{\bf(S4)} If $\sigma = \frac{1}{n}I$ (the completely unpolarised state)  then $\rho_\sigma(p) =1$ for every  $p\in \sP(\sH_n)$.\\
{\bf(S4)}  $\cal S$ is $U(n)$-covariant.\\
{\bf(S5)} If $\sigma \in \sS(\sH_n) $ and $A \in iu(n)$ then, assuming $f_A\in \cL^2(\sP(\sH_n), \mu_n)$:
$$\mbox{tr}(\sigma A) =  \int_{\sP(\sH_n)} \rho_\sigma f_A d\mu_n\:.$$

\noindent  If we intend to describe quantum expectation values  in terms of classical-like expectation values, the compulsory requirements should be  (S1),(S3) and (S5). 
 The hypotheses   (S2) focuses on the natural  convex structure  of the space of the quantum states requiring  that it is  translated into the  analogue structure for the associated classical-like states. 
(S4) implies in particular  that the  Hamiltonian evolution of $\rho_\sigma$ is equivalent 
to the Schr\"odinger evolution of $\sigma$.
The requirement $f_A\in \cL^2(\sP(\sH_n), \mu_n)$ in (S5) is verified if (O3) holds since, in that case, $|f_A|^2$ is continuous and thus bounded on the compact $\sP(\sH_n)$ and $\mu_n(\sP(\sH_n))<+\infty$. We notice that the map ${\cal S}$ it is not required to be injective, so giving rise to a faithful representation of quantum states in terms of classical-like states. Indeed, we will obtain this result as a consequence of (S2)-(S5). \\

\noindent We are a position to state two of the main results of this paper. 
In particular we prove that, for $n>2$, the densities $\rho_\sigma$ representing quantum states must be frame functions. Beforehand, we  establish that even the classical-like observables $f_A$ are frame functions,  just with  the form (\ref{CQ}) 
and that they exhaust  the whole space $\cF^2(\sH_n)$.

\begin{theorem}\label{main1bis} Consider a quantum system described on $\sH_n$ with $n> 1$. Assume the almost K\"ahler structure $(\omega, g, j)$ on $\sP(\sH_n)$ as in Sect.\ref{secH}, with the constant $\kappa>0$
fixed arbitrarily. The following facts hold concerning the map ${\cal O}: iu(n) \ni A \mapsto f_A$.\\
{\bf (a)} The requirements (O1)-(O4)  are valid if and only if both $\cal O$ has the form  (\ref{CQ}) for some constant $c\in \bR$ (so that, in particular, theorem  \ref{teoQC} holds) and $\kappa +nc \neq 0$.\\
{\bf  (b)}   If the requirements (O1)-(O4)  are valid, then 
 $\cal O$ extends to the whole $\gB(\sH)$ by complex-linearity giving rise to an  injective map that, if $n>2$, satisfies  ${\cal O}(\gB(\sH_n)) =  \cF^2(\sH_n)$.
\end{theorem}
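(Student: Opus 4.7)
\noindent\textbf{Proof plan for Theorem \ref{main1bis}.}

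The plan is to establish (a) as a biconditional, with the easier direction being that the explicit form (\ref{CQ}) (together with $\kappa+nc\neq 0$) satisfies (O1)--(O4), which is read off from Theorem \ref{teoQC} except for injectivity. For (O1), I would argue that $f_A\equiv 0$ forces $\mbox{tr}(pA)$ to be constant in $p\in\sP(\sH_n)$, hence $A=\lambda I$; then $f_A=(\kappa+nc)\lambda$, and injectivity is equivalent to $\kappa+nc\neq 0$. The bulk of the proof is the nontrivial direction: assuming (O1)--(O4), recover the form and the trace condition.

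Since $\cal O$ is $\bR$-linear, I would first extend it $\bC$-linearly to a map ${\cal O}_{\bC}:\gB(\sH_n)\to \cL^2(\sP(\sH_n),\mu_n)$ (the codomain being legitimate because (O3) gives $f_H\in C^1\subset \cL^2$ on the compact $\sP(\sH_n)$, and $\gB(\sH_n)$ is $\bC$-spanned by $iu(n)$). Restricted to $\sS(\sH_n)$, $\bR$-linearity becomes convex-linearity, so Theorem \ref{newtheorem}(a) applies; by $\bR$-linear extension to $iu(n)$ and then (b) of the same theorem, ${\cal O}_\bC$ lands in $\cF^2(\sH_n)$. Theorem \ref{teoDV2}(c) then provides a unique $\bC$-linear $T:\gB(\sH_n)\to \gB(\sH_n)$ with $f_A=F_{T(A)}$, i.e.\ $f_A(p)=\mbox{tr}(p\,T(A))$. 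Translating the covariance hypothesis (O4) through the identification $M$ of Theorem \ref{teoDV2}(c) yields $T(U^{-1}AU)=U^{-1}T(A)U$ for all $U\in U(n)$, i.e.\ $T$ commutes with the adjoint action.

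At this point I invoke a Schur-type argument on the decomposition $\gB(\sH_n)=\bC I \oplus \mathfrak{sl}_n(\bC)$ into the trivial representation and the (complexified) adjoint representation of $U(n)$, both irreducible under conjugation. Any $\bC$-linear intertwiner preserves each summand and acts by scalars there, giving $T(A)=\kappa' A+c'\,\mbox{tr}(A)\,I$ for some constants; the requirement that $T$ preserve $iu(n)$ (since $f_A$ must be real on $iu(n)$) forces $\kappa',c'\in\bR$. To determine $\kappa'$ I use (O3): since $c'\,\mbox{tr}(H)$ is constant on $\sP(\sH_n)$ it drops out of $df_H$, and by the linearity of $X_{(\cdot)}$ together with formula (\ref{XfA}) of Theorem \ref{teoQC}, $X_{f_H}(p)=-i(\kappa'/\kappa)[H,p]$; comparison with Schr\"odinger's equation forces $\kappa'=\kappa$. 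Then (O1), by the trace argument above applied to $A=I$, is equivalent to $\kappa+nc'\neq 0$, completing (a) with $c=c'$. The main obstacle is pinning down $T$, but the Schur decomposition handles it cleanly once the frame function reduction is in place.

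For (b), the same $\bC$-linear extension ${\cal O}_\bC(M)=F_{\kappa M+c\,\mbox{tr}(M)I}$ is already constructed. Injectivity is immediate: if $\kappa M+c\,\mbox{tr}(M)I=0$, taking the trace yields $(\kappa+nc)\mbox{tr}(M)=0$, hence $\mbox{tr}(M)=0$ and then $M=0$. For $n>2$, surjectivity onto $\cF^2(\sH_n)$ follows from Theorem \ref{teoDV2}(b): given $F=F_B\in\cF^2(\sH_n)$, solve $\kappa M+c\,\mbox{tr}(M)I=B$ by first taking traces to get $\mbox{tr}(M)=\mbox{tr}(B)/(\kappa+nc)$ (well-defined by part (a)) and then recovering $M$ by scaling. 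Thus ${\cal O}_\bC$ is a $\bC$-linear bijection onto $\cF^2(\sH_n)$, as required.
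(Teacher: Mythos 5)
Your argument for the hard direction of (a) takes a genuinely different route from the paper's: you use (O2)+(O4) to force the image of $\cal O$ into $\cF^2(\sH_n)$ via Theorem \ref{newtheorem}, identify $f_A=F_{T(A)}$ via Theorem \ref{teoDV2}, and classify the intertwiner $T$ by Schur's lemma on $\gB(\sH_n)=\bC I\oplus\mathfrak{sl}_n(\bC)$, invoking (O3) only at the end to fix $\kappa'=\kappa$. The paper instead makes (O3) the engine: the equivalence of Hamilton's and Schr\"odinger's equations forces $X_{f_H}(p)=-i[H,p]$, hence $\langle df_H,-i[p,B]\rangle=\kappa\,\mbox{tr}(H(-i[p,B]))$; integrating along curves in the connected manifold $\sP(\sH_n)$ gives $f_H(p)=\kappa\,\mbox{tr}(Hp)+C_H$ directly, after which (O2) plus Riesz's theorem yield $C_H=\mbox{tr}(BH)$ and (O4) plus transitivity of $U(n)$ on $\bS(\sH_n)$ force $B=cI$. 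For $n>2$ your representation-theoretic route is correct and arguably cleaner in how it separates the roles of the hypotheses; your treatment of (O1), of $\kappa'=\kappa$, and of part (b) is sound (though when you cite formula (\ref{XfA}) you should recompute $X_{F_H}=-\frac{i}{\kappa}[H,p]$ explicitly, since (\ref{XfA}) is proved only for functions of the form (\ref{CQ}) with the same $\kappa$ as in $\omega$, and note that $[H,p]\neq0$ for some $p$ whenever $H\notin\bR I$).

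The genuine gap is the case $n=2$. Part (a) is asserted for all $n>1$, but both Theorem \ref{newtheorem} and part (b) of Theorem \ref{teoDV2} are stated, and are only true, for $n>2$: the paper exhibits an explicit smooth frame function on $\sP(\sH_2)$ that is not of the form $F_A$ for any $A\in\gB(\sH_2)$, so for $n=2$ the chain ``convex-linear $+$ covariant $\Rightarrow$ frame function $\Rightarrow$ $f_A=F_{T(A)}$'' breaks, and you never obtain the operator $T$ on which the entire Schur argument rests. Since the counterexample is smooth, the regularity supplied by (O3) does not rescue the identification either. To cover $n=2$ you must let (O3) determine $df_H$ from the outset and integrate, as the paper does; that argument is dimension-independent and makes the frame-function machinery unnecessary for part (a).
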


\begin{proof}
(a) If $\cal O$ has the form  (\ref{CQ}) then (O2)-(O4)  are valid. Let us prove the converse.
Assuming (O3), from the definition of Hamiltonian field,  it must be $\omega_p(X_{f_H}, u_A) = \langle df_H, u_A\rangle$,
for every $H\in iu(n)$, $p\in \sP(\sH_n)$ and $u_B = -i[p, B] \in T_p\sP(\sH_n)$.
The definition of $\omega$ and some elementary computations permit to re-write the identity above as
$\langle df_H, -i[p,B]\rangle =\kappa \mbox{tr}(H(-i[p,B]))$. Consider a smooth curve $q=q(s)$  in $\sP(\sH_n)$ such that $q(s_0)=p$ and $\dot{q}(s_0) = -i[p, B]$. The identity above, taking advantage of the linearity of the trace, entails:
$$\frac{d}{ds} f_H(q(s))|_{s=s_0} = \frac{d}{ds} \kappa \mbox{tr}(Hq(s))|_{s=s_0}=
 \kappa \mbox{tr}\left(H \frac{d q}{ds}|_{s=s_0}\right)\:.$$
Since $s_0$ is arbitray, we have found that:
$$\frac{d}{ds} f_H(q(s)) = 
 \kappa \mbox{tr}\left(H \frac{d q}{ds}\right)\:.$$
Integrating in $s$ and swapping the integral with the symbol of trace by linarity, we finally obtain
$f_H(p) = \kappa \mbox{tr}(Hp) + C_H$,  where $p\in\sP(\sH_n)$ is arbitrary.
 The map $H \mapsto C_H = f_H(p) -\kappa \mbox{tr}(Hp)$ must be linear for (O2).
By Riesz' theorem, referring to the Hilbert-Schmidt (real) scalar product we have that there exists $B\in iu(n)$ such that $C_H = \mbox{tr}(BH)$ for all $H\in iu(n)$. (O4)
easily implies that $\mbox{tr}(BUHU^{-1})= \mbox{tr}(BH)$ for al $U\in U(n)$ and $H\in iu(n)$.
Choosing $H= \psi \langle \psi| \cdot \rangle$ with $\psi \in \bS(\sH_n)$ and noticing that $U(n)$ acts transitively on $\bS(\sH_n)$ we conclude that 
$\langle \psi| B \psi \rangle = c$ for some constant $c\in \bR$ and all $\psi \in \bS(\sH_n)$. From polarization identity it easily implies that $B= cI$, so that $f_A= \kappa \mbox{tr}(pA) + c \mbox{tr}(A)$ for all $A \in iu(n)$ as requested.  Let us prove that $\cal O$ is injective if and only if (O1) holds.  Exploiting $\kappa \neq 0$,   and dealing with as done above, one easily sees that $f_A=0$ is equivalent to $A = - c\kappa^{-1} \mbox{tr}(A)  I$. Computing the trace of both sides one immediately sees that this equation has $A=0$ as the unique solution if 
$1+ nc/\kappa \neq 0$, namely   $\kappa + nc \neq 0$. Conversely, if $A + c\kappa^{-1} \mbox{tr}(A)  I =0$ has $A=0$ as unique solution, $I + c\kappa^{-1} \mbox{tr}(I)  I \neq 0$, namely $\kappa + nc \neq 0$.\\
(b)  Assuming (O1)-(O4) and  extending 
$\cal O$ by complex linearity, exactly as before, $\cal O$ turns out to be injective. If $n>2$, the elements of $\cF^2(\sH_n)$ are of the form $F_B$
for every $B\in \gB(\sH_n)$ ((b) thm \ref{teoDV2}). For a fixed $B$, ${\cal O}(A) = F_B$, if $A := \kappa^{-1} B - c [\kappa(\kappa + nc)]^{-1} \mbox{tr}(B) I$,
so that ${\cal O}$ is onto $\cF^2(\sH_n)$.
\end{proof}

\noindent Let us come  to the states, proving that $\rho_\sigma$ is a frame function as well.

\begin{theorem} \label{main2} Consider a quantum system described on $\sH_n$ with $n> 2$. Assume the almost K\"ahler structure $(\omega, g, j)$ on $\sP(\sH_n)$ as in Sect.\ref{secH}, with the constant $\kappa>0$
fixed arbitrarily and the map ${\cal O}: iu(n) \ni A \mapsto f_A$ of the form (\ref{CQ})
for some constant $c\in \bR$. The following fact hold.\\
\noindent {\bf (a)} The requirements  (S2)-(S5) are valid  if and only if both  in the definition  (\ref{CQ}) of ${\cal O}$:
\beq
\kappa  = \kappa \:, \quad   c= c_\kappa\label{A}
\eeq
and ${\cal S}$ associates states $\sigma$ with frame functions of the form:
\beq \rho_\sigma(p) &:=& \kappa'_\kappa  \mbox{\em tr}(\sigma p) + c'_\kappa \label{B}\eeq
where 
\begin{eqnarray}
 c_\kappa &:=&  \frac{1-\kappa}{n} \:,\label{A2}\\
\kappa'_\kappa &:=& \frac{n(n+1)}{\kappa}\:, \quad   c'_\kappa:= \frac{\kappa -(n+1)}{\kappa}  \label{B2}\:.
\end{eqnarray}
The following consequent identities hold
\beq
\kappa + n c_\kappa =1 \quad \mbox{and}\quad  \kappa'_\kappa + n c'_\kappa =n \label{ckappa}\:,
\eeq
and $\cal O$ is injective  due to the first identity above.\\
{\bf  (b)}  If   (S2)-(S5) are true, also ${\cal S}$ is injective.
\end{theorem}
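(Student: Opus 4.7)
\noindent\textbf{Proof plan for Theorem \ref{main2}.} The strategy is to first pin down the general form of $\rho_\sigma$ using only the structural hypotheses (S2)--(S4), and then fix all remaining constants by combining the normalization (S3) with the expectation--value identity (S5), viewed through the trace--integral formulas of Theorem \ref{teoFA}.

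\emph{Step 1: $\rho_\sigma$ is a frame function with universal functional form.} Hypotheses (S2), (S3), (S4) are precisely the assumptions of Theorem \ref{newtheorem}(a), so ${\cal S}(\sS(\sH_n)) \subset \cF^2(\sH_n)$ and Theorem \ref{teoDV2}(b)(c) gives a unique $B(\sigma) \in iu(n)$ (self-adjointness follows from $\rho_\sigma$ being real and (ii) in \ref{teoDV2}(c)) with $\rho_\sigma = F_{B(\sigma)}$. Injectivity of $M$ transfers the convex linearity (S2) to $B$, and a direct computation from (S4) yields the equivariance $U^{-1} B(\sigma) U = B(U^{-1}\sigma U)$ for every $U\in U(n)$. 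Fix a pure state $p_0 = |e\rangle\langle e|$ and let $U(1)\times U(n-1)$ be its stabilizer (acting by $U(1)$ on $\mathrm{Ran}(p_0)$ and by $U(n-1)$ on $\mathrm{Ran}(p_0)^\perp$). The equivariance forces $B(p_0)$ to commute with every element of the stabilizer, and Schur's lemma applied to the two irreducible subspaces yields $B(p_0)=\alpha p_0 + \beta(I-p_0)$ for real scalars $\alpha,\beta$. Transitivity of the $U(n)$-action on $\sS_p(\sH_n)$ together with equivariance shows $\alpha,\beta$ do not depend on $p_0$. Extending via spectral decomposition and convex linearity, I obtain $B(\sigma) = \kappa'\sigma + c'I$ for \emph{universal} constants $\kappa',c' \in \bR$, yielding the form (\ref{B}) modulo the identification of $\kappa',c'$ with $\kappa'_\kappa,c'_\kappa$.

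\emph{Step 2: Identification of the constants.} Applying (S3) and (\ref{one}) of Theorem \ref{teoFA} gives
$1 = \int_{\sP(\sH_n)} F_{\kappa'\sigma+c'I}\,d\mu_n = \frac{\kappa'+nc'}{n},$
hence $\kappa'+nc'=n$, which is the second identity in (\ref{ckappa}). For (S5), I substitute $\rho_\sigma = F_{\kappa'\sigma + c'I}$ and $f_A = F_{\kappa A + c\,\mathrm{tr}(A) I}$ into $\mathrm{tr}(\sigma A) = \int \rho_\sigma f_A \, d\mu_n$ and expand using (\ref{two}). The right-hand side becomes an expression of the form $\frac{\kappa'\kappa}{n(n+1)}\mathrm{tr}(\sigma A) + G(A)$, where $G(A)$ depends only on $A$ (not on $\sigma$). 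Because $\mathrm{tr}(\sigma A)$ genuinely depends on $\sigma$ for non-scalar $A$, matching coefficients forces $\frac{\kappa'\kappa}{n(n+1)}=1$, i.e.\ $\kappa' = \frac{n(n+1)}{\kappa}=\kappa'_\kappa$, and $G(A)=0$ for every $A\in iu(n)$. Combined with $\kappa'+nc'=n$ this yields $c'=\frac{\kappa-(n+1)}{\kappa}=c'_\kappa$. Finally, the vanishing of $G(A)$, after substituting the values just found, simplifies to $\kappa(\kappa+nc-1)=0$, and since $\kappa>0$ this gives $\kappa+nc=1$, i.e.\ $c=\frac{1-\kappa}{n}=c_\kappa$. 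In particular $\kappa+nc_\kappa=1\neq 0$, so by Theorem \ref{main1bis}(a) $\cal O$ is automatically injective. The converse direction is a direct verification: inserting the explicit forms into (S2)--(S5) and using cyclicity of the trace and Theorem \ref{teoFA}.

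\emph{Step 3: Injectivity of $\cal S$ (part (b)).} If $\rho_\sigma = \rho_{\sigma'}$ then $F_{\kappa'_\kappa\sigma + c'_\kappa I} = F_{\kappa'_\kappa \sigma' + c'_\kappa I}$, and injectivity of $M$ from Theorem \ref{teoDV2}(c) together with $\kappa'_\kappa = n(n+1)/\kappa \neq 0$ gives $\sigma = \sigma'$.

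\emph{Expected main obstacle.} The delicate point is Step 1: the Schur-lemma argument that $B(p_0) = \alpha p_0 + \beta(I-p_0)$ with \emph{universal} scalars, and the clean passage from pure states to arbitrary $\sigma$ via convex linearity (which requires careful use of the spectral decomposition and of the injectivity of $M$ to transfer (S2) to $B$). The subsequent identification of the constants is essentially bookkeeping, provided one organizes the computation through the identities of Theorem \ref{teoFA} and exploits that $\mathrm{tr}(\sigma A)$ separates $\sigma$ from $A$-only terms.
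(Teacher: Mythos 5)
Your proposal is correct and reaches the same conclusions, but the middle step is handled by a genuinely different argument than the paper's. Both proofs open identically (Theorem \ref{newtheorem}(a) plus Theorem \ref{teoDV2}(b) give $\rho_\sigma=F_{B(\sigma)}$ for a unique self-adjoint $B(\sigma)$) and close identically (the trace--integral formulas (\ref{one})--(\ref{two}) turn (S3) and (S5) into the linear system fixing $\kappa'_\kappa$, $c'_\kappa$, $c_\kappa$, and injectivity of $\cal S$ follows from $\kappa'_\kappa\neq 0$). Where you diverge is in establishing the affine form $B(\sigma)=\kappa'\sigma+c'I$: you obtain it from (S2) and (S4) alone, by transferring covariance to the equivariance $U^{-1}B(\sigma)U=B(U^{-1}\sigma U)$, applying a commutant/Schur argument to the stabilizer $U(1)\times U(n-1)$ of a pure state to get $B(p_0)=\alpha p_0+\beta(I-p_0)$ with universal constants by transitivity, and then extending to mixed states through the spectral decomposition and convex linearity. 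The paper instead extracts the same form from (S5): evaluating the expectation--value identity on rank-one projectors $A=p$ and using arbitrariness of $p$ plus polarization forces $\sigma-\tfrac{\kappa}{n(n+1)}\sigma'=\beta_\sigma I$. Your route buys a slightly stronger structural statement --- the functional form of $\rho_\sigma$ is already dictated by convex linearity and $U(n)$-covariance, before any compatibility with observables is imposed --- at the cost of a small amount of representation theory; the paper's route is more computational but stays entirely within the trace-identity toolkit it has already set up. Your verification of the constants (in particular that $G(A)=0$ collapses to $\kappa(\kappa+nc-1)=0$) is consistent with the paper's system of three equations, and your Step 3 matches the paper's part (b).
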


\begin{proof} (a) If  (\ref{A}), (\ref{B}),
(\ref{A2}), (\ref{B2}) are valid, one sees that   (S2)-(S5)  hold true. In particular, $\cal O$ is injective 
because  $\kappa + nc_\kappa =1\neq 0$ and (b) of thm \ref{teoFA} holds.
It remains to prove that (S2)-(S5)   are valid, then 
 (\ref{A}), (\ref{B}),
(\ref{A2}), (\ref{B2}) hold.
We start (for $n>2$) by assuming that a map $\cal S$ verifying  (S2)-(S5) 
and $\cal O$ of the form (\ref{CQ}) with $\kappa>0$ and $c\in \bR$. As the first step we prove that $\rho_\sigma$ is a frame function, next we shall establish its  form.
(S2) and (S4), together with (a) in thm \ref{newtheorem} imply that:
$\rho_\sigma(p) = \mbox{tr}(\sigma' p)$ for all $p\in \sP(\sH_n)$,
where (i) of (c) of  thm \ref{teoDV2} entail that 
$\sigma'$ is some self-adjoint operator associated with the given $\sigma$.
Using the fact that the total integral of $\rho_\sigma$ has value $1$ from (S3), taking (\ref{one}) into account, we find $ \mbox{tr} \sigma' = n$.
Finally (S5) together  the form of $\cal O$ require that the following identity holds true for all  self-adjoint $A\in \gB(\sH_n)$ and $\sigma \in \sS(\sH_n)$:
\beq \mbox{tr}(\sigma A) = \int_{\sP(\sH_n)} \mbox{tr}(\sigma'p)\: (\kappa \mbox{tr}(Ap) + c \mbox{tr} A) \: d\mu_n(p)\:. \label{central}\eeq
The right-hand side can be expanded taking (\ref{two}), (\ref{one}) and $\mbox{tr} \sigma' = n$ into account:
$$\mbox{tr}(\sigma A) =
\frac{\kappa}{n(n+1)} \mbox{tr}(\sigma'A)  + \mbox{tr}(A) \left(\frac{\kappa}{n+1}  + c \right)\:.$$
Consequently, for every $A=A^*$:
$$\mbox{tr}\left(\left(\sigma -\frac{\kappa}{n(n+1)} \sigma'\right)A \right) =
 \mbox{tr}(A) \left(\frac{\kappa}{n+1}  + c \right)\:.$$
Choosing $A=p \in \sS_p(\sH_n)$, arbitrariness of $p$ easily  entails that, for some 
$\beta_\sigma \in \bR$:
$$\sigma -\frac{\kappa}{n(n+1)} \sigma' = \beta_\sigma  I\:,$$
namely, for some constants $\kappa'>0$ and $c' \in \bR$:
$$f_\sigma(p) =  \kappa' \mbox{tr} (\sigma p) + c'\:.$$
Inserting again this expression in (\ref{central}) he have:
$$ \mbox{tr}(\sigma A) = \int ( \kappa' \mbox{tr} (\sigma p) + c')\: (\kappa \mbox{tr}(Ap) + c \mbox{tr} A) \: d\mu_n(p)\:.$$
Finally, using again (\ref{one}), (\ref{two}) and $\mu_n(\sP(\sH_n))=1$ we obtain:
$$\left(1- \frac{\kappa \kappa'}{n(n+1)} \right)  \mbox{tr} (\sigma A) +
\mbox{tr}(A) \left(\frac{\kappa \kappa'}{n(n+1)} + cc' + \frac{ck'}{n} + \frac{c'k}{n}\right)=0$$
that has to hold for all $A\in iu(n)$ and $\sigma \in \sS(\sH_n)$
and for some $\kappa,\kappa' >0$ and $c,c' \in \bR$. Arbitrariness of $A$ and $\sigma$ easily lead to the first two requirements of the following triple:
$$1- \frac{\kappa \kappa'}{n(n+1)}  = 0 \:, \quad 
\frac{\kappa \kappa'}{n(n+1)} + cc' + \frac{ck'}{n} + \frac{c'k}{n}=0\:, \quad 
1 = \frac{\kappa'}{n} + c'\:,$$
the  third requirement immediately arises from (S3) using (\ref{one}).
This system can completely  be solved  parametrizing $\kappa, \kappa', c$ in terms of  $c'$ with $c'<1$ in order to verify the requirement $\kappa>0$ in the definition of 
$f_A$.
Finally,  parametrizing the solutions 
in terms of $\kappa$: $\kappa, c_\kappa, k'_\kappa, c'_\kappa$
we have  (\ref{A2}), (\ref{B2}).\\
(b)  If $\rho_\sigma= \rho_{\sigma'}$, exploiting $\kappa'_\kappa \neq 0$, one has $\mbox{tr}((\sigma-\sigma')p)=0$ for every $p\in \sP(\sH_n)$. Namely $\langle \psi | (\sigma-\sigma') \psi \rangle =0$ for every 
$\psi \in \sH_n$. Polarization leads to $\sigma-\sigma'=0$.
\end{proof}

\begin{remark}$\null$\\{\em
{\bf (1)} Assuming (S5), $\rho_\sigma$ can be represented by  a frame function for  expectation values of   observables
$f_A \in \cF^2(\sH_n)$ as an immediate consequence of Riesz' theorem, noticing that $ \cF^2(\sP(\sH))$ is a closed subspace of $L^2(\sP(\sH_n), \mu_n)$.
However this observation nothing says about the nature of $\rho_\sigma$ when  the expectation values are computed for  classical-like observables  $f$ with components  in  $ \cF^2(\sP(\sH))^\perp$.\\
{\bf (2)} The pair of identities (\ref{ckappa}) respectively imply that (1) a quantum observable $A = aI$,
with $a\in \bR$ constant, corresponds to $f_A$ costantly attaining the value $a$ and (2) that the
completely unpolarized state $\sigma = n^{-1}I$ gives rise to the classical trivial Liouville density
$\rho_\sigma = 1$ costantly.}
\end{remark}
 
\subsection{Characterization of 
classical-like observables representing quantum observables}  The proved theorem, as a by product, yields a characterization of 
classical-like observables representing quantum ones when $n>2$. It is well known (see \cite{BH01})  that 
these observables $f$ are exactly  those whose Hamiltonian fields $X_f$ are $g$-Killing fields.  Focusing instead on the relation of observables and states, another characterization is the following.

\begin{proposition}\label{propchat} For $n>2$, let  $\cal O$ and $\cal S$ be as in (a) of theorem \ref{main2}.\\ A map $f:  \sP(\sH_n) \to \bR$ in $\cL^2( \sP(\sH_n), \mu_n)$ verifies  $f={\cal O}(A)$ for some $A \in iu(n)$ if and only if there are constants $a,b \in \bR$ with $a\neq 0$ and
\beq \int_{\sP(\sH_n)} \rho_{p_0}(p) f(p) d\mu_n(p) = af(p_0) +b \quad \mbox{for every $p_0 \in \sS_p(\sH_n)$.} \label{char}\eeq
\end{proposition}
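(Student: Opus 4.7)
The plan is to treat the two directions separately; the forward direction is a direct substitution, while the converse rests on recognizing that the identity (\ref{char}) forces $f$ to be a frame function to which theorem \ref{teoDV2}(b) can be applied.

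For the forward direction, suppose $f = \mathcal{O}(A) = f_A$ with $A \in iu(n)$. Applying property (S5) to the pure state $\sigma = p_0$ gives $\mbox{tr}(p_0 A) = \int_{\sP(\sH_n)} \rho_{p_0}\, f_A\, d\mu_n$. Since $f_A(p_0) = \kappa\,\mbox{tr}(p_0 A) + c_\kappa\,\mbox{tr}(A)$ by (\ref{CQ}), solving for $\mbox{tr}(p_0 A)$ yields (\ref{char}) with $a = 1/\kappa \neq 0$ and $b = -(c_\kappa/\kappa)\,\mbox{tr}(A)$.

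For the converse, assume $f \in \cL^2(\sP(\sH_n),\mu_n)$ is real and satisfies (\ref{char}). Substituting the explicit form $\rho_{p_0}(p) = \kappa'_\kappa\,\mbox{tr}(p_0 p) + c'_\kappa$ from theorem \ref{main2}(a) and introducing the auxiliary function
\[
T(p_0) \;:=\; \int_{\sP(\sH_n)} \mbox{tr}(p_0 p)\, f(p)\, d\mu_n(p),
\]
the hypothesis (\ref{char}) becomes $\kappa'_\kappa T(p_0) + c'_\kappa \int f\,d\mu_n = a\,f(p_0) + b$. Since $0 \leq \mbox{tr}(p_0 p) \leq 1$ and $f \in \cL^1(\mu_n)$ (as $\mu_n(\sP(\sH_n)) = 1$), $T$ is bounded and thus lies in $\cL^2(\sP(\sH_n),\mu_n)$.

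The key observation is that $T$ is a frame function: for any basis $\{p_i\}_{i=1,\ldots,n}$ of $\sP(\sH_n)$, completeness of the corresponding orthonormal basis of $\sH_n$ gives $\sum_{i=1}^n p_i = I$, so that $\sum_i \mbox{tr}(p_i p) = \mbox{tr}(p) = 1$ pointwise on $\sP(\sH_n)$, and hence $\sum_i T(p_i) = \int f\,d\mu_n$ is independent of the basis. Since $a \neq 0$, the displayed relation rearranges to $f = (\kappa'_\kappa/a)\,T + \mathrm{const}$, and since constants are trivially frame functions, $f$ itself is a frame function in $\cL^2$. Theorem \ref{teoDV2}(b), available because $n > 2$, furnishes $B \in \gB(\sH_n)$ with $f = F_B$, and reality of $f$ combined with (c)(ii) of the same theorem forces $B \in iu(n)$. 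Finally, using $\kappa + nc_\kappa = 1 \neq 0$ from (\ref{ckappa}), one checks by direct computation that $f_A = F_B$ for $A := (B - c_\kappa\,\mbox{tr}(B)\,I)/\kappa \in iu(n)$, so that $f = \mathcal{O}(A)$ as claimed. The main conceptual hurdle is recognizing that (\ref{char}) encodes the frame function property through the auxiliary $T$; once this and the completeness identity $\sum_i p_i = I$ are in hand, the machinery of Section \ref{Sm} takes care of the rest.
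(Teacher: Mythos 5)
Your proof is correct and follows essentially the same route as the paper: the forward direction is the same substitution of (S5) with $\sigma=p_0$ into (\ref{CQ}) giving $a=\kappa^{-1}$, $b=-\kappa^{-1}c_\kappa\,\mathrm{tr}(A)$, and your converse (summing over a basis and using $\sum_i p_i = I$ to see that $\sum_i f(p_i)$ is basis-independent, then invoking theorem \ref{teoDV2}(b) and reality) is the same mechanism the paper expresses via convex-linearity of $\cal S$ and $\rho_{n^{-1}I}\equiv 1$.
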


\begin{proof}
If $f=f_A$ one has immediately:
$$\int_{\sP(\sH_n)} \rho_{p_0}(p) f(p) d\mu_n(p) = \mbox{tr}(p_0A) =
 \kappa^{-1}  f(p_0) - \kappa^{-1} c_\kappa \mbox{tr}(A) \quad \mbox{for every $p_0 \in \sS_p(\sH_n)$.}$$ 
Conversely, assume that (\ref{char}) holds for a map $f:  \sP(\sH_n) \to \bR$ in $\cL^2( \sP(\sH_n), \mu_n)$. If $\{p_i\}_{i=1,\ldots, n}$ is a basis of $ \sP(\sH_n)$ one has:
$$n^{-1}a\left(\sum_i f(p_i)\right) + b = \int_{\sP(\sH_n)} \sum_i n^{-1}\rho_{p_i}(p) f(p) d\mu_n(p)
= \int_{\sP(\sH_n)}\rho_{ \sum_i n^{-1} p_i}(p) f(p) d\mu_n(p)$$
$$= \int_{\sP(\sH_n)}\rho_{ n^{-1}I}(p) f(p) d\mu_n(p) = 
 \int_{\sP(\sH_n)} f(p) d\mu_n(p)\:.$$
So that $$\sum_i f(p_i) = \frac{n}{a} \int_{\sP(\sH_n)} f(p) d\mu_n(p) -\frac{n b}{a}$$ that does not depend on the choice of the basis $\{p_i\}_{i=1,\ldots, n}$. In view of (b) in thm \ref{teoDV2}, $f$ is a real frame function. Due to (d) of thm \ref{main2}, $f= {\cal O}(A)$ per some $A\in iu(n)$.
\end{proof}

\noindent With the choice, $\kappa =1$, the proposition above specializes to $a=1$ and $b=0$. This gives rise to a suggestive interpretation of the Liouville densities of pure states:
$$\int_{\sP(\sH_n)} \rho_{p_0}(p) f_A(p) d\mu_n(p) = f_A(p_0)\:.$$
If $\kappa=1$, a map $f:  \sP(\sH_n) \to \bR$ in $\cL^2( \sP(\sH_n), \mu_n)$ can be written as $f=f_A$ for some $A\in iu(n)$ if and only if 
$f$ ``sees'' the density $\rho_{p_0}$ of any pure state $p_0$ as a Dirac delta localized at $p_0$ itself.

\begin{remark}$\null$\\{\em
Since we now have a complete characterization of classical-like observables as scalar functions on quantum phase space regardless of the notion of linear operators, it is convenient stating an inverse formula to associate self-adjoint operators with  functions. Using trace-integral formulas (\ref{one}) and (\ref{two}),  the following formula arises for any $A\in iu(n)$ \cite{D}:
$$A=\int_{\sP(\sH_n)} f_A(p) \gO(p) d\mu_n(p)\:.$$
The operator-valued function $\gO:\sP(\sH_n)\rightarrow \gB(\sH_n)$ is defined as:
\beq\label{re-q}
\gO(p) := \frac{(n+1)}{\kappa}p-\left(\frac{n+1-\kappa}{\kappa\,n}\right )\bI.
\eeq
Therefore, if $f:\sP(\sH_n)\rightarrow \bR$ is a classical-like observable,  the associated self-adjoint operator  is obtained smearing $f$ against  the kernel (\ref{re-q}).
}

\end{remark}

\subsection{Bounds on attained values}  To conclude our analysis, let us focus on the validity of (S1) and (O5).
As stated in the following theorem, they cannot hold simultaneously.

\begin{theorem}\label{teobound}
For $n>1$, with $\cal O$ and $\cal S$ defined in accordance  with  (\ref{A}), (\ref{B}),
(\ref{A2}), (\ref{B2}) for some $\kappa>0$, the following facts are valid.\\
{\bf (a)} (S1) holds if and only if $\kappa \in [n+1, +\infty)$. \\
{\bf (b)}  (O5) holds if  and only if $\kappa \in (0,1]$ whereas,  in the general case 
$\kappa >0$ one has:
\begin{align}
\min f_A  &=  \min \mbox{\em sp}(A) + c_\kappa (\mbox{\em tr}(A) -n  \min \mbox{\em sp}(A))\:,\label{m} \\
\max f_A &=  \max \mbox{\em sp}(A) + c_\kappa (\mbox{\em tr}(A) - n \max \mbox{\em sp}(A))\:, \label{M}
\end{align}
and furthermore,  for $A= iu(n)$:
\begin{eqnarray}
&||f_A||_\infty \leq  (1 + 2n|c_\kappa|) ||A|| \quad &\mbox{if $\kappa \in [n+1, +\infty)$}\label{est2}\:,\\
&||f_A||_\infty \leq  ||A|| \quad &\mbox{if $\kappa \in(0,1]$}\label{est}\:,
\end{eqnarray}
where $\leq$ can be replaced by $=$ if $\kappa=1$.
\end{theorem}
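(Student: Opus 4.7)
The proof is essentially computational, resting on the explicit formulas
$f_A(p) = \kappa\,\mbox{tr}(pA) + c_\kappa\,\mbox{tr}(A)$ and
$\rho_\sigma(p) = \kappa'_\kappa\,\mbox{tr}(\sigma p) + c'_\kappa$
from Theorem \ref{main2}, together with two elementary inputs that I would use throughout. First, for pure $p = \psi\langle\psi|\cdot\rangle \in \sP(\sH_n)$ one has $\mbox{tr}(pA) = \langle\psi|A\psi\rangle$; as $\psi$ ranges over $\bS(\sH_n)$, this quadratic form attains exactly the interval $[\min\mbox{sp}(A),\max\mbox{sp}(A)]$ for $A\in iu(n)$ by the finite-dimensional min-max principle. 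Second, since $n>1$, for every pure state $p$ one can exhibit a pure state $\sigma$ with $\mbox{tr}(\sigma p)=0$, namely $\sigma=\phi\langle\phi|\cdot\rangle$ for any unit vector $\phi$ orthogonal to a representative of $p$.

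For (a), $\kappa'_\kappa = n(n+1)/\kappa > 0$ and $\mbox{tr}(\sigma p)\geq 0$ always (both operators are positive), so $\inf_{\sigma,p}\rho_\sigma(p) = c'_\kappa$, and by the second remark the infimum is actually attained. Hence (S1) is equivalent to $c'_\kappa\geq 0$, which reads $(\kappa-(n+1))/\kappa\geq 0$, i.e.\ $\kappa\geq n+1$.

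For (b), applying the first remark to $f_A$ gives $\min_p f_A = \kappa\min\mbox{sp}(A) + c_\kappa\mbox{tr}(A)$ and $\max_p f_A = \kappa\max\mbox{sp}(A) + c_\kappa\mbox{tr}(A)$; substituting $\kappa = 1 - nc_\kappa$ from (\ref{A2}) produces the displayed formulas (\ref{m}) and (\ref{M}). Condition (O5) then amounts to the simultaneous requirements
\begin{equation*}
c_\kappa\bigl(\mbox{tr}(A) - n\min\mbox{sp}(A)\bigr)\geq 0, \qquad c_\kappa\bigl(\mbox{tr}(A) - n\max\mbox{sp}(A)\bigr)\leq 0
\end{equation*}
for every $A\in iu(n)$. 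Since $\mbox{tr}(A) - n\min\mbox{sp}(A) = \sum_i(\lambda_i-\min\mbox{sp}(A))\geq 0$, strictly positive whenever $A$ is not a scalar multiple of $I$, and analogously $\mbox{tr}(A) - n\max\mbox{sp}(A)\leq 0$ with strict inequality in the non-scalar case, both constraints hold for every $A$ if and only if $c_\kappa\geq 0$, i.e.\ $\kappa\leq 1$.

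For the norm estimates: when $\kappa\in(0,1]$, (O5) gives $|f_A(p)|\leq \max(|\min\mbox{sp}(A)|,|\max\mbox{sp}(A)|) = ||A||$ for self-adjoint $A$, with equality reached at $\kappa=1$ because then $f_A(p) = \mbox{tr}(pA) = \langle\psi|A\psi\rangle$ attains the extreme eigenvalues of $A$. When $\kappa\geq n+1$, $c_\kappa\leq 0$ and the crude bounds $|\mbox{tr}(pA)|\leq ||A||$ and $|\mbox{tr}(A)|\leq n||A||$ (spectrum lies in $[-||A||,||A||]$) give $||f_A||_\infty\leq \kappa||A|| + n|c_\kappa|\cdot||A||$; since $|c_\kappa|=(\kappa-1)/n$ one has $\kappa = 1 + n|c_\kappa|$, so $\kappa + n|c_\kappa| = 1 + 2n|c_\kappa|$, yielding (\ref{est2}). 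No step poses a real obstacle; the only items deserving care are the ``only if'' directions in (a) and (b), where one has to exhibit sharp witnesses (a pure state orthogonal to $p$ for (a), a non-scalar $A\in iu(n)$ for (b)) that turn the necessary sign conditions on $c'_\kappa$ and $c_\kappa$ into equivalences.
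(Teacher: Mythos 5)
Your proposal is correct and follows essentially the same route as the paper's proof: reduce (S1) to the sign of $c'_\kappa$ using positivity of $\mbox{tr}(\sigma p)$ and a witness with $\mbox{tr}(\sigma p)=0$, derive (\ref{m})--(\ref{M}) from the min-max characterization of the spectrum together with $\kappa+nc_\kappa=1$, and obtain the norm bounds from $|\mbox{tr}(pA)|\leq\|A\|$ and $|\mbox{tr}(A)|\leq n\|A\|$. Your explicit attention to the non-scalar witnesses needed for the ``only if'' directions is a welcome touch the paper leaves implicit, but the argument is the same.
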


\begin{proof}
(a)  if  (\ref{A}), (\ref{B}),
(\ref{A2}), (\ref{B2}) are valid, (S1) holds if and only if
 $\kappa'_\kappa >0$ and $c'_\kappa\geq 0$
(notice that $\sigma \geq 0$ for a state by hypotheses and there are states  with $\mbox{tr}(p\sigma)=0$ for some $p \in \sP(\sH_n)$). From  (\ref{B2}),  $\kappa'_\kappa >0$ and $c'_\kappa\geq 0$ are equivalent to
$\kappa \in [n+1, +\infty)$.  \\
(b)  We known that, since $A$ is self-adjoint and $\sH_n$ is finite dimensional,
  $\min \mbox{sp}(A) = \min_{||\psi||=1} \langle \psi| A\psi\rangle = \min_{p \in \sP(\sH_n)} \mbox{tr}(pA)$.  Therefore
$\min f_A = \kappa \min \mbox{sp}(A) + c_\kappa \mbox{tr}(A)$. Using 
$\kappa + n c_\kappa =1$ we immediately have (\ref{m}). The proof of 
(\ref{M}) is analogous.  Next notice that $\mbox{tr}(A) -n  \min \mbox{sp}(A) \geq 0$ and 
$\mbox{tr}(A) -n  \max \mbox{sp}(A) \leq 0$ so that (\ref{m})-(\ref{M})  imply (O1)
if and only if $c_\kappa \geq 0$, namely
$\kappa \in (0,1]$. 
The proof of the remaining estimates easily follows 
using an analogous procedure, noticing that $\kappa >0$ and  exploiting the inequalities $|\mbox{tr}(A)| \leq \mbox{tr}|A| \leq n||A||$ which arises  from $||A||= \max\{ |\lambda 
|\:|\: \lambda \in \mbox{sp}(A)\}$ and  $\max_{p\in \sP(\sH_n)} | \mbox{tr}(pA)| =  ||A||$. The latter implies the validity of the last statement in (b)
out of  the fact that $c_\kappa =0$ if $\kappa=1$.
\end{proof}

\begin{remark} {\em $\null$\\
{\bf (1)} For $n=2$, the implication "if" in (a) of theorem \ref{main2}
 holds true.\\
{\bf (2)}  The choice $\kappa = n+1$ implies $\kappa'_\kappa =n$, $c'_\kappa=0$, so that: $\rho_\sigma(p) =n F_\sigma(p) =n \mbox{tr}(p\sigma)$
 are positive classical densities with the most elementary form allowed by our hypotheses. 
This form may further be  simplified changing the normalization of the measure. Leaving $\cal O$ unchanged, one may indeed redefine $\mu_n \to \mu'_n := n\mu_n$ and $\rho_\sigma \to \rho_\sigma' := n^{-1}\rho_\sigma$ to obtain $\rho'_\sigma = F_\sigma$ exactly, preserving (S1)-(S5) with $\rho_\sigma'$ in place of $\rho_\sigma$, but $\mu'_n(\sP(\sH_n))=n$.\\
{\bf (3)} Gibbons' choice corresponds to $\kappa =1$ so that  $c_\kappa=0$ and thus: $f_A(p)= F_A(p) = \mbox{tr}(pA)$,
defining the  simplest relation between quantum and classical-like observables.\\
{\bf (4)}  The established theorem shows in particular that the pair of  requirements  (S1)
and (O5) cannot hold simultaneously. As long as our goal is  describing quantum physics by means of the Hamiltonian formalism, it seems preferable  assuming the validity of the former ($\kappa \in [n+1,+\infty)$), dropping the latter. Otherwise  $\rho_\sigma$ would not be represented in terms of a {\em positive} Liouville probability density. Sticking to this choice, 
we can also use $\rho_\sigma$ to evaluate expectation values for observables 
that are not of quantum nature, differently form what instead happens if $\kappa = 1$.  Assuming $(S1)$, the failure of (O5) seems however to remain  annoying. Actually,  it is not so strong  as it could seems at first glance, since as 
already stressed, there is no unique way to compare  a continuous set of reals (the range of $f_A$) with  a discrete set of real numbers (the spectrum of $A$) and  the only physically sensible  comparison relies upon the identity (\ref{trAs0}) (with $\mu_n$ in place of $m$) that is satisfied.
In particular, this identity assures that all elements of $\mbox{sp}(A)$ are always obtained as  expectation values of $f_A$ with respect to suitable classical-like states\footnote{If $\kappa = 1$,  the elements $\mbox{sp}(A)$ coincides to the singular values of $f_A$ (i.e. $df_A(p) =0$ iff $f_A(p) \in \mbox{sp}(A)$) as one easily proves.}: If $a \in \mbox{sp}(A)$, 
picking out $p_a \in \sS_p(\sH_n)$ such that $p_a = \psi_a \langle \psi_a| \cdot \rangle$, where $\psi_a$ is a normalized eigenvector of $A$ with eigenvalue $a$, one has $\bE_{\rho_{\sigma_a}}(f_A)= \langle A \rangle_{\sigma_a}= a$.
The apparent  difficulties arise  only when  one is dealing with {\em intrinsically classical} states (including classical sharp states) {\em and quantum} observables. For instance, if $\rho \neq \rho_\sigma$ for every $\sigma \in \sS(\sH_n)$, it could happen that $A\geq 0$ but $\bE_\rho(f_A) <0$  (it is impossible if $\rho = \rho_\sigma$ since
$\bE_\rho(f_A)= \langle A\rangle_\sigma \geq 0$ for  (\ref{trAs0})).}
\end{remark}

\subsection{$C^*$-algebra of quantum observables in terms of  frame functions}
In this section we assume to work with $\cal O$ of the form
(\ref{CQ}), when holding  (\ref{A}), (\ref{B}),
(\ref{A2}), (\ref{B2}). The observables of the systems we are considering are the self-adjoint elements of $\gB(\sH_n)$. Considering also complex combinations of observables we recover the whole $C^*$-algebra $\gB(\sH_n)$. The map ${\cal O}$, defined with respect to a choice of $\kappa>0$,  can be extended by linearity to a map indicated with the same symbol:
$${\cal O} : \gB(\sH_n) \ni A \mapsto f_A := \kappa F_A + c_\kappa \mbox{tr}(A) \in 
\cF^2(\sH_n)\:.$$
From (d) in theorem \ref{main2}, this map turns out to be an isomorphism of complex vector spaces with the further property that
$${\cal O}^{-1}\left(  \overline{f}\right) = \left({\cal O}^{-1}( f)\right)^* \quad   \mbox{for all $f \in \cF^2(\sH_n)$.}$$
Obviously $\cal O$ can be used to induce on $\cL^2(\sP(\sH_n), \mu_n)$ a structure of $^*$-algebra, defining a (distributive and  associative) $^*$-algebra product:
\beq f \star g := {\cal O}\left( {\cal  O}^{-1}(f) {\cal  O}^{-1}(g) \right)\label{prod} \quad \mbox{for all $f,g \in \cF^2(\sH_n)$.}\eeq
With this product $\cF^2(\sH_n)$ becomes a $C^*$-algebra with unit, given by the constantly function $1$,  with involution given by the standard complex conjugation and norm:
\beq |||f||| :=  ||{\cal O}^{-1} (f) || \quad   \mbox{for all $f \in \cF^2(\sH_n)$,}\label{norm}\eeq 
where the norm in the right-hand side is  the $C^*$-norm of  $\gB(\sH_n)$.
With these definitions, ${\cal O}$ turns out  to be a $C^*$-algebra isomorphism.
 The proofs are straightforward.\\
An intriguing issue is whether the $C^*$-algebra norm and products can be recast using the 
geometric structure already present on $\sP(\sH_n)$.  
Concerning the norm, we observe that it is enough to know the explicit expression for the case of $f$ real, the general case then arises form that and the $C^*$-algebra property  $||a^*a||= ||a||^2$, once one has an explicit expression for the product $\star$, that we will obtain shortly.  As a matter of fact  $||| f||| = \sqrt{||| \overline{f} \star f|||}$. Focus on $f \in \cF^2(\sP(\sH_n))$
real, so that we can write $f=\kappa F_A + c_\kappa \mbox{tr}(A)$ for some $A \in iu(n)$. Since $A$ is self-adjoint,
$$||F_A||_\infty = \sup_{p\in \sP(\sH_n)} |\mbox{tr}(pA)| =   \sup_{\psi\in \bS(\sH_n)} |\langle \psi |A \psi \rangle| = ||A||  = ||| f|||\:.$$ 
As a consequence, taking advantage from (\ref{one}), from the explicit expression of $f_A$ (\ref{cO}),
and exploiting $\kappa + nc_\kappa =1$, we immediately have a geometric expression for $|||f|||$.
\begin{proposition}
If $n>2$ and  $f \in \cF^2(\sH_n)$ is real, 
 referring to the 
 the $C^*$-algebra  norm in (\ref{norm}) (everything defined for a choice of $\kappa>0$) we have:
\beq ||| f|||  = \frac{1}{\kappa}\left| \left| f - (1-\kappa) \int_{\sP(\sH_n)} f d\mu_n  \right|\right|_\infty \:.\label{norm}
\eeq
\end{proposition}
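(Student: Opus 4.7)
The plan is to write $f$ explicitly in the form $f = \kappa F_A + c_\kappa \mbox{tr}(A)$ granted by (b) of Theorem \ref{teoDV2} and the extended $\cal O$, with $A \in iu(n)$ because $f$ is real, and then isolate $F_A$ in terms of $f$ and its $\mu_n$-integral. By definition of the induced norm, $|||f||| = \|A\|$. Since $A$ is self-adjoint and $\sH_n$ is finite dimensional, a standard computation already used in the text gives $\|F_A\|_\infty = \sup_{p\in\sP(\sH_n)} |\mbox{tr}(pA)| = \sup_{\psi\in\bS(\sH_n)} |\langle \psi|A\psi\rangle| = \|A\|$, so it suffices to prove that the right-hand side of the claim equals $\|F_A\|_\infty$.

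First I would compute $\int_{\sP(\sH_n)} f\, d\mu_n$ using the first trace-integral identity \eqref{one}:
\begin{equation*}
\int_{\sP(\sH_n)} f\, d\mu_n = \kappa \int_{\sP(\sH_n)} F_A\, d\mu_n + c_\kappa \mbox{tr}(A) = \left(\frac{\kappa}{n} + c_\kappa\right) \mbox{tr}(A).
\end{equation*}
The crucial identity $\kappa + n c_\kappa = 1$ from \eqref{ckappa} reduces the bracketed factor to $1/n$, giving $\mbox{tr}(A) = n \int_{\sP(\sH_n)} f\, d\mu_n$, and hence $c_\kappa \mbox{tr}(A) = n c_\kappa \int f\, d\mu_n = (1-\kappa)\int f\, d\mu_n$.

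Substituting back into $f = \kappa F_A + c_\kappa \mbox{tr}(A)$ and solving for $F_A$ yields
\begin{equation*}
F_A = \frac{1}{\kappa}\left(f - (1-\kappa)\int_{\sP(\sH_n)} f\, d\mu_n\right).
\end{equation*}
Taking the sup norm of both sides and combining with $\|F_A\|_\infty = \|A\| = |||f|||$ gives \eqref{norm} exactly.

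There is no real obstacle here: the argument is essentially a two-line calculation once one recognises that $\int f\, d\mu_n$ directly encodes $\mbox{tr}(A)$ via \eqref{one} and that the constant in $f_A = \kappa F_A + c_\kappa \mbox{tr}(A)$ can therefore be eliminated in terms of $f$ itself. The only place where care is needed is the use of the normalisation $\kappa + n c_\kappa = 1$ to turn $c_\kappa \mbox{tr}(A)$ into $(1-\kappa)\int f\, d\mu_n$; without this identity (which fails outside the range of parameters singled out in Theorem \ref{main2}) the geometric expression would carry an extra factor.
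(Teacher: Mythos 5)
Your proof is correct and follows essentially the same route as the paper: write $f=\kappa F_A+c_\kappa\,\mbox{tr}(A)$ with $A\in iu(n)$, identify $|||f|||=\|A\|=\|F_A\|_\infty$, and use the trace-integral formula together with $\kappa+nc_\kappa=1$ to express $c_\kappa\,\mbox{tr}(A)$ as $(1-\kappa)\int f\,d\mu_n$. You have merely filled in the computation the paper compresses into ``we immediately have.''
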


\begin{remark} {\em Even dropping the requirement $f \in \cF^2(\sH_n)$ and assuming, more generally, $f\in C^0(\sP(\sH_n))$, the right-hand side of (\ref{norm}) defines a norm. The same holds true 
if working in $L^\infty(C^0(\sP(\sH_n)), \mu_n)$ and interpreting $||\:\:||_\infty$ as the natural norm referred to the essential supremum computed with respect to $\mu_n$. The proofs are straightforward.}
\end{remark}

\noindent We write down two cases explicitly. 
The  case $\kappa =n+1$, with $\mu'_n := n\mu_n$:
$$||| f|||  = \frac{1}{n+1}\left| \left| f +  \int_{\sP(\sH_n)} f d\mu'_n  \right|\right|_\infty\:,\quad \mbox{for every real  $f \in \cF^2(\sH_n)$,}$$
and the case considered by Gibbons, $\kappa =1$:
$$||| f|||  =\left| \left| f\right|\right|_\infty\:,\quad \mbox{for every real  $f \in \cF^2(\sH_n)$.}$$
Let us finally pass to the product $\star$, stating a corresponding theorem.

\begin{theorem}\label{teolast}
Let  $n>2$ and  $f, g \in \cF^2(\sH_n)$. 
If  $G_p: T_p^*\sP(\sH_n) \times T_p^*\sP(\sH_n) \to \bR$ denotes the scalar 
product on $1$-forms  canonically  induced  by the metric $g$ on $\sP(\sH_n)$,
referring to the $C^*$-algebra product
  in (\ref{prod}), we have:
$$  f \star g = \frac{i}{2}\{f, g\} +\frac{1}{2} G(df,dg) + \frac{fg}{\kappa}  + \frac{1-\kappa}{\kappa} \left(\frac{n+1}{\kappa} \int_{\sP(\sH_n)} \sp \sp\sp\sp f g d\mu_n  - f\sp  \int_{\sP(\sH_n)}\sp \sp\sp\sp g d\mu_n-g  \sp \int_{\sP(\sH_n)}\sp \sp\sp\sp f d\mu_n \right) $$
\beq 
+ \frac{1-\kappa}{\kappa^2}(\kappa - (n+1)) \int_{\sP(\sH_n)} \sp \sp\sp\sp f d\mu_n\int_{\sP(\sH_n)} \sp \sp\sp\sp  g d\mu_n
 \label{prod2}\eeq
with, as usual, 
$\kappa>0$. In particular, for  $\kappa  =n+1$ and defining $\mu'_n := n\mu_n$, one has
$$  f \star g = \frac{i}{2}\{f, g\} + \frac{1}{2} G(df,dg) +   \frac{1}{n+1} \left(  
fg  -\int_{\sP(\sH_n)} \spa \sp \sp\sp\sp f g \: d\mu'_n + f\sp  \int_{\sP(\sH_n)}\spa \sp \sp\sp\sp g \: d\mu'_n+g  \sp \int_{\sP(\sH_n)}\spa \sp \sp\sp\sp f \: d\mu'_n \right)  $$
 and, for $\kappa =1$,
$$ f \star g = \frac{i}{2}\{f, g\} + \frac{1}{2} G(df,dg) + fg\:.$$
\end{theorem}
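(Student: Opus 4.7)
Since ${\cal O}:\gB(\sH_n)\to\cF^2(\sH_n)$ is a $\bC$-linear bijection, write $f=f_A$ and $g=f_B$ for the unique $A,B\in\gB(\sH_n)$ they represent, so that by definition $f\star g=f_{AB}=\kappa F_{AB}+c_\kappa\mbox{tr}(AB)$. The plan is to split $AB=\tfrac{1}{2}[A,B]+\tfrac{1}{2}(AB+BA)$: the antisymmetric half will produce the Poisson-bracket contribution $\tfrac{i}{2}\{f,g\}$ via Theorem~\ref{teoQC}(b), the symmetric half will produce $\tfrac{1}{2}G(df,dg)+\kappa F_A F_B$ via the Fubini-Study metric, and what remains will be reassembled from the trace-integral identities (\ref{one})--(\ref{two}) into the scalar correction terms of the claim.

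\textbf{Commutator half.} Theorem~\ref{teoQC}(b) gives $\{f_A,f_B\}=f_{-i[A,B]}$; since $\mbox{tr}[A,B]=0$ this simplifies to $\{f,g\}=-i\kappa F_{[A,B]}$, so $\tfrac{\kappa}{2}F_{[A,B]}=\tfrac{i}{2}\{f,g\}$.

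\textbf{Anticommutator half.} Using $X_{f_A}(p)=-i[A,p]$ from (\ref{XfA}) and the Fubini-Study formula (\ref{fs}), a direct expansion exploiting $p^2=p$ and cyclicity of the trace yields $g_p(X_{f_A},X_{f_B})=\kappa\,\mbox{tr}(p(AB+BA))-2\kappa\,\mbox{tr}(pApB)$. The key observation is that for a rank-one projector $p=\psi\langle\psi|\cdot\rangle$ one has $\mbox{tr}(pApB)=\langle\psi|A\psi\rangle\langle\psi|B\psi\rangle=F_A(p)F_B(p)$, so $g(X_{f_A},X_{f_B})=\kappa F_{AB+BA}-2\kappa F_A F_B$. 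Kähler compatibility ($\omega(u,v)=g(u,jv)$ together with $j^2=-I$ and $g(ju,jv)=g(u,v)$) implies the standard identity $g(X_{f_A},X_{f_B})=G(df_A,df_B)=G(df,dg)$. Hence $\tfrac{\kappa}{2}F_{AB+BA}=\tfrac{1}{2}G(df,dg)+\kappa F_A F_B$.

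\textbf{Scalar reduction and specializations.} Adding both halves gives $\kappa F_{AB}=\tfrac{i}{2}\{f,g\}+\tfrac{1}{2}G(df,dg)+\kappa F_A F_B$, and it only remains to rewrite $\kappa F_A F_B+c_\kappa\mbox{tr}(AB)$ in terms of $f$, $g$ and their $\mu_n$-integrals. From (\ref{CQ}), (\ref{one}) and (\ref{ckappa}) one has $\int f\,d\mu_n=\tfrac{1}{n}\mbox{tr}(A)$, so $F_A=\tfrac{1}{\kappa}\bigl(f-nc_\kappa\int f\,d\mu_n\bigr)$, and similarly for $B$; the trace $\mbox{tr}(AB)$ is extracted from the no-conjugate variant of (\ref{two}), which is automatic since $\overline{F_A}=F_{A^*}$. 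Substituting these expressions, expanding $F_A F_B$, and repeatedly using $nc_\kappa=1-\kappa$ to regroup the $p$-independent pieces produces the stated formula. The specializations at $\kappa=n+1$ (where $c_\kappa=-1/n$, so the $\int f\int g$ coefficient collapses to zero after passing to $\mu'_n=n\mu_n$) and at $\kappa=1$ (where $c_\kappa=0$, killing every correction) then follow by direct substitution. The conceptual content is entirely contained in the commutator/anticommutator split and the identity $\mbox{tr}(pApB)=F_A(p)F_B(p)$ on pure states; the main obstacle is purely bookkeeping, namely verifying that the several scalar terms proportional to $\int f\,d\mu_n\int g\,d\mu_n$ collapse into the single coefficient $\tfrac{(1-\kappa)(\kappa-(n+1))}{\kappa^2}$ of the final formula.
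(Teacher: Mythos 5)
Your proposal is correct and follows essentially the same route as the paper's proof: the same split of $AB$ into commutator and anticommutator parts, the same identity $\mbox{tr}(pApB)=F_A(p)F_B(p)$ on rank-one projectors (which is exactly the paper's expansion of $\mbox{tr}(p(AB+BA))$ in terms of $\mbox{tr}(p[p,A][p,B])$), and the same reduction of the remaining scalar terms via the trace-integral formulas (\ref{one})--(\ref{three}) and $\kappa+nc_\kappa=1$. The only cosmetic difference is that you obtain the commutator half from Theorem \ref{teoQC}(b) rather than directly from the definition of $\omega$, which is the same computation.
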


\begin{proof}
First of all we notice that, the following identity holds,
$g(X_f,X_g) = G(df,dg)$,
that immediately follows from $g( X_f, j\cdot ) = \omega(X_h,  \cdot)  =
 df$, $jj=-I$ and $g(ju, jv ) = g(u,v)$. So we replace 
$G(df,dg)$ for $g(X_f,X_g)$ in the following.
Define $f:= f_A$ and $g:= f_B$. With this choice $f\star g = f_{AB}=
\kappa F_{AB} + c_\kappa \mbox{tr}(AB)$.
Therefore:
$$f_{AB}(p) = \kappa \mbox{tr}(pAB) + c_\kappa \mbox{tr}(AB) =  \frac{\kappa}{2}\kappa \mbox{tr}(p[A,B]) 
+ \frac{\kappa}{2} \mbox{tr} (p (AB+BA)) +  c_\kappa \mbox{tr}(AB)\:.$$ 
A straightfrward computation proves that
$$ \mbox{tr} (p (AB+BA)) = - \mbox{tr}(p [p,A][p,B]) - \mbox{tr}(p [p,B][p,A]) + 2 \mbox{tr}(pA)\mbox{tr}(pB)\:.$$
Reminding the definition of $\omega$, $\{\cdot, \cdot\}$, $X_h$ and $g$ presented in Sect.\ref{secH},  and putting all together we find:
$$ (f \star g)(p)=  f_{AB}(p) = \frac{i}{2}\{f, g\} + \frac{1}{2} g(X_f,X_g) + \kappa \mbox{tr}(pA)\mbox{tr}(pB) +
 c_\kappa \mbox{tr}(AB)\:.$$
From $\kappa + n c_\kappa=1$ and using (\ref{one}), one easily finds
$$\int_{\sP(\sH_n)} f_A d\mu = \kappa \int_{\sP(\sH_n)} F_A d\mu + c_\kappa
 \mbox{tr}(A)= \frac{1}{n} \mbox{tr}(A)  = \int_{\sP(\sH_n)} F_A d\mu\:,$$
and a similar result for $B$ in place of $A$.
Using (\ref{three}) and the definition of $f_A$ ($f_B$) in terms of $F_A$ ($F_B$):
$$ f \star g = \frac{i}{2}\{f, g\} + \frac{1}{2} g(X_f,X_g) + \frac{1}{\kappa} \left(f - nc_\kappa \int_{\sP(\sH_n)} \sp\sp\sp f d\mu_n \right) \left(g - nc_\kappa \int_{\sP(\sH_n)} \sp\sp\sp g d\mu_n \right)$$
$$ -c_\kappa n^2 \int_{\sP(\sH_n)} \sp\sp\sp f d\mu_n \int_{\sP(\sH_n)} \sp\sp\sp g d\mu_n 
+ \frac{c_\kappa n(n+1)}{\kappa^2}
\int_{\sP(\sH_n)} \left(f - nc_\kappa \int_{\sP(\sH_n)} \sp\sp\sp f d\mu_n \right) \left(g - nc_\kappa \int_{\sP(\sH_n)} \sp\sp\sp g d\mu_n \right) d\mu_n \:.$$
Using again $\kappa + n c_\kappa=1$ we obtain (\ref{prod2}).
\end{proof}

\begin{remark}$\null$\\
{\em {\bf (1)} As already noticed in \cite{BH01}, for $\kappa=1$ it turns out that the squared standard deviation 
$(\Delta A)_\psi$ (where $p = \psi \langle \psi| \cdot\rangle$) coincides to $\frac{1}{2}G_p(df_A, df_A)$. This allows one to write down  a geometrical formulation of Heisenberg inequality. For other choices of $\kappa$ it is still possible, but the expression is more complicated.\\
{\bf (2)} A formula similar to (\ref{prod}) for $n=2$ (stated on the $2$-dimensional Bloch sphere) and for the case $\kappa =1$ is mentioned in \cite{lieJordan2}.\\
{\bf (3)}  From (\ref{prod2}), the structure of Lie-Jordan Banach algebra  \cite{FFIM13,lieJordan} of $\cF^2(\sH_n)$
shows up evidently. The Lie commutator is just $\{\cdot, \cdot\}$
whereas  the Jordan product reads:
$$f \circ g := \frac{1}{2} G(df,dg) + \frac{fg}{\kappa}  + \frac{1-\kappa}{\kappa} \left(\frac{n+1}{\kappa} \int_{\sP(\sH_n)} \sp \sp\sp\sp f g d\mu_n  - f\sp  \int_{\sP(\sH_n)}\sp \sp\sp\sp g d\mu_n-g  \sp \int_{\sP(\sH_n)}\sp \sp\sp\sp f d\mu_n \right)$$
$$+ \frac{1-\kappa}{\kappa^2}(\kappa - (n+1)) \int_{\sP(\sH_n)} \sp \sp\sp\sp f d\mu_n\int_{\sP(\sH_n)} \sp \sp\sp\sp  g d\mu_n\:.$$}
\end{remark}

 \section{Conclusions and open issues}
This paper discussed some issues regarding the interplay of standard and geometric formulation of finite dimensional quantum mechanics, working  in the projective space. All  the analysis was based  upon  the properties of the  $\cL^2(\mu_n)$ frame functions, focusing on the r\^ole of  $U(n)$ covariance in particular.
 The problem of the positiveness of Liouville densities associated with quantum states was tackled,  establishing  that the geometric  formalism,  in view of the existence  of a one-parameter class of  natural K\"ahlerian structures on $\sP(\sH_n)$,  permits  to fix several  physically safe solutions. 
 A new characterization of classical-like observables  describing quantum observables was  discussed, together with a geometric description of the $C^*$-algebra structure (decomposed as a Banach Lie-Jordan algebra) of the set of quantum observables in terms of the  K\"ahlerian structure.\\
It seems  worth remarking that the results  remain affected by a  free parameter, $\kappa>0$, that could not  be fixed  out of  our physical requirements.
 A possibility to get rid of this remaining freedom could arise from the description of compound systems. Compound systems cannot completely described in classical-like terms because a problem pops up from scratch. In classical physics the phase space of a system composed by two  subsystems is the Cartesian product of the space of phases of each subsystem.
If one tries to extend the approach of this paper, the phase space of the composed system must be instead the projective space of the {\em tensor product} of the  Hilbert spaces  of the subsystems.  That is  much larger than the Cartesian product of the projective spaces. The Cartesian product is actually embedded in this larger manifold   by means of the well known  {\em Segre embedding}. A huge literature exists on this topic.  It would be interesting to analyse this issue with the help of the technology of frame functions. \\
Another direction to investigate is, obviously, the infinite dimensional case. 
Barring evident technical problems  to be fixed, a very difficult issue is the generalization of the measure $\mu_n$ on the complex projective space of an infinite dimensional Hilbert space.

\appendix
\section{Proof of some results}

{\bf Proof of proposition \ref{teoquat}}.
\begin{proof}
(1)  First of all, notice that  the three norms $||\cdot||$, $||\cdot||_1$, $ ||\cdot||_2$ are topologically equivalent since $\gB(\sH_n) = \gB_1(\sH_n) =\gB_2(\sH_n)$ are  finite dimensional   normed  spaces with respect to the corresponding norms. \\
(2) $\sS(\sH_n)$ is compact since it is closed and bounded, with respect to the norm $||\cdot||_1$, in a finite dimensional normed space.
Since $\sS(\sH_n)$ is compact and the zero operator $0 \not \in \sS(\sH_n)$, the continuous functions $\sS(\sH_n) \ni \sigma \mapsto d(0, \sigma) = ||\sigma||$, 
$\sS(\sH_n) \ni \sigma \mapsto d_1(0, \sigma) = ||\sigma||_1$, $\sS(\sH_n) \ni \sigma \mapsto d_2(0, \sigma)= ||\sigma||_2$
must have strictly positive minima (and maxima). For $d_1$ everything is trivial. Let us pass to consider $d$ and $d_2$. Using the fact that the $n$ eigenvalues $q_k$ of $\sigma \in \sS(\sH_n)$ verify 
both $q_k \in [0,1]$ and $\sum_{k=1}^n  q_k =1$ one sees that $\sum_{k=1}^n 
q^2_k \geq \frac{n}{n^2}$, and $1/n$  is indeed the least possible value. All that is equivalent to say that $||\frac{1}{n}I ||_2\leq ||\sigma||_2$ where $\frac{1}{n}I $ is an admitted state.  Again with the constraints 
$q_k \in [0,1]$ and $\sum_{k=1}^n  q_k =1$,  the maximum of the eigenvalues $q_k = |q_k|$ must be greater than $1/n$, that is equivalent to say $||\sigma|| \geq ||\frac{1}{n}I ||$. Concerning maxima,  $q_k \in [0,1]$ and $\sum_{k=1}^n  q_k =1$ imply, varying 
$\sigma \in \sS(\sH_n)$: 
$1 \geq \sum_{k=1}^n q(\sigma)^2_k$ and  
$\max\{q(\sigma)_k\:|\: k=1,\ldots, n\:, \sigma \in \sS(\sH_n)\} =1$
determining the maximum of both  $\sigma \mapsto ||\sigma||_2$ and $\sigma \mapsto  ||\sigma||$ since the value $1$ of the norms  is attained on pure states. \\
 (3).  We view  $\sS(\sH_n)$ a subset  of the topological   space $T$ of self-adjoint operators $A$ on $\sH_n$ with $\mbox{tr}(A)=1$  endowed with  the topology induced by $\gB(\sH_n)$.\\
First of all, notice that $\sS(\sH_n) \supset \partial \sS(\sH_n)$ because the former is closed with respect to the said topology, so $\sS(\sH_n)$
is the disjoint union of $\partial \sS(\sH_n)$ and $Int(\sS(\sH_n))$.
Let $\sigma$ be an  element of $\sS(\sH_n)$.  First suppose that $\mbox{dim}(\mbox{Ran}(\sigma)) = n$
we want to show that $\sigma \in Int(\sS(\sH_n))$, that is, there is an open set  $O\subset  T$ containing
 $\sigma$ and  such that $\sigma' \in O$ entails $\sigma' \in \sS(\sH_n)$. 
To this end, let us define
$m := \min\{\:  \langle \psi| \sigma \psi \rangle \:\:|\:\: ||\psi|| =1\}\:.$
$m$ is real  since $\sigma= \sigma^*$ and $m>0$, because: (1) all eigenvalues of $\sigma$ are strictly positive (since $\sigma\geq 0$ and $\mbox{dim}(\mbox{Ran}(\sigma))=n$),  (2) $\psi \mapsto \langle \psi | \sigma \psi\rangle$ is continuous and (3) the set of vectors $\psi$ with 
$||\psi||=1$ is compact because $\mbox{dim}(\sH_n)=n < +\infty$. Next, if $\sigma' = \sigma'^*\in \gB(\sH_n)$ verifies
$||\sigma- \sigma'|| < m/2$, one has:
$$\frac{m}{2} \leq  ||\sigma- \sigma'|| = \sup \{  \:|\langle \psi |(\sigma - \sigma') \psi \rangle|\:\:\:|\:\:\: ||\psi||=1\}$$
so that:
$\langle \psi |\sigma' \psi \rangle =  \langle \psi |\sigma' \psi \rangle  - \langle \psi |\sigma \psi \rangle
+  \langle \psi |\sigma \psi \rangle \geq  - \frac{m}{2} + m = \frac{m}{2}>0 \quad \mbox{for $||\psi||=1$.}$
Consequently: $\sigma' \geq 0$.  Summarizing, if $B_{m/2}(\sigma)$ denotes the open ball 
in $\gB(\sH_n)$ centred on $\sigma$ with radius $m/2$,  
$O:=T \cap B_{m/2}(\sigma)$ is open in $T$ by definition and $\sigma'\in O$ verifies $\sigma' =\sigma'^*$,  $\mbox{tr} \sigma =1$ and, as  we have proved, $\sigma'\geq 0$.  In other words, for $\sigma \in \sS(\sH_n)$, 
$\mbox{dim}(\mbox{Ran}(\sigma)) = n$ implies  $\sigma \in Int(\sS(\sH_n))$.\\
We pass to the other case  for $\sigma \in \sS(\sH_n)$. We suppose that  $\mbox{dim}(\mbox{Ran}(\sigma)) < n$  and we want to show that 
$\sigma \in \partial \sS(\sH_n)$.
 $\mbox{dim}(\mbox{Ran}(\sigma)) < n$ implies  $det(\sigma)=0$. Thus all eigenvalues are non-negative and one at least  vanishes. Let $\psi \in Ker(\sigma)$. The operators, for $n=1,2,\ldots$:
 $$\sigma_n := \left(1 + \frac{1}{n} \right) \sigma - \frac{1}{n} \psi\langle \psi| \cdot \rangle\:, $$
  are self-adjoint with $\mbox{tr}(\sigma_n)= 1$  so that they stay in $T$. Furthermore  they 
verify $\sigma_n \to \sigma$ for $n\to +\infty$, but $\sigma_n \not \in \sS(\sH_n)$ because $\sigma_n$
has the negative eigenvalue $-\frac{1}{n}$. So $\sigma \in \partial \sS(\sH_n)$. In particular, since 
a pure state is a one-dimensional orthogonal projector, it verifies  $\mbox{dim}(\mbox{Ran}(\sigma)) =1 < n$ and thus 
$\sigma \in \partial \sS(\sH_n)$. If $n=2$ this is the only possible case for an element  $\sigma \in \partial \sS(\sH_n)$.
However, if $n>2$, also elements of $\sS(\sH_n)$ with $\mbox{dim}(\mbox{Ran}(\sigma)) \leq n-1$
belong to $\partial \sS(\sH_n)$. 
\end{proof}

\noindent {\bf Proof of proposition \ref{profgeo}}.
\begin{proof} 
 (a) $\bS(\sH_n)$ is a  {\em real $(2n-1)$-dimensional embedded submanifold} of $\sH_n$. Let us  sketch how it happens.  If $(z_{01}, \ldots, z_{0n}) \in \bS(\sH_n)$, there is an open (in $\sH_n$)  neighbourhood $O$ of that point such that  for every  $(z_1, \ldots, z_n)\in  O' :=  \bS(\sH_n) \cap O$ there is a component, say $z_k = x_k + i y_k$, (the same for all points of $O$) such that either  $x_k$ or $y_k$ can be written  as a smooth function  of the remaining components  $z_h$, when  $(z_1, \ldots, z_n)\in O'$. This procedure define a natural local chart on $\bS(\sH_n)$  with domain $O'$. Collecting all these charts, that are mutually smoothly compatible,  one obtains a smooth differentiable structure on $\bS(\sH_n)$ making it a  real $(2n-1)$-dimensional embedded submanifold  of $\sH_n$.\\
(b) Similarly  $\sP(\sH_n)$ can be equipped with a real $(2n-2)$-dimensional smooth  manifold structure.
Consider $(z_{01}, \ldots, z_{0n}) \in [\psi_0] \in \sP(\sH_n)$.
 At least one of the components $z_{0j}$
cannot vanish, say $z_{0h}$. By continuity this fact is valid  in an open  neighbourhood $V$ of $(z_{01}, \ldots, z_{0n}) \in \bS(\sH_n)$. In that neighbourhood  the set of $n-1$ ratios $z_j/z_h$ with $j\neq h$ determine a point on $\sP(\sH_n)$ biunivocally. These $n-1$ ratios vary in an  open neighborhood 
$V' := \pi(V) \subset \sP(\sH_n)$  of $[\psi_0]$ when the components $(z_{1}, \ldots, z_{n})$ range in $V$.
Decomposing each of these ratios into real and imaginary part, we obtain a real local chart on $V' \subset \sP(\sH_n)$ with $2n-2$ real coordinates.  Collecting all these local charts, that are mutually smoothly compatible,  one obtains a smooth differentiable structure on $\sP(\sH_n)$, making it a $2n-2$-dimensional real smooth manifold.
With the said structures, the canonical projection  $\pi  : \bS(\sH_n) \to \sP(\sH_n)$ 
becomes a smooth submersion
and the transitive action (\ref{ta}) of $U(n)$ on $\sP(\sH_n)$ turns out to be smooth as one easily proves.
\end{proof}

\noindent{\bf Proof of proposition \ref{propvett}}.
\begin{proof} The action (\ref{ta}) is  transitive and smooth so, on the one hand
$\sP(\sH_n)$ is diffemorphic to the quotient $U(n)/G_p$, where $G_p \subset U(n)$ is the isotropy (closed Lie sub) group of $p\in \sP(\sH_n)$ and on the other hand  the projection $\Pi_p : U(n) \ni U \mapsto Up U^{-1}\in \sP(\sH_n)$ is a submersion \cite{W,Oneill} and thus $d\Pi_p|_{U=I} : u(n) \to T_\sigma \sP(\sH_n)$ is surjective.  The thesis is true because $d\Pi_p(B)|_{U=I} = [B, p]$ for every $B \in u(n)$ and $u(n)$ is the real vector space  of anti-self adjoint  in $\gB(\sH)$. 
\end{proof}

\noindent{\bf Proof of theorem \ref{teoQC}}.
\begin{proof} Regarding (\ref{XfA}), consider a smooth curve $\bR \ni t \mapsto p(t) \in \sP(\sH_n)$ such that 
$\dot{p}(0)= v = -i[B_v,p]$, (\ref{CQ}) implies:
$$\langle df_A(p), v\rangle = \kappa \frac{d}{dt}|_{t=0}\mbox{tr} (p(t) A) = -i \kappa \mbox{tr} [[B_v, p],A]
= -i  \kappa \mbox{tr}(p [A,B]) = \omega_p(-i[A,p],v)\:.$$
Since it must also hold $\omega_p(X_{f_A}(p),v) =\langle df_A(p), v\rangle$ we conclude that
$\omega_p(X_{f_A}(p)+i[A,p],v)=0$ for every $v\in T_p\sP(\sH_n)$. As $\omega_p$ is non-degenerate, (\ref{XfA}) follows.\\
(a) In view of (\ref{XfA}), Hamilton equation $\frac{dp}{dt} = X_{f_H}(p(t))$ is the same as Schr\"odinger equation $\frac{dp}{dt} = -i[H, p(t)]$. Now the final statement is obvious form (\ref{CQ}) and the cyclic property of the trace:
$$f_A(p(t)) = f_A(e^{-itH}pe^{itH})= \kappa \mbox{tr}\left(e^{-itH}pe^{itH}A \right)
+ c \mbox{tr}(A) = \kappa \mbox{tr}\left(pe^{itH}Ae^{-itH} \right)
+ c \mbox{tr}(e^{-itH}e^{itH} A) $$
$$ = 
\kappa \mbox{tr}\left(pe^{itH}Ae^{-itH} \right)
+ c \mbox{tr}(pe^{itH}Ae^{-itH} ) = f_{e^{itH}Ae^{-itH}}(p)\:.$$
(b) The first statement immediately arises using $\{f_A,f_B\}:= \omega(X_{f_A},X_{f_B})$
and (\ref{XfA}) noticing that $\mbox{tr}\left(-i[A,B]\right)=0$. Next observe that, since $\sH_n$ has finite dimension and so no problems with domains arise, $A$ is a constant of motion with respect to the Hamiltonian $B=H$ iff $[A,H]=0$. This is equivalent to say $\langle\psi| [A,H] \psi\rangle =0$ for all $\psi \in \bS(\sH_n)$, that is $\mbox{tr}(p [A,H])=0$ for all $p\in \sP(\sH_n)$, that is $f_{-i[A,B]}
- \mbox{tr}\left(-i[A,B]\right)=0$. In view of the very identity (\ref{ppc}), that is equivalent to say (where $\cH = f_H$)
$\{f_{A}, \cH\}=0$, that is eventually equivalent to say that $f_A$ is a constant of motion in Hamiltonian formulation.\\
(c) The first part of (c) is an evident  consequence of the fact that referring to   (\ref{ta}):
$\Phi^\star_U\omega = \omega$  and $\Phi^\star_U g = g$ for all $U\in U(n)$.
In other words  $\omega_p(u,v)$ and $g_p(u,v)$ are invariant  if replacing $p$, $A_v$, $A_u$
for $Up U^{-1}$, $UA_vU^{-1}$, $UA_uU^{-1}$ simultaneously as one checks immediately.
The last statement is immediately proved by direct inspection.
\end{proof}

\noindent {\bf Proof of proposition \ref{propmu}}
\begin{proof} Henceforth $\cB(X)$ denotes the Borel $\sigma$-algebra on the topological space $X$. 
If $\mu_n$ exists, the requirement
$\int_{\sP(\sH_n)} f d\mu_n  = \int_{\bS(\sH_n)} f \circ \pi  d\nu_n$ entails that,
for $f:= \chi_E$, it holds:
\beq \mu_n(E) = \nu_n(\pi ^{-1}(E)) \quad \mbox{for every $E \in \cB(\sP(\sH_n))$\:.}\label{mn}\eeq
That relation proves that, if $\mu_n$ exists, it is uniquely determined by $\nu_n$. Let us pass to the existence issue.
Since $\pi $ is continuous,  is Borel-measurable and thus $\pi ^{-1}(E) \in \cB(\bS(\sH_n))$
if $E \in \cB(\sP(\sH_n))$. Since the other requirements are trivially verified, (\ref{mn}) defines, in fact,  a positive Borel measure on $\sP(\sH_n)$. That measure fulfils \beq
\mbox{$f\circ \pi   \in \cL^1(\bS(\sH_n), \nu_n)$} \quad\mbox{if}
\quad \mbox{$f \in \cL^1(\sP(\sH_n), \mu_n)$,}\quad \mbox{and}\quad 
\int_{\sP(\sH_n)} f d\mu_n  = \int_{\bS(\sH_n)} f \circ \pi  \:\: d\nu_n\:. \nonumber
\eeq directly from the definition of integral and $\mu_n(\sP(\sH_n)) = \nu_n(\pi ^{-1}(\sP(\sH_n))) = \nu_n (\bS(\sH_n))=1$. 
 $\mu_n$ is regular because $\sP(\sH_n)$ is compact it being  the image of  the compact set $\bS(\sH_n)$ under the continuous map $\pi $
 with finite measure \cite{rudin} and this regularity 
results also applies the the Liouville measure and the Riemannian one.
Concerning the invariance under the action of $PU(n)$, it arises from that of $\mu_n$ under $U(n)$:
$$ \mu_n(E) =  \nu_n (\pi ^{-1}(E)) = \nu_n (U\pi ^{-1}(E)) =  \nu_n(\pi ^{-1}(U E  U^{-1})) = \mu_n(U E  U^{-1}) \quad \mbox{if $U \in U(n)$.}$$
(a)  $\sP_n(\sH_n)$ is homeomorphic to the quotient of compact groups $U(n)/H$
where $H$ is the  isotropy group of any point of $\sP_n(\sH_n)$, since $H$ is closed in the compact group $U(n)$, it is compact as well. Thus there is a non-vanishing  $U(n)$ left invariant 
positive regular Borel measure on $\sP_n(\sH_n)$, uniquely determined by the volume of $\sP_n(\sH_n)$ (Chapter 4 of \cite{BR}). That measure must thus coincide with $\mu_n$ up to a strictly positive multiplicative constant. (b) and (c) the Liouville measure and the Riemannian measure are  non-vanishing  $U(n)$ left invariant 
positive regular Borel measure on $\sP_n(\sH_n)$, because both $\omega$ and $g$ are $U(n)$
invariant. Therefore they have to coincide with $\mu_n$ up to a strictly positive multiplicative constant.
\end{proof}

\begin{lemma}\label{lemmalemme}
If $\sH$ is a complex Hilbert space and $\phi, \psi  \in \bS(\sH)$, then
 there exists $U \in \gB(\sH)$ such that  $U\phi = \alpha \psi $ for some  $\alpha \in \bC$, $|\alpha|=1$, and $U= U^* =U^{-1}$.
\end{lemma}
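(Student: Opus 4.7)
The plan is to construct $U$ as a Householder-type reflection in the two-dimensional subspace spanned by $\phi$ and (a phased version of) $\psi$, extended by the identity on the orthogonal complement. Such an operator is automatically self-adjoint and an involution, hence unitary.

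First I would handle the choice of phase $\alpha$. The Householder reflection through the hyperplane orthogonal to a vector $e$ swaps two vectors of equal norm if and only if their difference is parallel to $e$ and their sum is orthogonal to $e$. So I need to choose $\alpha$ with $|\alpha|=1$ such that, setting $\chi := \alpha\psi$, the vectors $\phi+\chi$ and $\phi-\chi$ are orthogonal. A direct expansion gives
\begin{equation*}
\langle \phi+\chi,\phi-\chi\rangle = \|\phi\|^2 - \|\chi\|^2 + \langle\chi,\phi\rangle - \langle\phi,\chi\rangle = 2i\,\mathrm{Im}\,\langle\chi,\phi\rangle,
\end{equation*}
since $\|\phi\|=\|\chi\|=1$. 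Thus I only need $\langle\chi,\phi\rangle = \bar\alpha\langle\psi,\phi\rangle$ to be real. If $\langle\psi,\phi\rangle=0$ take $\alpha=1$; otherwise write $\langle\psi,\phi\rangle = re^{i\theta}$ with $r>0$ and choose $\alpha = e^{i\theta}$, making $\bar\alpha\langle\psi,\phi\rangle = r > 0$.

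Next I would split into cases. If $\phi = \chi$ take $U=I$, which trivially satisfies $U\phi = \alpha\psi$. Otherwise $\phi-\chi\neq 0$, and (by our choice of $\alpha$, which forces $\bar\alpha\langle\psi,\phi\rangle\geq 0$) one also checks $\phi+\chi\neq 0$, since $\phi=-\chi$ would give $\bar\alpha\langle\psi,\phi\rangle = -1<0$. Define the Householder operator
\begin{equation*}
U := I - 2\,\frac{|\phi-\chi\rangle\langle\phi-\chi|}{\|\phi-\chi\|^2}.
\end{equation*}
Then $U=U^*$ is evident, and $U^2=I$ follows from a direct computation, so $U=U^{-1}\in\gB(\sH)$ is unitary.

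Finally I would verify $U\phi = \chi = \alpha\psi$. Using $\langle\phi-\chi,\phi\rangle = 1-\langle\chi,\phi\rangle$ and the analogous identity $\|\phi-\chi\|^2 = 2 - 2\,\mathrm{Re}\,\langle\chi,\phi\rangle = 2(1-\langle\chi,\phi\rangle)$ (which uses the reality of $\langle\chi,\phi\rangle$ established above), the coefficient $2\langle\phi-\chi,\phi\rangle/\|\phi-\chi\|^2$ equals $1$, yielding $U\phi = \phi - (\phi-\chi) = \chi$, as required. No single step is really the main obstacle; the only subtlety is choosing the phase $\alpha$ so that the two relevant sums/differences become genuinely orthogonal, reducing an a priori complex-geometric problem to a real reflection.
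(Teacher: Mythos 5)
Your proof is correct. The phase normalization is handled properly: choosing $\alpha$ so that $\bar\alpha\langle\psi,\phi\rangle\geq 0$ makes $\langle\chi,\phi\rangle$ real and nonnegative, which is exactly what makes the coefficient $2\langle\phi-\chi,\phi\rangle/\|\phi-\chi\|^2$ collapse to $1$; and the degenerate case $\langle\chi,\phi\rangle=1$ is excluded by the case split $\phi=\chi$ versus $\phi\neq\chi$ (equality in Cauchy--Schwarz), so the denominator never vanishes. The route differs from the paper's in presentation rather than in substance: the paper first reduces to the two-dimensional subspace $\sK$ spanned by $\phi$ and $\psi$, sets $U=V\oplus I$ on $\sH=\sK\oplus\sK^\perp$, and exhibits $V$ as an explicit Hermitian involutive $2\times 2$ matrix $\left[\begin{smallmatrix} c & \bar d\\ d & -c\end{smallmatrix}\right]$ after choosing an orthonormal basis of $\sK$ and normalizing the phase so that the first component of $\alpha\psi$ is nonnegative. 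Your Householder operator $I-2\,\|\phi-\chi\|^{-2}(\phi-\chi)\langle\phi-\chi|\cdot\rangle$ is the same kind of object (a self-adjoint unitary reflection fixing a codimension-one subspace), but it is written down globally and coordinate-free, so you avoid both the orthogonal decomposition of $\sH$ and the choice of basis; the price is that the phase condition must be identified abstractly (reality of $\langle\chi,\phi\rangle$) rather than read off from a matrix entry. Either version is perfectly adequate; yours is marginally slicker and generalizes verbatim to mapping any unit vector to any other in an inner product space.
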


\begin{proof}
If $\phi$ and $\psi$ are linearly dependent, choosing $\alpha$ such that 
$\phi=\alpha \psi$, we can define $U:= I$. In the other case, let $\sK$ be the 
closed subspace spanned by $\phi$ and $\psi$. It is enough to find 
$V: \sK \to \sK$ with $V= V^*=V^{-1}$ and $V\phi = \alpha \psi$ for some $\alpha$
with $|\alpha|=1$. If such a $V$ exists, the wanted $U$ can be defined ad 
$U:= V \oplus I$ referring to the orthogonal decomposition $\sH = \sK \oplus \sK^\perp$.  Fixing an orthonormal basis in $\sK$ given by $\phi, \phi_1$, the problem can be tackled in $\bC^2$. With the Hilbert-space isomorphism 
from $\sK$ to $\bC^2$, $\phi$ corresponds to $(1,0)^t$ and $\psi$ with $(a,b)^t$
where $|a|^2+ |b|^2=1$. We can choose $\alpha$ such that 
$\alpha \psi$ corresponds to  $(c,d)$ with $c>0$, $d\in \bC$  and 
$c^2 + |d|^2=1$. To conclude, we only  need to find a complex $2\times 2$ matrix $M$
with $M= \overline{M}^t = M^{-1}$ and $M(1,0)^t = (c, d)^t$. The operator $V$ corresponds to $M$ through the identification of $\sK$ and $\bC^2$ we have previously introduced. The wanted $M$ is  just the following one:
$$M:= \left[
\begin{array}{cc}
c & \overline{d} \\ d & -c
\end{array}\right]\:.$$

\end{proof}
\section*{Acknowledgements}
The authors are grateful to S. Chiossi, G. Marmo and E. Pagani for useful discussions.

\end{document}